\newtheoremstyle{thm-compact}
{1ex} 
{1ex} 
{\itshape} 
{} 
{\bfseries} 
{.} 
{ } 
{} 
\theoremstyle{thm-compact}
\newtheorem{definition}{Definition}[section]
\newtheorem{lemma}[definition]{Lemma}
\newtheorem{theorem}[definition]{Theorem}
\newtheorem{claim}{Claim}[definition] 
\newcommand{\backoff}[1]{#1-\textsc{backoff}}
\newcommand{\interference}[1]{#1-\textsc{interference}}
\begin{document}


\title{
	{\bf\centering Robust and Optimal Contention Resolution\\ without Collision Detection}
}
\author{
	Yonggang Jiang \\
	\small Max Planck Institute for Informatics,\\
	\small Saarland Informatics Campus \\
	\footnotesize\texttt{yjiang@mpi-inf.mpg.de}
	\and
	Chaodong Zheng \\
	\small State Key Laboratory for Novel Software Technology,\\
	\small Nanjing University \\
	\footnotesize\texttt{chaodong@nju.edu.cn}
}
\date{
}
\maketitle
\thispagestyle{empty}

\begin{abstract}
We consider the classical contention resolution problem where nodes arrive over time, each with a message to send. In each synchronous slot, each node can send or remain idle. If in a slot one node sends alone, it succeeds; otherwise, if multiple nodes send simultaneously, messages collide and none succeeds. Nodes can differentiate collision and silence only if collision detection is available. Ideally, a contention resolution algorithm should satisfy three criteria: low time complexity (or high throughput); low energy complexity, meaning each node does not make too many broadcast attempts; strong robustness, meaning the algorithm can maintain good performance even if slots can be jammed.

Previous work has shown, with collision detection, there are ``perfect'' contention resolution algorithms satisfying all three criteria. On the other hand, without collision detection, it was not until 2020 that an algorithm was discovered which can achieve optimal time complexity and low energy cost, assuming there is no jamming. More recently, the trade-off between throughput and robustness was studied. However, an intriguing and important question remains unknown: without collision detection, are there robust algorithms achieving both low total time complexity and low per-node energy cost?

In this paper, we answer the above question affirmatively. Specifically, we develop a new randomized algorithm for robust contention resolution without collision detection. Lower bounds show that it has both optimal time and energy complexity. If all nodes start execution simultaneously, we design another algorithm that is even faster, with similar energy complexity as the first algorithm. The separation on time complexity suggests for robust contention resolution without collision detection, ``batch'' instances (nodes start simultaneously) are inherently easier than ``scattered'' ones (nodes arrive over time).
\end{abstract}

\clearpage
\pagestyle{plain}
\setcounter{page}{1}


\section{Introduction}\label{sec-intro}

In computer systems, there are many scenarios in which a collection of players contend to access a shared resource. For example, a set of processes try to access a shared file on a hard drive or a table in a database, a set of radio transceivers each with a packet try to send these packets over a wireless channel, a set of users each with a document try to print them out using a printer, etc. In these settings, usually the goal is to let each player successfully access the shared resource (at least) once. However, the challenge lies in the requirement that successful accesses must be \emph{mutual exclusive}: if two or more players make access attempts simultaneously, a collision occurs and all these attempts fail.

In distributed and parallel computing, the above problem is known as \emph{contention resolution}. In studying this problem, often the shared resource is modeled as a synchronous multiple-access communication channel, and each player is modeled as a node with a message that needs to be sent over this channel. (See, e.g., \cite{bender18,chang19,bender20,chen21}.) More specifically, the system proceeds in synchronous time slots. Each player joins the system at the beginning of some slot, but players do not have access to any global clock. In each slot, each player may try to send its message by broadcasting it on the shared communication channel, or remain idle. If in a slot only one player $u$ broadcasts, it succeeds in that slot and all players are informed of this success. Moreover, by the end of that slot, $u$ will halt and exit the system. On the other hand, if in a slot multiple players broadcast simultaneously, then all of them fail as messages collide with each other. In such case, the exact channel feedback depends on the availability of a \emph{collision detection} mechanism. Specifically, when collision detection is available, for each slot not containing a success, the channel will inform all players whether the slot is silent (i.e., no node broadcasts) or noisy (i.e., multiple nodes broadcast and a collision occurs). By contrast, without collision detection, the channel feedback for each slot is binary: either the slot contains a success, or not.

\bigskip\noindent\textbf{Performance metrics.} To evaluate the performance of contention resolution algorithms, often the following three metrics are considered:
\begin{itemize}[topsep=2pt,partopsep=0pt,itemsep=0pt]
	\item \textsc{Throughput (Time complexity)}. Achieving a high throughput is usually the primary objective of any contention resolution algorithm. Although the exact definition of throughput depends on the specific model being considered, intuitively it captures the number of messages the algorithm can successfully transmit within a given period of time, representing how efficient can the algorithm process incoming requests. In synchronous systems, since each slot contains at most one successful transmission, constant throughput is the best result any contention resolution algorithm can hope for; that is, generating $n$ successes within $\Theta(n)$ slots.

	\item \textsc{Number of access attempts (Energy complexity)}. Due to the existence of collisions, it is likely that each player has to make multiple access attempts before succeeding. However, in many scenarios, there is a certain cost associated with each attempt. For example, systems calls must be invoked when processes are trying to access shared files, and system calls are ``expensive'' operations for operating systems~\cite{tanenbaum14}. Another prominent example is radio networks. Modern lightweight wireless devices are often battery powered, and the energy cost of emitting or receiving radio signals might be large when compared with computation~\cite{polastre05}. Therefore, contention resolution algorithms also try to minimize participating players' number of access attempts. In studying contention resolution, especially in the the context of radio networks, researches often call the maximum number of access attempts any node may make as the \emph{energy complexity} of the algorithm. (See, e.g., \cite{berenbrink09,chang18,bender18}.) In this paper, we also adopt this terminology.

	\item \textsc{Robustness}. Collision is not the only possible reason a transmission attempt fails. Participating nodes, or sometimes even the communication channel itself (e.g., when the channel models a shared printer), could suffer hardware or software errors. Even without internal failures, external interference might exist. For example, a wireless link may be affected by electromagnetic noise, a server may be the victim of denial-of-service attacks, etc. In studying contention resolution, these interference are often modeled by \emph{jamming}. (See, e.g., \cite{bender18,chang19,chen21}.) Formally, if a slot is jammed, a collision occurs in that slot, regardless of the actual number of broadcasting nodes. Robust contention resolution algorithms are preferable, sometimes even necessary. However, as we detail below, jamming could affect the complexity of the problem.
\end{itemize}

\bigskip\noindent\textbf{Existing results.} Given its importance and wide applicability, it is not surprising that contention resolution is extensively studied. Nonetheless, many important breakthroughs are only made recently. In particular, it was not until 2018 that a ``perfect'' contention resolution algorithm was proposed which simultaneously achieves high throughput and low energy complexity, even when adversarial jamming is present~\cite{bender18}. In particular, assuming $n$ nodes are injected over time by an adversary, and $d$ slots could be jammed by the adversary, Bender, Fineman, Gilbert, and Young proposed an algorithm called \textsc{Re-Backoff} guaranteeing: (a) constant throughput in expectation; and (b) each node makes $O(\log^2(n+d))$ access attempts in expectation. In 2019, Chang, Jin, and Pettie~\cite{chang19} proposed another algorithm that can achieve similar guarantees, by using a multiplicative weight update scheme. Compared with \cite{bender18}, this new algorithm is simpler and achieves higher throughput (the asymptotic throughput of these two algorithms are identical).

Both of the above two algorithms rely on the availability of collision detection. Generally speaking, if a node observers a noisy slot or many collisions within an interval, then the node can infer that the channel is congested and it should ``backoff'' by not sending its message. By contrast, if a node observers a silent slot or many empty slots within an interval, then the node should ``backon'' and send more frequently. Fundamentally, collision detection explicitly reveals why a slot fails, thus allowing nodes to act accordingly in the following slots. As a result, without collision detection, contention resolution becomes harder. Indeed, for a very long time, it is unclear whether constant throughput is possible if collision detection is not available, even without any interference.\footnote{Nonetheless, if all nodes start simultaneously instead of being injected over time, then algorithms achieving constant throughput without collision detection were already known for a while. See, e.g., \cite{mosteiro11}.}

Two years ago, Bender, Kopelowitz, Kuszmaul, and Pettie~\cite{bender20} resolved this long standing problem by providing an algorithm that achieves constant throughput without collision detection, even when the nodes are injected adversarially. They also proved a lower bound showing constant throughput is impossible when jamming is present. This demonstrates a fundamental separation between contention resolution with and without collision detection. More recently, Chen, Jiang, and Zheng~\cite{chen21} extended \cite{bender20} by taking adversarial interference into consideration. In particular, for any level of jamming ranging from none to constant fraction, they prove an upper bound on the best possible throughput, along with an algorithm attaining that bound. Unfortunately, there is still one important piece missing in the big picture. Although the algorithms in \cite{chen21} are able to achieve optimal throughout for any given level of jamming, their energy complexity could be high. In particular, assuming the adversary jams $d$ slots, there are nodes who might have to make $\Omega(\sqrt{d})$ attempts before succeeding. By contrast, when collision detection is available, achieving similar guarantees only requires $O(\textrm{poly-log}(n+d))$ attempts. Therefore, the intriguing question is: without collision detection, are there ``perfect'' contention resolution algorithms? In other words, when collision detection is not available and jamming is present, are there contention resolution algorithms that can achieve best possible throughout, while maintaining low per-node energy cost?

\bigskip\noindent\textbf{Contribution and new results.} In this paper, we answer the above question affirmatively. We design two new randomized algorithms for robust contention resolution, when collision detection is not available. One algorithm works for the case where $n$ nodes are injected over time by an adversary, which we refer to as the \emph{dynamic case}; while the other one works for the easier scenario where all $n$ nodes start simultaneously, which we refer to as the \emph{static case}. The time complexity---i.e., the number of slots required to let all nodes succeed---of the dynamic case algorithm is $O(n\log n+d)$, which is optimal given the lower bound proved in~\cite{chen21}. By contrast, the time complexity of the static case algorithm is $O(n+d)$, which clearly is also optimal. This separation suggests, for robust contention resolution without collision detection, ``batch'' instances (i.e., the static case) are inherently easier than ``scattered'' ones (i.e., the dynamic case). As for energy complexity, both algorithms ensure the number of access attempts any node made is $O(\log^2n+\log^2d)$. This is optimal regarding robustness: we show any algorithm achieving the optimal time complexity incurs $\Omega(\log^2d)$ energy cost per-node.
Lastly, we note that another advantage of our algorithms lies in simplicity. For example, the core component of the dynamic case algorithm can be described in one sentence.

Before stating the results formally, we clarify some additional model details and assumptions. We often call the adversary Eve, and she is adaptive. Before execution starts, Eve decides a value $n$, meaning $n$ nodes will be injected into the system. These nodes do not know the value of $n$. Eve also has a jamming budget $d$, meaning she could jam up to $d$ slots. In the static case, Eve injects (i.e., activates) all nodes at the beginning of slot one; whereas in the dynamic case, she can inject nodes in an arbitrary fashion. Therefore, in the dynamic case, there might be slots in which there are no active nodes, yet some nodes are still not injected by Eve. (Recall a node halts and leaves once its message is successfully sent.) We say a slot is an \emph{active slot} if in that slot at least one node is active. The adaptivity of Eve is reflected by the assumption that, at the beginning of each slot, Eve is given the past behavior of all nodes, and she can use this information to determine her behavior in the current slot. (Specifically, whether to inject any new nodes and whether to jam this slot.) However, Eve does not know nodes' behavior in the current slot.

The following definition introduces \emph{$(f,g)$-time-cost} and \emph{$(f,g)$-energy-cost}. It formalizes the notation of throughput and energy complexity, and simplifies later presentation.

\begin{definition}[\textbf{Throughput and Energy Complexity}]\label{def:throughput-and-energy}
Let $\mathcal{A}$ be the algorithm each node runs after its activation. Let $f,g:\mathbb{N}^+\rightarrow\mathbb{R}^+$ be two functions.
\begin{itemize}[nosep]
	\item A slot is an \emph{active slot} if at least one node is active in that slot.
	\item Algorithm $\mathcal{A}$ achieves \emph{$(f,g)$-time-cost} if there exists a constant $C$ such that for any integer $n,d\ge 1$ and any adaptive adversary that injects $n$ nodes and has jamming budget $d$, the total number of active slots is at most $C\cdot\left(f(n) + g(d)\right)$, with high probability in $n+d$.\footnote{An event happens with high probability (w.h.p.) in some parameter $\lambda$ if it happens with probability at least $1-1/\lambda^{\beta}$, for some desirable constant ${\beta}\geq 1$. Here $\lambda=n+d$ (instead of just $n$) since we want the failure probability to go down as the time cost grows up (time cost depends on both $n$ and $d$).}
	\item Algorithm $\mathcal{A}$ achieves \emph{$(f,g)$-energy-cost} if there exists a constant $C$ such that for any integer $n,d\ge 1$ and any adaptive adversary that injects $n$ nodes and has jamming budget $d$, the maximum number of broadcasting attempts any node made is at most $C\cdot\left(f(n) + g(d)\right)$, with high probability in $n+d$.
\end{itemize}
\end{definition}

The following two theorems state our algorithmic results, notice the difference on the time complexity.

\begin{theorem}[\textbf{Dynamic Case Upper Bound}]\label{thm:dynamic-single-channel}
There exists an algorithm achieving $(n\log{n},d)$-time-cost and $(\log^2{n},\log^2{d})$-energy-cost.
That is, this algorithm generates $n$ successes within $O(n\log{n}+d)$ active slots, and each node makes at most $O(\log^2{n}+\log^2{d})$ access attempts, with high probability in $n+d$.
\end{theorem}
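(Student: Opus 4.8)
The plan is to exhibit an explicit algorithm and then bound its time cost and energy cost separately, with the time bound being the substantive part. The design is constrained by the lack of collision detection: the only feedback a node ever receives is whether the current slot produced a success and whether that success was its own, so it cannot tell a silent slot from a noisy one, and a block of slots with few successes is ambiguous --- it could mean the node's broadcast probability is far too low (everything silent) or far too high (everything colliding). Hence a node cannot merely ``back off''; it must also ``back on''. I would have each node, counting slots locally from its own activation (nodes have local counters but no global clock), run an adaptive doubling-and-resetting search --- a \emph{sawtooth} --- over its broadcast probability: in level $\ell = 1,2,\dots$ the node devotes a block of $\Theta(\ell\cdot 2^{\ell})$ consecutive slots, broadcasting in $\Theta(\ell)$ of them chosen uniformly at random (so the per-slot broadcast probability is $\Theta(2^{-\ell})$), and it halts the instant it hears its own message; when $\ell$ exceeds a cap that itself grows slowly with the number of failed attempts so far, the node resets to $\ell = 1$. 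The $\Theta(\ell)$ repeated broadcasts per level provide the redundancy needed for robustness --- the adversary must spend $\Theta(\ell)$ jams on a node to deny it a success at a level that matches the current contention --- and they are exactly what yields the $\Theta(\ell^2)$ energy of one full sweep, matching the target bound.

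For the time bound I would argue by charging. Jammed active slots are charged directly to the budget, contributing $O(d)$. For the remaining active slots the goal is the ``productivity'' statement: whenever the number $k$ of active nodes sits in a constant-factor band $[k,2k)$, the nodes that are currently at their matching level $\ell^\star = \Theta(\log k)$ broadcast with per-slot probability $\Theta(1/k)$, so over such a block the per-slot contention is $\Theta(1)$ and a constant fraction of slots are lone broadcasts; consequently the band is drained of a constant fraction of those nodes within $\Theta(k)$ active slots unless a constant fraction of those slots were jammed. The $\log n$ factor then enters because the adversary, by staggering injections, can keep the active nodes spread across $\Theta(\log n)$ levels so that only a $\Theta(1/\log n)$ fraction are ever at their matching level simultaneously; charging each of the at most $n$ successes a $\Theta(\log n)$ ``discount'', plus the at most $d$ jammed slots, gives the claimed $O(n\log n + d)$ bound on active slots. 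Making this rigorous against an adaptive, injecting adversary almost certainly requires a potential function combining the number of undelivered nodes with a weighted count of their current levels, together with a martingale argument applied at the right granularity.

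For the energy bound, one sweep costs $\sum_{\ell \le L}\Theta(\ell) = \Theta(L^2)$ attempts, where $L$ is the highest level that sweep reached, so it suffices to show that with high probability in $n+d$ the levels reached, summed over all of a node's sweeps, total $O(\log^2 n + \log^2 d)$. There are two pieces: first, once $2^{\ell}$ exceeds both the current contention and the jamming still aimed at this node, the node succeeds at level $\ell$ with probability $1 - 1/\mathrm{poly}(n+d)$, so no node is ever driven past level $O(\log n + \log d)$ within a sweep; second, the adversary can try to keep a single node alive sweep after sweep by jamming its matching-level broadcasts, but one shows this costs her $\Omega(\ell)$ budget per sweep while the node's cap rises only slowly, so forcing $\Omega(\log^2 d)$ energy on a node requires $\Omega(\log^2 d)$ jams --- consistent with the $(\log^2 n,\log^2 d)$ claim. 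I would also need the standard bootstrapping so that the ``with high probability in $n+d$'' guarantee is not contaminated by the inevitably weak guarantees of the early, low-contention phases --- typically by letting the failure probability of phase-level events decay with $2^{\ell}$ and union-bounding.

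The step I expect to be the main obstacle is precisely this interaction between the \emph{adaptive} adversary and \emph{individual} nodes: controlling simultaneously (i) how far up the sawtooth a node can be pushed, (ii) how many full sweeps it can be forced through, and (iii) the aggregate number of active slots, while Eve may both inject nodes and jam with complete knowledge of the past. The clean ``a constant fraction of active slots are either jammed or productive'' statement has to survive this adversary, and reconciling it with the separate per-node energy accounting (where Eve is allowed to concentrate her whole budget on one victim) is where I expect the delicate bookkeeping to lie.
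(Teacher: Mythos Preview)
Your approach is genuinely different from the paper's, and while it may be salvageable, the sketch as written has at least one concrete error and leaves the central difficulty unresolved.

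\textbf{What the paper actually does.} The paper's algorithm contains \emph{no resets at all}. Each node, upon arrival, simply runs two continuous backoffs in parallel: \backoff{$1/x$} on one channel and \backoff{$(c\log x)/x$} on a second channel. The insight is that you do not need to ``back on'' explicitly; the second backoff decays slowly enough that even after the adversary jams for $d$ slots, every surviving node still has probability $\Theta((\log d)/d)$ per slot on channel two, so $O(d)$ further un-jammed slots suffice for all of them to succeed. The analysis splits active slots by whether channel-two contention exceeds $1/2$: large-contention slots are bounded via ``congest'' and ``interference'' intervals (Lemmas~\ref{lem:interference-slots-count} and~\ref{lem:dynamic-num-large-contention-slots}); small-contention slots via ``complete intervals'' with an odd/even disjointness trick (Lemma~\ref{lem:dynamic-complete-intervals}). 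The energy bound is then immediate: a node alive for $T$ slots contributes total contention $O(\log^2 T)$, and Chernoff finishes. The two-channel algorithm is converted to one channel by a synchronization step (Lemma~\ref{lem:two-to-single}) that lets newly injected nodes agree on slot parity; this is the reason interference nodes appear in the analysis.

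\textbf{Where your proposal diverges and where it breaks.} Your sawtooth restarts from $\ell=1$ repeatedly. For the energy bound you need that a node alive for $T$ slots makes only $O(\log^2 T)$ attempts. One sweep to cap $L$ costs $\Theta(L^2)$ attempts and lasts $\Theta(L\cdot 2^L)$ slots; if caps grow as $L_i=i$ (which ``grows slowly'' suggests), then after $k$ sweeps the elapsed time is $\Theta(k\cdot 2^k)$ and the energy is $\Theta(k^3)$, giving $\Theta(\log^3 T)$ energy --- too much. You need the caps to grow \emph{geometrically} (e.g.\ $L_i=2^i$) so that the last sweep dominates both time and energy; that is the opposite of ``slowly.'' Second, your charging argument for time (``the adversary can spread nodes across $\Theta(\log n)$ levels so only a $1/\log n$ fraction are ever matched; charge each success $\Theta(\log n)$'') is a heuristic, not a proof, and you say as much. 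The paper gets around precisely this difficulty by never letting a node's probability go back up: the slow $(c\log x)/x$ backoff supplies the needed lower bound on sending probability without the bookkeeping nightmare of tracking which nodes are in which phase of which sweep against an adaptive injector. Third, your energy argument that jamming one sweep costs the adversary $\Omega(\ell)$ is beside the point: the adversary need not jam at all to keep a victim alive --- she can inject other nodes whose collisions do the work for free --- so the energy bound must come from the time bound, not from a per-sweep jamming-cost argument.

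In short: the paper's route avoids exactly the obstacle you flag as ``the main obstacle'' by choosing an algorithm that makes it disappear. If you want to push the sawtooth through, fix the cap growth to geometric and replace the charging sketch with something like the paper's interval decomposition; but the dual-backoff algorithm is both simpler to state and simpler to analyze.
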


\begin{theorem}[\textbf{Static Case Upper Bound}]\label{thm:static-single-channel}
Suppose all nodes start execution simultaneously, then there exists an algorithm achieving $(n,d)$-time-cost and $(\log^2{n},\log^2{d})$-energy-cost.
That is, this algorithm generates $n$ successes within $O(n+d)$ active slots, and each node makes at most $O(\log^2{n}+\log^2{d})$ access attempts, with high probability in $n+d$.
\end{theorem}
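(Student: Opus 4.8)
\textbf{Proof proposal for Theorem~\ref{thm:static-single-channel}.}

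The plan is to reuse the dynamic-case machinery of Theorem~\ref{thm:dynamic-single-channel} as a black box, but prepended with a cheap ``batching'' phase that exploits the fact that all $n$ nodes wake up in slot one. The key observation is that the extra $\log n$ factor in the dynamic bound comes from the algorithm having to cope with nodes that trickle in at unknown times, so it cannot commit to a single shared estimate of the contention; in the static case, by contrast, every node sees the same channel history from slot one onward, so after a short shared preamble all surviving nodes can agree (w.h.p.) on a common estimate $\tilde n$ of the current number of active nodes up to a constant factor. First I would have every node run, for $\Theta(\log^2 n + \log^2 d)$ slots, a doubling-then-binary-search estimation routine: in phase $i$ each node broadcasts with probability $2^{-i}$ for $\Theta(\log(n+d))$ slots, and all nodes use the observed success pattern (a success is more likely once $2^{-i}$ is near $1/(\text{contention})$) to identify a good guess; the shared channel feedback means all nodes extract the same guess. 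Because jamming can corrupt at most $d$ of these slots, and each phase lasts $\Theta(\log(n+d))$ slots, taking a $\Theta(\log(n+d))$-fold repetition (or majority over windows) makes the estimate robust as long as the phase is not entirely jammed, which costs only $O(d)$ additional active slots charged against the jamming budget and $O(\log^2 n + \log^2 d)$ energy.

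Once a common constant-factor estimate $\tilde n$ is available, I would run a simple \backoff{linear}-style sweep: nodes broadcast with probability $\Theta(1/\tilde n)$ for $\Theta(\tilde n)$ slots. A standard balls-in-bins / Chernoff argument shows that a constant fraction of the (currently) active nodes succeed in this sweep, unless the adversary jams a constant fraction of its slots --- in which case those jammed slots are charged to $d$. This drives the number of active nodes down by a constant factor, so $O(\log n)$ such sweeps, each preceded by a fresh (cheap) re-estimation, succeed in eliminating all nodes; the geometric decay makes the total number of non-jammed active slots $\sum_i \Theta(n/2^i) = O(n)$, and adding back the at most $d$ jammed slots gives $(n,d)$-time-cost. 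The energy cost per node is $O(\log(n+d))$ per sweep times $O(\log(n+d))$ sweeps, i.e. $O(\log^2 n + \log^2 d)$, matching the claim. A subtlety I would need to handle is that after several sweeps the number of active nodes can be very small (say $O(1)$), where concentration fails; the standard fix is to switch, once $\tilde n$ drops below $\Theta(\log(n+d))$, to a fixed $\Theta(\log(n+d))$-slot ``finishing'' schedule (e.g.\ each remaining node picks a uniformly random slot, or runs a known collision-free schedule on its identifier-free randomness) which clears the last batch w.h.p.\ at negligible extra cost.

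The main obstacle I expect is maintaining a \emph{consistent} shared estimate across nodes in the presence of an adaptive jammer: the adversary can try to jam precisely the slots that would reveal a success at the ``correct'' broadcast probability, thereby fooling all nodes into an overestimate (wasting time) or, worse, an underestimate (causing a long sequence of all-collision slots). To control this I would argue that forcing an underestimate by a factor of $2^k$ requires the adversary to jam $\Omega(2^k \cdot \text{something})$ or at least a constant fraction of a window of length $\Theta(\log(n+d))$ at each of $k$ scales, and account for all such jammed slots against $d$; symmetrically, an overestimate only wastes a bounded number of slots before the next re-estimation corrects it, since an overestimate makes the channel \emph{silent}, not noisy, and silence is (in the static case, unlike with collision detection being needed for the dynamic interleavings) indirectly detectable because a long run with no successes while the true contention is moderate is itself statistically implausible and bounded in number. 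Formalizing this ``every unit of adversarial damage to the estimate is paid for out of the jamming budget $d$'' is where the real work lies; once it is in place, the time and energy bookkeeping is the routine geometric-sum argument sketched above, and robustness to the $d$ jammed slots follows by the same charging.
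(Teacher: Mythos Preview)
Your estimate-then-sweep architecture is genuinely different from the paper's, but the charging step you yourself flag as ``where the real work lies'' does not go through as stated, and the failure is not a detail---it is exactly the obstruction that forces the paper's more delicate design.

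Here is a concrete attack. In your estimation routine each phase $i$ lasts $\Theta(\log(n{+}d))$ slots at per-node probability $2^{-i}$, and the ``right'' phase is $i^\star\approx\log_2 n$. Let Eve jam every slot of phases $i^\star,\ldots,i^\star+k-1$ with $k=\Theta(\log\log(n{+}d))$, spending only $d=\Theta\bigl(\log(n{+}d)\log\log(n{+}d)\bigr)$. For $i>i^\star+k$ the contention is below $1/\log(n{+}d)$, so even the \emph{unjammed} phases produce a success only with probability $o(1)$ each; the first observed success therefore lands at $i\ge i^\star+\log\log(n{+}d)$ and the shared estimate is $\tilde n=\Omega(n\log n)$. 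Your sweep then runs for $\Theta(\tilde n)=\Omega(n\log n)$ active slots while $n+d=\Theta(n)$: a polylogarithmic jamming investment buys a $\Theta(\log n)$ blow-up in time, so ``every unit of damage is paid for out of $d$'' is off by a factor of roughly $n/\mathrm{polylog}(n)$. Your proposed remedy---abort because ``an overestimate makes the channel silent and silence is indirectly detectable''---is precisely what the model rules out: without collision detection a no-success slot may be silence (overestimate), collision (underestimate), or jamming, and a statistical test on the success \emph{rate} cannot distinguish ``overestimate'' from ``correct estimate under jamming,'' so Eve can force false aborts of correct sweeps just as easily as you can abort bad ones. (A secondary issue is that your algorithm uses $\Theta(\log(n{+}d))$ as a parameter although neither $n$ nor $d$ is known.)

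The paper sidesteps the freeze-then-sweep trap by never committing to an estimate. Every node keeps a counter $\ell$ that increments each slot and \emph{halves} on each control-channel success, sending with probability $1/\ell$ on a data channel and $(c\log\ell)/\ell$ on a control channel. Any overshoot of $\ell$ is self-correcting: once jamming pauses and $\ell\gtrsim n$, the control channel has $O(1)$ contention and produces a success (hence a halving) within $O(\ell)$ slots. The paper then does \emph{not} try to make this loop alone hit $O(n+d)$; instead it proves that if jamming is heavy enough to push $\ell$ far, then necessarily $d=\Omega(n\log n)$, and a transition criterion $m\le t/\log t$ (where $m$ is the running minimum of $\ell$ after halvings) hands off to a plain \backoff{$(c\log x)/x$} phase whose $O(n'\log n'+d)$ cost is now dominated by $d$. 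The time bound comes from the transition condition tying $n'$ to $t/\log t$, not from any estimate of $n$; that is the idea your proposal is missing.
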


The following two theorems state our lower bound results.

\begin{theorem}[\textbf{Dynamic Case Lower Bound}]\label{thm:lower-bound-dynamic}
Suppose an algorithm achieves $(f_t,g_t)$-time-cost and $(f_e,g_e)$-energy-cost with $g_t(d)=d$, then $f_t(n)=\Omega(n\log n)$, $f_e(n)=\Omega(\log^2n)$, and $g_e(d)=\Omega(\log^2d)$.
\end{theorem}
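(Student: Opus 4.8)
I would prove the three conclusions separately: the first is essentially inherited from prior work, and the other two are obtained from a common adversary.

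\emph{The time bound $f_t(n)=\Omega(n\log n)$.} This follows from the throughput lower bound of Chen, Jiang, and Zheng~\cite{chen21}, which already establishes $\Omega(n\log n+d)$ as the right time complexity for robust contention resolution without collision detection. Concretely, I would instantiate their bound on an instance with $n$ injected nodes and jamming budget $d=\Theta(n)$: it gives that $\Omega(n\log n)$ active slots are needed (w.h.p.) to generate all $n$ successes. An algorithm with $(f_t,g_t)$-time-cost and $g_t(d)=d$ finishes within $O\big(f_t(n)+g_t(\Theta(n))\big)=O\big(f_t(n)+n\big)$ active slots, so $f_t(n)+n=\Omega(n\log n)$, i.e.\ $f_t(n)=\Omega(n\log n)$. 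The hypothesis $g_t(d)=d$ enters only to guarantee $g_t(\Theta(n))=O(n)=o(n\log n)$; a super-linear robustness overhead would make the conclusion vacuous.

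\emph{The energy bounds via an illusory-contention adversary.} For $g_e(d)=\Omega(\log^2 d)$, I would build, for each large $d$, an adaptive adversary Eve on an instance with $n$ nodes --- with $n$ chosen polynomially small in $d$, so that the $n$-dependent energy term $f_e(n)$ is negligible next to $\log^2 d$ --- forcing some node to make $\Omega(\log^2 d)$ broadcasts. Eve works in $\Theta(\log d)$ levels $k=1,2,4,\dots,\Theta(d)$. At level $k$ she maintains, for $\Theta(k)$ slots, the invariant that every surviving node's transcript is consistent with a benign, jamming-free world whose true contention is $\Theta(k)$; she does this by jamming every slot that would otherwise contain a success (every slot with a lone real broadcaster), spending $\Theta(k)$ budget at level $k$, for a total that telescopes to $\Theta(d)$ and is therefore feasible. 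Two facts then close the argument. First, since the algorithm has $(f_t,g_t)$-time-cost with $g_t(d)=d$, the surviving nodes' broadcast probabilities must range over all of $[\Theta(1/d),\Theta(1)]$ as the levels progress --- if they stayed below $\Theta(1/d)$, then after Eve's budget is exhausted the remaining $n$ successes would need $\omega(d)$ slots, exceeding the $O(f_t(n)+d)=O(d)$ time budget --- so a typical node genuinely passes through $\Omega(\log d)$ distinct levels. Second, at each level the algorithm must decide whether to ``move on'' to the next contention regime, and a wrong decision (moving too fast, or too slowly) is fatal for the time bound; Eve arranges her randomized jamming so that the two relevant scenarios are indistinguishable to a node except through the outcomes of its own broadcasts, each of which leaks only a constant amount of information, so deciding correctly with failure probability $d^{-\Theta(1)}$ costs $\Omega(\log d)$ broadcasts per level. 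Multiplying the $\Omega(\log d)$ levels by the $\Omega(\log d)$ broadcasts per level gives $\Omega(\log^2 d)$. The bound $f_e(n)=\Omega(\log^2 n)$ is the mirror image: take $d=\Theta(n)$, let the $n$ genuine nodes supply the contention across $\Omega(\log n)$ levels, and have Eve use her (by $g_t(d)=d$, time-cheap) jamming only to suppress would-be successes and to inject the noise that forces $\Omega(\log n)$ broadcasts per level.

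\emph{The main obstacle.} The budget telescoping and the parameter choices are routine; the crux is making the two facts above rigorous, which is a coupling/indistinguishability argument. For each level and each surviving node one must name an explicit jamming-free instance whose transcript at that node is statistically indistinguishable from the transcript produced in Eve's world, and then invoke the algorithm's correctness on that instance --- this is what both forces the node through all $\Omega(\log d)$ levels and forces $\Omega(\log d)$ broadcasts within each. Two technical points need care: (a) the interplay of Eve's adaptivity with the ``with high probability'' quantifier, which I would handle Yao-style by first fixing the algorithm's random bits and reasoning about the resulting deterministic node strategies against Eve; and (b) a union bound over the $\Theta(\log d)$ levels and the $n$ nodes, so that the per-level failure events (each of probability $d^{-\Theta(1)}$) do not accumulate. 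Point (b) is precisely where the second logarithmic factor comes from --- it is what pins the per-level cost at $\Theta(\log d)$ rather than $O(1)$ --- so getting the quantifiers and the union bound exactly right is the heart of the proof.
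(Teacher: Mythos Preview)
Your treatment of $f_t(n)=\Omega(n\log n)$ matches the paper's: both simply invoke Theorem~1.3 of \cite{chen21}.

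For the energy bounds, your high-level picture of $\Theta(\log d)$ geometric ``levels'' each costing $\Omega(\log d)$ broadcasts is the right shape, but the mechanism you propose for the per-level cost has a genuine gap. You appeal to an indistinguishability/information argument---``each broadcast leaks only a constant amount of information, so deciding correctly with failure probability $d^{-\Theta(1)}$ costs $\Omega(\log d)$ broadcasts''---yet you never name the two scenarios to be distinguished, never define what ``deciding correctly'' means for an algorithm that has no built-in notion of level, and, crucially, never explain why a broadcast leaks \emph{any} information: under your adversary, Eve jams every slot that would otherwise be a success, so every broadcast returns ``failure'' deterministically and its information content is zero. The coupling you sketch (Yao over the algorithm's coins, then a jamming-free reference instance per level and per node) is a substantial undertaking, and you correctly flag it as ``the heart of the proof'' without actually carrying it out.

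The paper avoids all of this with a single self-contained lemma (Lemma~\ref{lem:lower-bound-time-energy}): a \emph{single} node that observes no success for $t$ slots must, in expectation, broadcast $\Omega(\log^2 t)$ times. The per-level mechanism is not information-theoretic but a direct counting argument against a \emph{randomized} jammer: on an interval of length $\Theta(t)$, Eve jams a uniformly random $\Theta(t)$-subset, so if the node broadcasts only $k$ times there, all $k$ are jammed with probability at least $(8C)^{-k}$; since the $(f,d)$-time-cost hypothesis with $g_t(d)=d$ forces a success in this interval w.h.p.\ in $t$, one gets $\Pr[k\le\log_{8C}(t/2)]\le 1/2$, hence $\mathbb{E}[k]=\Omega(\log t)$. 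Iterating over the geometric intervals $\bigl(t/(4C)^{i+1},\,t/(4C)^i\bigr]$ sums to $\Omega(\log^2 t)$. With this lemma in hand, $g_e(d)=\Omega(\log^2 d)$ is the one-node instance with the first $d$ slots jammed; and $f_e(n)=\Omega(\log^2 n)$ is obtained with \emph{no} jamming at all, by injecting $\sqrt{n}$ nodes in each of the first $\sqrt{n}$ slots so that contention alone blocks all successes, then applying the lemma with $t=\sqrt{n}$. No multi-node coupling, no Yao step, no union bound over levels.
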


\begin{theorem}[\textbf{Static Case Lower Bound}]\label{thm:lower-bound-static}
Suppose all nodes start execution simultaneously and an algorithm achieves $(f_t,g_t)$-time-cost and $(f_e,g_e)$-energy-cost with $f_t(n)=n$ and $g_t(d)=d$, then $f_e(n)=\Omega(\log\log{n})$ and $g_e(d)=\Omega(\log^2d)$.
\end{theorem}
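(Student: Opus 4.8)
The statement has two parts, $f_e(n)=\Omega(\log\log n)$ and $g_e(d)=\Omega(\log^2 d)$, proved by separate adversarial constructions. The second part is the main content (and is in fact identical to the $g_e$ bound in Theorem~\ref{thm:lower-bound-dynamic}); it already holds with a \emph{single} node, so the static restriction costs nothing there, and I will describe it first.

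\textbf{The bound $g_e(d)=\Omega(\log^2 d)$.} Fix $n=1$. With one node and no collision detection a jammed broadcast and a silent slot look identical on the channel, so the node receives \emph{no} information until it succeeds; hence, up to its success slot, it follows a fixed oblivious random schedule, i.e.\ the law of the (random) set $S\subseteq\mathbb N$ of slots in which it would broadcast does not depend on Eve (it simply halts the first time it hits an unjammed slot). Since the node's future is independent of its past, an adaptive Eve gains nothing, so we may take Eve's jammed set to be chosen in advance. Let $C_0$ be the constant in the $(f_t,g_t)$-time-cost guarantee, so against any budget-$d'$ Eve the node succeeds within $C_0d'$ slots w.h.p. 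For each scale $m$ with $2^m\le d/C_0$, consider the budget-$d'$ Eve with $d'=2^m$ that jams the prefix $[1,2^{m-1}]$ (cost $d'/2$) together with a random subset $J'$ of the window $W_m:=(2^{m-1},\,C_0 2^m]$ obtained by including each slot independently with probability $p=1/(2C_0-1)$ (so $|J'|\le d'/2$ w.h.p.). The node can succeed only in $W_m\setminus J'$, so correctness forces $\Pr[S\cap W_m\subseteq J']\le (d')^{-\beta}$ for every admissible $J'$; averaging over the random $J'$ and using Jensen's inequality (the map $x\mapsto p^{x}$ is convex) gives $p^{\,\mathbb E|S\cap W_m|}\le\mathbb E_{J'}\Pr_S[S\cap W_m\subseteq J']\le (d')^{-\beta}+e^{-\Omega(d')}$, hence $\mathbb E|S\cap W_m|=\Omega(\log_{1/p}2^m)=\Omega(m)$ (for all but the smallest $m$, which we simply discard). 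The windows $W_m$ for $m=m_0,m_0+\Delta,m_0+2\Delta,\dots$ with $\Delta=\lceil 1+\log_2 C_0\rceil$ are pairwise disjoint and contained in $[1,d]$, so summing yields $\mathbb E|S\cap[1,d]|\ge\sum_k\Omega(m_0+k\Delta)=\Omega(\log^2 d)$. Finally run the single adversary that jams all of $[1,d]$ (budget exactly $d$): the node achieves no success in $[1,d]$, hence broadcasts in \emph{every} slot of $S\cap[1,d]$, so its number of broadcast attempts is at least $|S\cap[1,d]|$, a random variable of expectation $\Omega(\log^2 d)$; combining this with the $(f_e,g_e)$-energy guarantee (which caps the attempts at $C(f_e(1)+g_e(d))$ w.h.p.\ in $d$, while the trivial bound $|S\cap[1,d]|\le d$ controls the low-probability tail) forces $g_e(d)=\Omega(\log^2 d)$.

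\textbf{The bound $f_e(n)=\Omega(\log\log n)$.} Take the static instance with no jamming. Before the first success every node has the same (empty) history, so all $n$ nodes run one common oblivious schedule; let $p_t$ be its marginal broadcast probability in slot $t$, and note that an energy cap $E:=C(f_e(n)+g_e(1))$ forces, up to low-probability tails, $\sum_t p_t=O(E)$. A first success occurs in slot $t$ only if exactly one node broadcasts there, so a first success within $O(n)$ slots w.h.p.\ requires $\sum_{t\le C_0 n} n\,p_t(1-p_t)^{n-1}=\Omega(1)$, and this must hold for every $n$ the adversary may choose. A slot contributes non-negligibly to this sum only when $p_t$ lies in a bounded-ratio window around $1/n$; evaluating the constraint along the doubly-exponential family $n\in\{2^{2^0},2^{2^1},\dots,2^{2^K}\}$, these windows are (in the log-exponent scale) essentially separated, so a single fixed sequence $(p_t)$ carrying only $o(\log\log n)$ total mass cannot meet all $\Theta(\log\log n)$ of the corresponding constraints; this contradiction gives $E=\Omega(\log\log n)$, i.e.\ $f_e(n)=\Omega(\log\log n)$. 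This part is an adaptation of the batched energy lower bound of~\cite{bender20}.

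\textbf{Main obstacle.} For the $\log^2 d$ bound the delicate point is that the node's oblivious schedule may be an arbitrary \emph{correlated} distribution over broadcast sets; the Jensen-plus-random-jamming step is precisely what lets the per-scale argument run without assuming independence across slots, and the remaining care is to make the per-scale failure probabilities and the expectation-to-w.h.p.\ conversion compose cleanly over the $\Theta(\log d)$ scales inside $[1,d]$. For the $\log\log n$ bound, the fiddly part is turning the informal statement ``each slot serves only $O(1)$ members of the doubly-exponential family of values of $n$'' into a rigorous accounting for schedules whose broadcast sets have bounded size but unbounded support; I expect most of the technical work there, though it is the less novel of the two bounds.
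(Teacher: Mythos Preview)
Your approach to $g_e(d)=\Omega(\log^2 d)$ is essentially the paper's: both fix $n=1$, use random jamming at each dyadic scale to force $\Omega(\log d')$ expected broadcasts per window, and sum over $\Theta(\log d)$ disjoint windows. The paper packages this as a standalone lemma (Lemma~\ref{lem:lower-bound-time-energy}: any $(f,d)$-time algorithm that sees no success for $t$ slots must send $\Omega(\log^2 t)$ times in expectation) and then applies it with Eve jamming $[1,d]$; your Jensen-based per-scale computation is a slightly different wrapping of the same idea.

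Your approach to $f_e(n)=\Omega(\log\log n)$, however, has a genuine gap. The necessary condition you extract, $\sum_{t\le C_0 n} n\,p_t(1-p_t)^{n-1}\ge\Omega(1)$, is too weak to force large total mass $\sum_t p_t$. Consider the schedule $p_{2^m}=2^{-m}$ for all $m\ge 1$ and $p_t=0$ otherwise: then $\sum_t p_t\le 1$, yet for every power-of-two $n$ the sum $\sum_m (n/2^m)(1-2^{-m})^{n-1}\approx\sum_j 2^j e^{-2^j}$ is a fixed constant exceeding $1$, so all your constraints are satisfied with $O(1)$ total energy. The flaw is the claim that a slot contributes non-negligibly only when $p_t$ lies in a bounded-ratio window around $1/n$: for $p_t\ll 1/n$ the per-slot contribution is $\approx n p_t$, and many tiny-$p_t$ slots (at different scales) can serve every $n$ simultaneously. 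The doubly-exponential separation you invoke controls only the \emph{upper} end of the useful range of $p_t$, not the lower.

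The paper's route is different and, neatly, reuses the very lemma you built for the $g_e$ bound. Having Lemma~\ref{lem:lower-bound-time-energy} in hand, the remaining task is to manufacture $t$ success-free slots in the static no-jamming setting with $t$ as large as possible relative to $n$. The paper does this by a pigeonhole argument on the first $t$ sending probabilities $x_1,\dots,x_t$: partition $[t^{-(t+1)c},1)$ into $t{+}1$ multiplicative bands $I_i=[t^{-(i+1)c},t^{-ic})$; since there are only $t$ values $x_j$, some band $I_{\hat\imath}$ is empty. Choosing $n=t^{\hat\imath c+c/2}$ nodes (no jamming), every slot $j\le t$ then has contention either $\ge t^{c/2}$ or $\le t^{-c/2}$, so by Lemma~\ref{lem:contention-to-succeess-probability} no success occurs in the first $t$ slots w.h.p. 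Now Lemma~\ref{lem:lower-bound-time-energy} yields expected energy $\Omega(\log^2 t)$ in those $t$ slots, while $n\le t^{(t+1)c}$ gives $\log\log n=O(\log t)$; hence $f_e(n)=o(\log\log n)$ would cap energy at $o(\log t)$, a contradiction. The idea you are missing is precisely this pigeonhole step, which lets the $\log^2$-energy lemma do the work for the $f_e$ bound as well.
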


The above lower bounds show that our dynamic case algorithm is optimal on both time complexity and energy complexity. On the other hand, our static case algorithm achieves optimal time complexity, as well as optimal energy complexity regarding the dependency on $d$; but it misses a bit on the energy complexity regarding the dependency on $n$. Therefore, when $d$ is small compared with $n$, in the static case, there might exist an algorithm with better energy complexity: this remains to be an open problem.

We conclude this part by noting that our lower bounds hold even for a weaker oblivious adversary.

\bigskip\noindent\textbf{Additional related work.} Perhaps the most classical algorithm to resolve contention is \emph{binary exponential backoff}. One standard implementation of it is to let each node send its message with probability $1/i$ in the $i$-th slot since the node's activation. Despite binary exponential backoff is widely used in practice (e.g., Ethernet and WiFi networks~\cite{kurose17}, concurrency control in operating systems and database management systems~\cite{tanenbaum14,ramakrishnan02}), it is long known that this scheme cannot always achieve optimal throughput~\cite{aldous1987,hastad87,bender05}. Therefore, many variants are proposed and analyzed, such as quadratic backoff~\cite{hastad87}, saw-tooth backoff, loglog-iterated backoff~\cite{bender05}. Our algorithms also utilize variants of binary exponential backoff.

As we have mentioned previously, the availability of collision detection can greatly affect the performance of contention resolution algorithms. Around 2010, a series of elegant results were published (see, e.g., \cite{awerbuch08,richa10}), demonstrating how constant throughput could be attained by using collision detection, even when jamming is present, in the context of wireless networks. Over the last few years, similar results were also obtained in the standard multiple-access communication channel model~\cite{bender18,chang19}. However, it was not until 2020 that a algorithm was developed which can achieve constant throughput without collision detection~\cite{bender20}. This paper also focuses on the more challenging scenario in which collision detection is not available.

Besides throughput, the number of channel accesses before succeeding is another key performance metric. This is especially relevant in radio networks since energy consumption of radio transceivers often dominate the total energy expenditure of wireless devices~\cite{polastre05}, and contention resolution is closely related to many fundamental communication primitives (such as gossiping, broadcasting) in radio networks~\cite{berenbrink09,chang18,chang19leader}. Thus, the number of channel accesses is also referred as energy complexity in the literature. Existing contention resolution algorithms achieving good throughput usually have energy complexity that are poly-logarithmic in the number of participating nodes, though \cite{bender16} shows in fact $O(\log(\log^*(n)))$ accesses is sufficient in expectation.\footnote{In fact, this $O(\log(\log^*(n)))$ bound counts both the number of ``send'' and ``listen'', where ``listen'' means obtaining channel feedback for one slot. In this paper, we only consider ``sending complexity'' and assume channel feedback is provided for free. This assumption is used in many works studying contention resolution. On the other hand, the assumption that ``listen'' is not free is usually made in the context of radio networks.} However, when collision detection is not available and jamming is present, to the best of our knowledge, no existing work can achieve good throughput while maintaining low energy complexity. Our paper addresses this open problem.

Although this paper and many previous work assume worst-case (i.e., adversarial) arrival pattern, another major line of research assumes the arrivals of nodes follow some statistical pattern. In those works, often the main objective is to analyze the maximum stable packet arrival rate for various contention resolution algorithms. (See, e.g., \cite{aldous1987,goodman88,goldberg00}.)

It is also worth noting, if a single success is sufficient (instead of requiring every node to succeed once), then contention resolution degrades to another classical distributed computing problem: leader election. Leader election is used implicitly in many contention resolution algorithms. A classical result by Willard~\cite{willard86} shows a tight time bound of $\Theta(\log\log{n})$ for leader election, assuming $n$ nodes start simultaneously and collision detection is available. More recent results consider more diverse settings (see, e.g., \cite{chlebus05,dolev09,chang19leader}). Interestingly, it seems our lower bounds could also apply to the leader election problem, as in deriving them we consider the time and energy required to generate the first success.

Finally, we stress that contention resolution is an extensively studied problem, only most relevant results are briefly discussed here, and many interesting results are not included (e.g., there are papers focusing on deterministic algorithms~\cite{marco19}, and papers considering messages with delivery deadlines~\cite{agrawal20}).

\bigskip\noindent\textbf{Paper outline.} In the next section, we give an overview on the design and analysis of the two new contention resolution algorithms, as well as the key ideas we exploit in proving the lower bounds. Then, in Section~\ref{sec:dynamic-alg} and Section~\ref{sec:static-alg}, we introduce the two algorithms in detail. For each of these two sections, we will first give a more through introduction on the intuition of the algorithm, then present the pseudocode, and finally proceed to the analysis. Next, in Section~\ref{sec:lower-bounds}, we prove the lower bounds on time complexity and energy complexity. We conclude this paper with Section~\ref{sec:future-work}, where we briefly discuss potential future work directions.

\section{Technical Overview}\label{sec:overview}

\textbf{Contention.} When designing efficient and robust contention resolution algorithms, the key is to maintain a proper \emph{contention} on the communication channel throughout the execution. Specifically, the contention of a slot on a channel is defined to the sum of the broadcasting probabilities of all active nodes. By definition, the contention of a slot denotes the expected total energy expenditure of nodes in that slot. On the other hand, the contention of a slot also indicates the likelihood of a slot being a successful slot. In particular, the follow lemma holds, in which $n$ denotes the number of active nodes in a slot, $p_i$ denotes the broadcast probability of node $i$, and $p$ denotes the contention of that slot. (Its proof is provided in the appendix.)

\begin{lemma}[Extension of Lemma 4 of \cite{bender20}]\label{lem:contention-to-succeess-probability}
Let $n\in\mathbb{N}^+$, and let $X_i\in\{0,1\}$ be an indicator random variable with $\Pr[X_i=1]=p_i$ for all $i\in[n]=\{1,2,\cdots,n\}$. Let $p=\sum_{i=1}^np_i$.
\begin{itemize}
	\item $\Pr\left[\left(\sum_{i=1}^{n}X_i\right)=1\right]\ge\min\left(4^{-p},\frac{p}{4}\right)\textnormal{~and~}\Pr\left[\left(\sum_{i=1}^{n}X_i\right)=0\right]\ge 4^{-p}\textnormal{, when all }p_i\in[0,1/2]$
	\item $\Pr\left[\left(\sum_{i=1}^{n}X_i\right)=1\right]\le p\cdot e^{-p+1}$
\end{itemize}
\end{lemma}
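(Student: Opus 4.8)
The plan is to reduce both bullets to the exact product expansions of $\Pr[\sum_i X_i = 0]$ and $\Pr[\sum_i X_i = 1]$, together with two elementary single-variable inequalities. By independence, $\Pr[\sum_i X_i = 0] = \prod_{i=1}^n (1-p_i)$ and $\Pr[\sum_i X_i = 1] = \sum_{i=1}^n p_i \prod_{j \ne i} (1-p_j)$, and everything will follow from manipulating these two expressions.

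For the bounds in the first bullet I would first establish that $1 - x \ge 4^{-x}$ for all $x \in [0,1/2]$: the map $x \mapsto \ln(1-x)$ is concave, the map $x \mapsto -x\ln 4$ is affine, and the two coincide at $x = 0$ and $x = 1/2$, so the concave one lies above the affine one throughout the interval. Applying this termwise gives $\Pr[\sum_i X_i = 0] = \prod_i (1-p_i) \ge \prod_i 4^{-p_i} = 4^{-p}$, which is the second claim. For the first claim, since every $p_i \le 1/2 < 1$ I may factor $\prod_j (1-p_j)$ out of the sum and write $\Pr[\sum_i X_i = 1] = \bigl(\prod_j (1-p_j)\bigr) \sum_i \tfrac{p_i}{1-p_i} \ge \bigl(\prod_j(1-p_j)\bigr)\sum_i p_i \ge 4^{-p}\,p$, using $\tfrac{p_i}{1-p_i}\ge p_i$ and the inequality just proved. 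It then remains to check $4^{-p} p \ge \min(4^{-p}, p/4)$: if $p \le 1$ then $4^{-p} \ge 1/4$, so $4^{-p}p \ge p/4$; if $p \ge 1$ then $4^{-p}p \ge 4^{-p}$. The degenerate case $p = 0$ is immediate.

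For the second bullet, which needs only $p_i \in [0,1]$, I would use $1 - x \le e^{-x}$ (valid for all real $x$) termwise in the same expansion: $\Pr[\sum_i X_i = 1] = \sum_i p_i \prod_{j\ne i}(1-p_j) \le \sum_i p_i e^{-(p-p_i)} = e^{-p}\sum_i p_i e^{p_i}$. Since $p_i \le 1$ implies $e^{p_i}\le e$, we get $\sum_i p_i e^{p_i} \le e\sum_i p_i = ep$, hence $\Pr[\sum_i X_i = 1]\le p\,e^{-p+1}$, as desired.

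There is no deep obstacle here; the only points requiring care are verifying that the termwise inequality $4^{-x}\le 1-x$ holds precisely on $[0,1/2]$ (it fails for larger $x$, which is exactly why the hypothesis $p_i \le 1/2$ is imposed), confirming that the factorization $\sum_i p_i\prod_{j\ne i}(1-p_j) = \bigl(\prod_j(1-p_j)\bigr)\sum_i \tfrac{p_i}{1-p_i}$ is legitimate — again this uses $p_i \le 1/2$, so no factor $1-p_i$ vanishes — and carrying out the short two-case argument that lets us replace $4^{-p}p$ by $\min(4^{-p}, p/4)$.
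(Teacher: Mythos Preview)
Your proof is correct and follows essentially the same approach as the paper's: both use $1-x\ge 4^{-x}$ on $[0,1/2]$ and $1-x\le e^{-x}$ termwise in the product expansions, arrive at the intermediate bound $p\cdot 4^{-p}$, and finish with the same two-case split on $p\lessgtr 1$. The only cosmetic difference is that the paper bounds $\prod_{j\ne i}(1-p_j)\ge\prod_j(1-p_j)$ directly, whereas you factor out the full product and use $p_i/(1-p_i)\ge p_i$; both yield the same inequality, and your concavity justification of $1-x\ge 4^{-x}$ is a nice addition that the paper leaves implicit.
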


Some important implications of the above lemma are: (a) if the contention of a slot is some constant, then the probability that this slot generates a success is some constant; (b) if the contention of a slot is sufficiently large in $\Omega(\log{\lambda})$ where $\lambda>1$, then with high probability in $\lambda$ this slot will not generate a success; and (c) if the contention of a slot is sufficiently small in $O(\log\lambda)$ where $\lambda>1$, then the success probability in this slot is at least $1/\lambda^\beta$ for some constant $0<\beta<1$. By combining (b) and (c), we know that $\Theta(\log\lambda)$ is the ``right'' contention if we want to generate a success in $\Theta(\lambda)$ slots.

\bigskip\noindent\textbf{Exponential backoff.} Recall a standard way to implement binary exponential backoff is to let each node broadcast with probability $1/i$ in the $i$-th slot since its arrival. Consider the scenario in which $n$ nodes start running binary exponential backoff simultaneously, observe how the contention evolves. In the first $n$ slot, the contention is $\Omega(1)$ and limited successes will occur; particularly, at least a constant fraction of all $n$ nodes will remain active by the end of slot $n$. Then, from slot $n+1$ to, say, slot $10n$, the contention will remain to be a constant. By Lemma \ref{lem:contention-to-succeess-probability}, we know in expectation a success will occur every some constant slots. Thus, by the end of slot $10n$, at most some constant fraction of all $n$ nodes will remain active. Lastly, from slot $10n+1$, the contention will continue to drop, and eventually reach $o(1)$, which is too small for successful transmissions to occur frequently. In short, the first $\Theta(n)$ slots of binary exponential backoff are efficient in that $\Theta(n)$ successes are likely to occur, and each node's energy cost is $O(\log{n})$. Nonetheless, beyond this interval, throughput will drop. In this paper, we will use both standard binary exponential backoff and its variants. To simplify presentation, we define the following generalized backoff pattern.

\begin{definition}(\backoff{$h$})\label{def:backoff}
Let $h:\mathbb{N}^+\rightarrow\mathbb{R}^+$ be a function. We say a node runs \emph{\backoff{$h$}} from slot $s$, if for any $x\in\mathbb{N}^+$, the node sends its message with probability $\min\{1,h(x)\}$ in slot $s+x-1$.
\end{definition}

For example, \backoff{$1/x$} is the standard binary exponential backoff.

\bigskip\noindent\textbf{The two-channel model.} When designing and analyzing our algorithms, it is often more convenient to assume there are \emph{two} channels. Here, we introduce the two-channel model, and defer the discussion on how to covert a two-channel algorithm to a single-channel algorithm to later parts of the paper.

Simply put, the two channels act like two independent multiple-access communication channels: in each slot, for each of the two channels, each active node and the adversary can independently decide its behavior on this channel: broadcast its message, jam the channel, or remain idle. In each slot, both channels give feedback to the active nodes. Time complexity naturally extends to the two-channel model. As for energy complexity, if in a slot a node sends on both channels (respectively, Eve jams both channels), then the energy expenditure of this node (respectively, the jamming budget Eve spends) is two.

\bigskip\noindent\textbf{Overview of algorithm for the dynamic scenario.} Recall we have argued above that the first $\Theta(n)$ slots of binary exponential backoff are efficient. In fact, in previous works that achieve constant throughput and do not rely on collision detection, a core idea is to ``repeat the efficient part of the exponential backoff process''~\cite{bender20,chen21}. However, when jamming is present, the energy complexity of this scheme grows quickly. Specifically, for each repetition of exponential backoff, Eve can jam all the first $\Theta(n)$ slots and then allow a success to occur so that the next repetition starts. If this process is repeated $n/2$ times, then the energy consumption of Eve is $\Theta(n^2)$, and the energy consumption of each of the remaining $n/2$ nodes is $\Omega(n\log{n})$. That is, if Eve can jam $d=\Theta(n^2)$ slots, then the energy consumption of a node might reach $\Omega(\sqrt{d}\log{n})$. To overcome this difficulty, we let each node simply run one instance of backoff from the slot it joins the system to the slot it succeeds.

The above modification enforces good energy complexity: assuming all nodes start simultaneously and the adversary jams the first $d$ slots, the cost of each node in the first $d$ slots would be only $\Theta(\log{d})$. However, this simple tweak results in sub-optimal time complexity: if Eve jams the channel sufficiently long and then stop, it would take remaining nodes too long too send out their messages.

This is where the second channel comes into play. For each node, upon arrival, it will run a backoff procedure with more ``aggressive'' sending probabilities on channel two. In particular, \backoff{$(c\log{x})/x$} is a good choice, where $c$ is a large constant. Together with channel one, these two backoff procedures guarantee good time complexity regardless of the jamming pattern. In particular, Eve can jam sufficiently long to let the contention of both channels become $o(1)$, but in such case channel two can still generate successes sufficiently often, at least in an amortized sense, giving $O(n\log{n}+d)$ total time complexity.

The above scheme works even if nodes are injected over time by the adversary (though the analysis becomes harder), so the only remaining issue is to let it work in the single-channel model. Notice that if nodes can access (global) slot indices, then a simple solution exists: let all odd slots correspond to the first channel, and all even slots correspond to the second channel. Unfortunately, we do not assume nodes have access to such information. Instead, we use a ``synchronization procedure'' inspired by \cite{bender20}. For each node $u$, upon arrival, it will first run the synchronization procedure; when the synchronization procedure ends, all nodes that are in the system reach agreement on the parity of slots, so $u$ can start running the two-channel algorithm (and other existing nodes resume running the two-channel algorithm).

\bigskip\noindent\textbf{Overview of algorithm for the static scenario.} In this scenario, we gain the advantage that all nodes start simultaneously. This allows us to easily convert any two-channel algorithm into a corresponding single-channel algorithm, since all nodes agree on slot indices. Nonetheless, in this setting, we are also targeting a better time complexity: $O(n+d)$ slots. Therefore, the above two-channel algorithm is inadequate.

Recall that when collision detection is not available and jamming is not present, to achieve constant throughput, previous works rely on repeating the ``efficient part'' of the binary exponential backoff procedure. Specifically, imagine there are $n$ nodes running \backoff{$1/x$} on one channel---called the data channel, and these nodes also run \backoff{$(c\log{x})/x$} on another channel---called the control channel. Then it can be shown, the first success on the control channel will happen in slot $\Theta(n)$. By then, remaining nodes will restart \backoff{$1/x$} on the data channel, and restart \backoff{$(c\log{x})/x$} on the control channel.

However, restarting two backoff procedures from scratch is not necessary. Indeed, if $n$ nodes run \backoff{$1/x$} and \backoff{$(c\log{x})/x$}, then in the first $n$ slots, the contention on the two channels are at least $1$ and $c\log{n}$ respectively, implying not many successes will occur in these $n$ slots. Hence, in our static algorithm, each node maintains a variable $\ell$ to control its sending probability: in each slot, each node sends on the data channel with probability $1/\ell$ and sends on the control channel with probability $(c\log{\ell})/\ell$. Initially $\ell=1$, then in each of the following slots, if the control channel does not generate a success, the node increases $\ell$ by one; otherwise, if the control channel does generate a success, the node halves the value of $\ell$. Compared with the scheme discussed in the last paragraph (which is setting $\ell=1$ after each control channel success), this new scheme only doubles the sending probability upon seeing a control channel success, efficiently maintaining the contention of the data channel within a desirable interval.

Unfortunately, this doubling scheme still suffers poor energy efficiency if Eve focus on jamming slots in which the contention of the data channel is $\Theta(1)$. Nevertheless, a critical observation is, this doubling scheme only suffers poor energy efficiency if Eve jams at least $\Omega(n\log{n})$ slots, which implies $d=\Omega(n\log{n})$. But now that $d=\Omega(n\log{n})$, the time complexity we are targeting---which is $O(n+d)$---is dominated by $d$. In other words, when $d=\Omega(n\log{n})$, after running the above doubling scheme for a while, the remaining nodes can switch to running another algorithm. This phase two algorithm can afford $\Theta(d)$ running time (which might be $\omega(n)$), in exchange for good energy efficiency.

So the condition for the transition from phase one to phase two is crucial. In our final algorithm, whenever a control channel success occurs, the value of $\ell$ (after halving) is recorded. Nodes use a variable $m$ to maintain the minimum of these recorded values. By the end of each slot $t$, if $m\leq t/\log{t}$, remaining nodes will switch to phase two, which is to run a simple \backoff{$(c\log{x})/x$} from scratch on the control channel. In Section~\ref{sec:static-alg}, we will discuss the effectiveness of this transition condition in detail.

\bigskip\noindent\textbf{Lower bounds.} For any robust contention resolution algorithm that does not utilize collision detection, before the first success, nodes cannot differentiate the following cases: (a) they have small sending probabilities, creating a contention too low; (b) they have large sending probabilities, creating a contention too high; or (c) the contention is right but the adversary is jamming. This dilemma suggests, if nodes want to achieve a good time complexity (even for creating the first success), they have to account for the first possibility and send sufficiently often upon arrival. Exploiting this observation allows us to prove a key technical lemma which connects the energy complexity and the time complexity of contention resolution algorithms.

Notice that the static scenario has trivial time complexity lower bound $\Omega(n+d)$, since each slot generates at most one success and Eve can jam $d$ slots continuously to block any communication. As for the dynamic scenario, the $\Omega(n\log{n}+d)$ time complexity bound is a direct corollary of Theorem 1.3 of \cite{chen21}.

Lastly, to obtain the energy complexity lower bounds, we combine the time complexity lower bounds and the lemma just described.

\bigskip\noindent\textbf{Concentration inequalities.} We conclude this section by introducing two concentration inequalities that will be used frequently in the analysis. The first one is the so-called ``convenient'' Chernoff bound, which can be found in various textbooks on randomized algorithms (such as \cite{mitzenmacher17}); the second one is Lemma 3.4 from \cite{chen21}, which in turn relies on Lemma 3 of \cite{bender20}.

\begin{lemma}[Chernoff Bound]\label{lem:chernoff-bound}
Suppose $X_1,X_2,\cdots,X_N$ are $N$ independent 0-1 random variables such that $\Pr[X_i=1]=p_i$ for all $1\leq i\leq N$. Let $X=\sum_{i=1}^{N}X_i$, then we have:
\begin{itemize}
	\item $\Pr[X\geq(1+\delta)\mathbb{E}[X]]\leq\exp\left(-\frac{\delta^2\mathbb{E}[X]}{3}\right)\textnormal{ for any }0<\delta\leq 1$
	\item $\Pr[X\leq(1-\delta)\mathbb{E}[X]]\leq\exp\left(-\frac{\delta^2\mathbb{E}[X]}{2}\right)\textnormal{ for any }0<\delta<1$
	\item $\Pr[X\geq R]\leq 2^{-R}\textnormal{ for any }R\geq 6\mathbb{E}[X]$
\end{itemize}
\end{lemma}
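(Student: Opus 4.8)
The plan is to obtain all three inequalities from a single ``master bound'' proved by the exponential moment (Bernstein--Chernoff) method, and then to extract the stated forms through elementary one-variable estimates. First I would fix $t>0$ and apply Markov's inequality to the nonnegative random variable $e^{tX}$: for any threshold $a$ we get $\Pr[X\ge a]=\Pr[e^{tX}\ge e^{ta}]\le e^{-ta}\,\mathbb{E}[e^{tX}]$. Since the $X_i$ are independent, $\mathbb{E}[e^{tX}]=\prod_i\mathbb{E}[e^{tX_i}]=\prod_i\big(1+p_i(e^{t}-1)\big)\le\prod_i\exp\!\big(p_i(e^{t}-1)\big)=\exp\!\big(\mu(e^{t}-1)\big)$, where $\mu=\mathbb{E}[X]=\sum_i p_i$ and we used $1+x\le e^{x}$. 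Hence $\Pr[X\ge a]\le\exp\!\big(\mu(e^{t}-1)-ta\big)$ for every $t>0$. Taking $a=(1+\delta)\mu$ and the optimal $t=\ln(1+\delta)$ yields the master upper-tail bound $\Pr[X\ge(1+\delta)\mu]\le\big(e^{\delta}/(1+\delta)^{1+\delta}\big)^{\mu}$; the same argument applied to $\Pr[X\le(1-\delta)\mu]$ with $t<0$ and the optimal $t=\ln(1-\delta)$ gives $\Pr[X\le(1-\delta)\mu]\le\big(e^{-\delta}/(1-\delta)^{1-\delta}\big)^{\mu}$.

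The three displayed inequalities then follow by post-processing. For the first, it suffices to check $\delta-(1+\delta)\ln(1+\delta)\le-\delta^{2}/3$ for $0<\delta\le1$; setting $\phi(\delta)=\delta-(1+\delta)\ln(1+\delta)+\delta^{2}/3$ we have $\phi(0)=0$ and $\phi'(\delta)=-\ln(1+\delta)+2\delta/3\le0$ on $[0,1]$, using the elementary inequality $\ln(1+\delta)\ge 2\delta/(2+\delta)$ (valid for all $\delta\ge0$) together with $2\delta/(2+\delta)\ge 2\delta/3$ for $\delta\le1$, so $\phi$ is nonincreasing and stays $\le0$. For the second, I would reduce to $-\delta-(1-\delta)\ln(1-\delta)\le-\delta^{2}/2$ on $(0,1)$, which is immediate from the power-series identity $(1-\delta)\ln(1-\delta)=-\delta+\sum_{k\ge2}\delta^{k}/(k(k-1))\ge-\delta+\delta^{2}/2$. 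For the third, I would instead bound $e^{t}-1\le e^{t}$ before optimizing, so that $\Pr[X\ge R]\le\exp\!\big(\mu e^{t}-tR\big)$; taking $t=\ln(R/\mu)$ gives $\Pr[X\ge R]\le(e\mu/R)^{R}$, and then $R\ge6\mu$ yields $\Pr[X\ge R]\le(e/6)^{R}\le 2^{-R}$ since $e/6<1/2$.

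I expect the only mild obstacle to be the two scalar inequalities $\delta-(1+\delta)\ln(1+\delta)\le-\delta^{2}/3$ on $(0,1]$ and $-\delta-(1-\delta)\ln(1-\delta)\le-\delta^{2}/2$ on $(0,1)$: these are exactly what pin down the constants $1/3$ and $1/2$, and they require a little care near the endpoints (in particular one must use $\delta\le1$ in the first). Everything else---Markov's inequality, the factorization over independent coordinates, $1+x\le e^{x}$, and the final numerical step $e/6<1/2$---is routine. Since this is a textbook fact, the alternative is simply to cite a standard reference such as \cite{mitzenmacher17}; the sketch above is the self-contained route.
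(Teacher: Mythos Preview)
Your proof is correct and self-contained; the moment-generating-function derivation and the three post-processing steps are the standard textbook route, and your scalar inequalities on $(0,1]$ and $(0,1)$ are handled cleanly. Note, however, that the paper does not actually prove this lemma: it is stated as a ``convenient'' Chernoff bound and simply attributed to standard references such as \cite{mitzenmacher17}. So there is no paper proof to compare against; your sketch is precisely what one would find in the cited source, and the ``alternative'' you mention at the end---just citing the reference---is what the paper in fact does.
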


\begin{lemma}[Lemma 3.4 of \cite{chen21}]\label{lem:cjz21}
Suppose $X_1,X_2,\cdots,X_N$ are $N$ (potentially dependent) random variables such that $\Pr[X_i=t~|~X_1=x_1, X_2=x_2,\cdots, X_{i-1}=x_{i-1}]\leq 1/t^{\Omega(1)}$ holds for any sufficiently large $t$, any $1\leq i\leq N$, and any values $x_1,x_2,\cdots,x_{i-1}$ of $X_1,X_2,\cdots,X_{i-1}$, then with high probability in $N$, we have $\sum_{i=1}^{N} X_i=O(N)$.
\end{lemma}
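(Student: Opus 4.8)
\emph{Proof strategy.} The plan is to first strip away the dependence among the $X_i$ by a domination argument, and then tame the polynomial (hence ``heavy'') tail by truncating at an intermediate scale. Fix a constant $c$ with $\Pr[X_i=t\mid X_1,\dots,X_{i-1}]\le t^{-c}$ for all $t\ge t_0$ (some constant) and all histories; we will use that $c$ can be taken as large as the desired high-probability exponent requires, in particular $c>3$. Summing the pointwise bound yields a \emph{uniform} conditional tail bound $\Pr[X_i\ge s\mid X_1,\dots,X_{i-1}]\le\bar F(s)$, where $\bar F(s):=\min\{1,\,C_1 s^{1-c}\}$ for a suitable constant $C_1\ge 1$.

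Because this conditional tail is dominated by the \emph{fixed} function $\bar F$ regardless of the past, a routine step-by-step coupling (process $i=1,\dots,N$, feeding each step a fresh uniform $U_i$ through the relevant quantile functions) produces i.i.d.\ variables $\tilde X_1,\dots,\tilde X_N$ with survival function $\bar F$ and $X_i\le\tilde X_i$ almost surely; hence $\Pr[\sum_i X_i\ge t]\le\Pr[\sum_i\tilde X_i\ge t]$, so it suffices to bound the i.i.d.\ sum. (Equivalently, one could keep the $X_i$ and invoke a Freedman/Bernstein martingale inequality; the reduction is preferable since then only elementary tools — Markov, plus the Chernoff bound of Lemma~\ref{lem:chernoff-bound} — are needed.) Observe $\mathbb E[\tilde X_i]=\sum_{s\ge 1}\bar F(s)=O(1)$ (using $c>2$) and $\mathbb E[\tilde X_i^2]=O(1)$ (using $c>3$).

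Now truncate at $M:=\sqrt N$. A union bound over the $N$ variables gives $\Pr[\exists i:\tilde X_i>M]\le N\bar F(M)=O(N^{(3-c)/2})$, which is below $N^{-\beta}$ once $c\ge 2\beta+3$; so with high probability $\sum_i\tilde X_i=\sum_i\tilde Y_i$ where $\tilde Y_i:=\tilde X_i\,\mathbf 1\{\tilde X_i\le M\}\in[0,M]$. The $\tilde Y_i$ are i.i.d., bounded by $M$, with $\mathbb E[\tilde Y_i]=O(1)$ and $\mathbb E[\tilde Y_i^2]=O(1)$; a Chernoff-style estimate — bound $\mathbb E[e^{\lambda\tilde Y_i}]\le 1+O(\lambda)\le e^{O(\lambda)}$ for $\lambda:=\Theta(1/M)$ via $Y^k\le Y^2M^{k-2}$ and $\mathbb E[\tilde Y_i^2]=O(1)$, then apply Markov to $e^{\lambda\sum_i\tilde Y_i}$ — gives $\sum_i\tilde Y_i\le C_2 N$ with probability at least $1-e^{-\Omega(N/M)}=1-e^{-\Omega(\sqrt N)}$, for a suitable constant $C_2$. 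Chaining the two steps, $\sum_i X_i\le\sum_i\tilde X_i=\sum_i\tilde Y_i\le C_2 N$ with high probability in $N$.

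The crux is the heavy tail: since $\Pr[X_i=t]$ decays only polynomially, $\mathbb E[e^{\lambda X_i}]=\infty$ for every $\lambda>0$, so a Chernoff bound cannot touch the raw sum, and a plain ``mean is $O(N)$, use Markov'' argument yields only constant success probability. Truncation at $M=\sqrt N$ fixes both at once: the part above $M$ vanishes with high probability by a union bound over the $N$ variables, while the truncated part is light-tailed enough to concentrate with failure probability $e^{-\Omega(\sqrt N)}$. Two finer points deserve care. First, the conditioning costs nothing precisely because the tail bound is uniform over the past, which is exactly what validates the i.i.d.\ domination (or, alternatively, the martingale Bernstein inequality). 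Second, a sufficiently large exponent $c$ is genuinely needed so the $N$-fold union bound over large values beats $N^{-\beta}$; for exponents only slightly above $2$ one replaces the single threshold $M$ by a dyadic scheme, bounding the number of $X_i$ falling in each band $[2^\ell,2^{\ell+1})$ by its mean $O(N\,2^{\ell(1-c)})$ via Chernoff and summing $\sum_\ell 2^\ell N_\ell$ as a convergent geometric series — a routine (if slightly tedious) elaboration.
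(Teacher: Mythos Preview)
The paper does not prove this lemma at all: it is quoted verbatim as Lemma~3.4 of~\cite{chen21} (which in turn rests on Lemma~3 of~\cite{bender20}), so there is no in-paper argument to compare against.

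That said, your proposal is a correct and standard route. The reduction to an i.i.d.\ majorant via the uniform conditional tail bound is exactly the right way to discharge the dependence, and the truncation-at-$\sqrt N$ trick is the natural fix for the missing exponential moment. One point worth flagging: you assume the exponent $c$ in $t^{-c}$ may be taken as large as you like, whereas the lemma as stated only promises \emph{some} positive constant. In every application within the paper the exponent is governed by the algorithmic constant $c$ (which is explicitly ``sufficiently large''), so your reading matches the intended use; but as a self-contained statement the lemma really needs $c>2$ just for $\mathbb E[X_i]<\infty$, and your clean argument needs $c\ge 2\beta+3$. Your closing remark about the dyadic refinement for $c$ just above $2$ is the right patch and would be worth spelling out if you want the lemma to stand on its own at that generality.
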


\section{The Dynamic Scenario}\label{sec:dynamic-alg}

In this section, we introduce our algorithm for the scenario where nodes are injected over time by the adversary. We will first give a more through discussion on the intuition of the algorithm in the two-channel model, then present the pseudocode, and finally proceed to the analysis. We will conclude this section by introducing a mechanism that allows the algorithm to work in the single-channel model.

Recall the high-level structure of the dynamic two-channel algorithm we introduced in Section~\ref{sec:overview}. To avoid the poor energy efficiency brought by ``repeating the efficient part of exponential backoff'', each node runs one instance of \backoff{$1/x$} continuously on the first channel upon arrival. However, a side effect of this mechanism is sub-optimal time complexity: if the adversary jams for a sufficiently long time and then stop disruption, it would take remaining nodes too long to generate enough successes, since their broadcasting probabilities are too low by then. To fix this problem, for each node, upon its arrival, we let it run one instance of \backoff{$(c\log{x})/x$} continuously on the second channel, where $c$ is a large constant.

To understand why the second channel helps reduce time complexity, consider the simple case where all $n$ nodes start simultaneously, and let us examine how the contention evolves in the above two-channel algorithm. In the first $n$ slots, the contention on both channels are $\Omega(1)$ and limited successes would occur, even if Eve does no jamming. This part is inherently inefficient but short, so overall it has no large impact on performance. Next, consider slots from $n+1$ to $\lambda_1n\log{n}$ where $\lambda_1$ is some large constant. If Eve does not jam a constant fraction of these slots on channel one, then the first channel alone would generate enough successes so that all nodes would halt by the end of slot $\lambda_1n\log{n}$ (see \cite{bender05}). In such case, the overall time complexity is $O(n\log{n}+d)$. If, on the contrary, Eve does jam at least constant fraction of slots from $n+1$ to $\lambda_1n\log{n}$ on channel one, then $d=\Omega(n\log{n})$. In such case, assume slot $s_d$ is the first slot after slot $\lambda_1n\log{n}$ that is not jammed by Eve, then we know $d=\Omega(s_d)$. Consider a node $u$ that is still active in slot $s_d$, its sending probability on the second channel is $(c\log{s_d})/s_d$. Notice that the contention on channel two in slot $s_d$ is at most $n\cdot(c\log{(\lambda_1n\log{n})})/(\lambda_1n\log{n})=O(1)$, as $s_d\geq\lambda_1n\log{n}$. Therefore, in slot $s_d$ and the $\Theta(s_d)$ slots following slot $s_d$, the probability that $u$ succeeds on channel two is at least $\Theta((\log{s_d})/s_d)$. In other words, in interval $[s_d,2s_d]$, if Eve does not jam a constant fraction of these slots on channel two, then $u$ will halt by the end of slot $2s_d=O(d)$, with high probability in $s_d=\Omega(n\log{n})$. A union bound would then imply this claim holds for every node that is still active in slot $s_d$. To sum up, informally speaking, if the two-channel algorithm is executed for a duration sufficiently large in $\Omega(n\log{n})$, and if Eve does not jam some constant fraction of these slots on at least one channel, then all nodes must have succeeded by the end of these slots. In other words, the time complexity of the two-channel algorithm is $O(n\log{n}+d)$. This would also imply the energy cost of each node is bounded by $O(\log^2{(n\log{n}+d)})$, which could also be expressed as $O(\log^2{n}+\log^2{d})$.

The above discussion intuitively illustrates the effectiveness of the two-channel algorithm, later in the analysis subsection we would show it provides similar guarantees even if the $n$ nodes are injected dynamically by the adversary. The high-level argument, which is a generalization of the above discussion, being: (a) the total number of slots in which the contention on the first channel is $\Omega(1)$ cannot be too large, as each node runs a backoff instance continuously; (b) during the time period in which the contention of the first channel is between $\Theta(1)$ and $\Theta(1/\log{n})$, if $\Theta(n\log{n})$ such slots are not jammed, all $n$ nodes would succeed; and (c) for slots where the contention of the first channel is $O(1/\log{n})$, Eve must have jammed sufficiently many slots previously to let such slots occur, thus though successful transmissions will not occur too frequently, the overall time complexity can still be bounded by an amortized analysis.

\subsection{Algorithm Description}\label{sec:dynamic-alg-description}

Assuming nodes can access two independent communication channels, the algorithm for the dynamic case is very simple: for each node, in the $i$-th slot since its activation, send with probability $1/i$ on channel one, and send with probability $(c\log{i})/i$ on channel two. Here, $c>0$ is some sufficiently large constant. In other words, the dynamic algorithm can be described using one sentence:

\addvspace{.5\baselineskip}
{\centering
\framebox[\textwidth][c]{
\parbox{.95\textwidth}{
\par\addvspace{.3\baselineskip}
\hypertarget{alg:dynamic}{\underline{\emph{Algorithm for each node in the dynamic scenario:}}}
\par\addvspace{.3\baselineskip}
From the arriving slot, run \backoff{$(1/x)$} on channel one, and \backoff{$(c\log{x})/x$} on channel two.
}}}
\par\addvspace{.5\baselineskip}

As mentioned earlier, we defer the discussion on how to convert the above two-channel algorithm to the single-channel setting to the last part of this section.

\subsection{Algorithm Analysis}\label{sec:dynamic-alg-analysis}

In this subsection, we formally analyze the performance of the dynamic algorithm in the two-channel setting. As mentioned earlier, we assume jamming on one channel for one slot costs one unit of energy, and Eve has a total energy budget of $d$. On the other hand, to facilitate the process that converts a two-channel algorithm to the single-channel setting, we extend Eve's ability on injecting nodes:

\begin{definition}[\interference{$h$}]\label{def:h-interference}
Let $h:\mathbb{N}^+\rightarrow\mathbb{R}^+$ be a function. In the two-channel model, we say the adversary has the ability of \emph{\interference{$h$}} if she can inject additional nodes called \emph{interference nodes}. For each interference node, on the first channel, the adversary can specify whether the node starts running \backoff{$h$} from the slot the node is injected or from the slot following the slot the node is injected; for each interference node, on the second channel, the node always starts running \backoff{$h$} from the slot it is injected. An interference node will halt and leave the system upon hearing a success on any channel.
\end{definition}

In analyzing the dynamic scenario two-channel algorithm, we assume the adversary has the ability of \interference{$(c\log{x})/x$}. In particular, she can inject up to $n$ interference nodes beside the $n$ normal nodes. In such setting, the definition for the throughput and the energy complexity of a two-channel algorithm, as well as the definition for active slot, need to be adjusted accordingly.

\begin{definition}[Throughput and Energy Complexity with \interference{$(c\log{x})/x$}]\label{def:throughput-and-energy-with-interference}
Let $\mathcal{A}$ be the two-channel algorithm each normal node runs after arriving. Let $f,g:\mathbb{N}^+\rightarrow\mathbb{R}^+$ be two functions.
\begin{itemize}[nosep]
	\item A slot is \emph{active} if either some interference node or some normal node is still active in that slot.
	\item Algorithm $\mathcal{A}$ achieves \emph{$(f,g)$-time-cost} if there exists a constant $C$ such that for any integer $n,d\ge 1$ and any adaptive adversary with jamming budget $d$ that injects $n$ normal nodes and up to $n$ interference nodes, the number of active slots is at most $C\cdot\left(f(n) + g(d)\right)$, with high probability in $n+d$.
	\item Algorithm $\mathcal{A}$ achieves \emph{$(f,g)$-energy-cost} if there exists a constant $C$ such that for any integer $n,d\ge 1$ and any adaptive adversary with jamming budget $d$ that injects $n$ normal nodes and up to $n$ interference nodes, the maximum number of broadcasting attempts a normal node or an interference node made is at most $C\cdot\left(f(n) + g(d)\right)$, with high probability in $n+d$.
\end{itemize}
\end{definition}

The following lemma reveals the fact that by allowing the adversary to \interference{$(c\log{x})/x$}, we can convert a two-channel algorithm to the single-channel setting, with guarantees on both time complexity and energy complexity. We defer the proof of it to the part where we discuss the conversion in detail.

\begin{lemma}\label{lem:two-to-single}
Suppose there is a two-channel algorithm achieving $(n\log{n}, d)$-time-cost and $(\log^2{n}, \log^2{d})$-energy-cost when the adversary can \interference{$(c\log{x})/x$} and inject $n$ interference nodes, then there is a single-channel algorithm achieving $(n\log{n}, d)$-time-cost and $(\log^2{n}, \log^2{d})$-energy-cost.
\end{lemma}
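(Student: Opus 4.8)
\medskip\noindent\textbf{Proof plan for Lemma~\ref{lem:two-to-single}.}
The plan is to have the single channel \emph{simulate} the two channels by time-division multiplexing: odd-indexed slots play the role of channel one, even-indexed slots the role of channel two. The only thing preventing this from being immediate is that, lacking a global clock, a newly arrived node does not know the current parity. To fix this I would adapt the synchronization gadget of~\cite{bender20}. When a node $u$ arrives it does not join the simulation right away; instead it enters a \emph{waiting phase}, during which it listens for a success and broadcasts on the single channel according to a fixed, parity-oblivious schedule: in its $j$-th waiting slot it sends with probability $h(\lceil j/2\rceil)$ with $h(x)=(c\log x)/x$. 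The point of making this schedule constant on consecutive slots is that, whatever the (unknown) parity turns out to be, $u$'s broadcasts restricted to the odd slots and those restricted to the even slots are \emph{each} exactly a \backoff{$h$} process. At the next success, every active node (all of them hear it) re-anchors the odd/even designation to the slots following the success; $u$ now knows the parity, ends its waiting phase, and begins the genuine two-channel algorithm as a node injected at that point, while every node already simulating continues.

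Granting this construction, I would then show that any execution of the resulting single-channel algorithm, against an adaptive adversary that injects $n$ nodes and has jamming budget $d$, is mirrored by an execution of the given two-channel algorithm against a legitimate adaptive adversary that (a) injects the same $n$ nodes, each at the slot its waiting phase ends; (b) jams at most $d$ times---a jammed single slot is one unit of jamming on whichever of the two channels that slot currently represents, so the budget is preserved; and (c) injects, as \interference{$h$} nodes, one interference node per normal node, active exactly during that node's waiting phase and halting at the success that ends it, precisely as Definition~\ref{def:h-interference} permits; this uses at most $n$ interference nodes. The off-by-one freedom that Definition~\ref{def:h-interference} grants on channel one is exactly what absorbs the fact that a node's channel-one and channel-two slots are offset by one relative to its re-anchoring point. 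Applying the hypothesis gives, with high probability in $n+d$, at most $O(n\log n+d)$ active two-channel slots and at most $O(\log^2 n+\log^2 d)$ broadcasts by any (normal or interference) node. Translating back: each active two-channel slot is realized by two single-channel slots, and a node's waiting phase coincides with its lifetime as an active interference node, so the number of active single-channel slots is $O(n\log n+d)$ plus an additive $O(n)$ for the re-anchoring boundaries---hence $(n\log n,d)$-time-cost. For energy, a node's broadcasts during its normal phase are bounded by the two-channel energy bound, and its broadcasts during its waiting phase form a prefix of a \backoff{$h$} run that, conditioned on the (w.h.p.) event that the whole execution lasts $O(n\log n+d)$ slots, has length $O(n\log n+d)$ and hence contributes $O(\log^2(n\log n+d))=O(\log^2 n+\log^2 d)$ broadcasts; this yields $(\log^2 n,\log^2 d)$-energy-cost.

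I expect the main obstacle to be making the synchronization gadget fully rigorous. One has to establish that parity agreement is genuinely reached---after the first success following $u$'s arrival, every active node, old and new, agrees on the odd/even designation---even when many nodes arrive in rapid succession and are simultaneously waiting; one has to handle the small irregularity created by re-anchoring (two consecutive slots can end up assigned to the same simulated channel, so the two channels advance at slightly uneven rates) and verify that the two-channel algorithm and its analysis tolerate this, which they should since inter-success gaps are w.h.p.\ short relative to the quantities the analysis tracks, or else the extra slots are chargeable to jamming; and one has to confirm the constructed two-channel adversary is legitimate, i.e.\ every decision it makes is a function of information available at the corresponding time---this holds because the odd/even re-anchoring is a deterministic function of the observed history. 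A subsidiary point is termination of the gadget under unbounded jamming: since we only ever wait for the \emph{next} success and the two-channel algorithm produces successes fast enough that the full execution is w.h.p.\ $O(n\log n+d)$ slots, conditioning on that event bounds every waiting phase and hence every waiting-phase energy bill, which is what the energy accounting above relies on.
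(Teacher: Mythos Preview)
Your proposal is correct and takes essentially the same approach as the paper: time-division multiplexing with a synchronization (waiting) phase whose per-parity broadcasts each form a \backoff{$(c\log x)/x$} process, so that waiting nodes correspond exactly to interference nodes in the two-channel model. The paper resolves your acknowledged re-anchoring obstacle with a concrete rule---every node assigns channel one to the parity of the most recent success (swapping labels and idling one slot whenever a success lands on channel two)---which keeps all active nodes in agreement after every success at a cost of at most $n$ wasted single-channel slots.
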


The reminder of this subsection focus on analyzing the runtime of the two-channel algorithm when the adversary can \interference{$(c\log{x})/x$}. Specifically, we intend to prove the following theorem.

\begin{theorem}\label{thm:dynamic-two-channel}
The two-channel algorithm achieves $(n\log{n},d)$-time-cost and $(\log^2{n}, \log^2{d})$-energy-cost, even if the adversary can \interference{$(c\log{x})/x$} and inject $n$ interference nodes.
\end{theorem}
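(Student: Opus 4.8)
The plan is to first argue that the $(\log^2 n,\log^2 d)$-energy-cost is a consequence of the $(n\log n,d)$-time-cost, so that the whole proof reduces to bounding the number of active slots. Concretely, if with high probability there are at most $T^\star:=C_0\,(n\log n+d)$ active slots, then every node (normal or interference) is active for at most $T^\star$ slots, and in its $x$-th active slot it broadcasts on channel one with probability at most $1/x$ and on channel two with probability at most $\min\{1,(c\log x)/x\}$. Hence the number of broadcasts it makes is a sum of independent Bernoulli trials with mean $\sum_{x\le T^\star}1/x+\sum_{x\le T^\star}(c\log x)/x=O(\log^2 T^\star)$. Since $T^\star$ is polynomial in $n+d$ (a crude preliminary bound, obtained along the way), $\log^2 T^\star=O(\log^2 n+\log^2 d)$, so the third bound in Lemma~\ref{lem:chernoff-bound} plus a union bound over the $O(n)$ nodes gives the claimed energy cost with high probability in $n+d$. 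From here on I only bound the number of active slots.

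\textbf{Setup and classification of slots.}
Let $N\le 2n$ be the total number of nodes, fix a large constant $\Gamma$, set $T^\star:=\Gamma\,(n\log n+d)$, and consider the prefix $\mathcal T$ of the first $T^\star$ active slots (if the execution has fewer active slots there is nothing to prove). I will show that, with high probability in $n+d$, all nodes have halted within $\mathcal T$; since choosing $\Gamma$ larger than the constants below makes this contradict $|\mathcal T|=T^\star$ unless everyone has halted, this yields the time bound. For $t\in\mathcal T$ let $P_1(t),P_2(t)$ be the contentions on the two channels; by Lemma~\ref{lem:contention-to-succeess-probability}, if the contention of a channel at a slot lies in a \emph{productive window} $[\Theta(1/\log(n+d)),\Theta(1)]$, that channel produces a success there with probability $\Omega(1/\log(n+d))$, conditioned on the history. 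I partition $\mathcal T$ into: (i) \emph{successful} slots, of which there are at most $2N=O(n)$; (ii) \emph{flooded} slots, where $P_1(t)>\kappa\log(n+d)$ or $P_2(t)>\kappa\log(n+d)$ for a constant $\kappa$; (iii) \emph{channel-one-productive} slots (not flooded), where $P_1(t)$ is in its productive window; (iv) \emph{channel-two-productive} slots, where $P_1(t)$ is below its productive window but $P_2(t)$ is in its productive window; and (v) the residual \emph{quiet} slots, where both contentions are below their productive windows.

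\textbf{Bounding each class.}
For (ii) I use a counting argument: $\sum_{t}P_1(t)$ equals the total channel-one sending mass of all nodes, which is $O(\log T^\star)$ per normal node and $O(\log^2 T^\star)$ per interference node, hence $O(n\log^2(n+d))$ in total, and likewise $\sum_t P_2(t)=O(n\log^2(n+d))$; dividing by the threshold $\kappa\log(n+d)$ gives $O(n\log(n+d))=O(n\log n+d)$ flooded slots (here one uses $n\log(n+d)=O(n\log n+d)$). For (iii) and (iv): conditioned on the history, each such slot that is not jammed is successful with probability $\Omega(1/\log(n+d))$, so the gaps between successes in these slots are stochastically dominated by geometrics with mean $O(\log(n+d))$; applying Lemma~\ref{lem:cjz21} to these (rescaled) gaps, or a direct supermartingale argument, shows that if $M$ of these slots are unjammed then with high probability $\Omega(M/\log(n+d))-O(\log(n+d))$ of them are successful, and since there are $O(n)$ successes in all, $M=O(n\log(n+d))=O(n\log n+d)$; adding the at most $d$ jammed slots in each class keeps the bound $O(n\log n+d)$. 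For (v), the point is that a quiet slot is ``the adversary's fault'': $P_1(t)$ below its window forces every active node to have age $\Omega(\log(n+d))$, and $P_2(t)$ below its window forces the active nodes to be even older; but an old node that has \emph{not} yet succeeded must have spent essentially its whole lifetime in flooded or jammed slots (otherwise the $\omega(1)$ total success probability it accrues on channel two over any constant fraction of productive, unjammed slots would have killed it with high probability). Quiet slots are therefore confined to maximal ``stretches'' separated by the at most $N$ injection events, and each such stretch following a jamming burst of length $D$ has length only $O(D/\log D)=O(D)$ (after that the residual contention, though sub-productive, still produces $\Omega(n)$ successes); summing over stretches charges all quiet slots to the jamming budget plus the unavoidable $O(n\log n)$ of a single natural backoff burst, giving $O(n\log n+d)$. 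Summing (i)–(v) bounds the number of non-terminal active slots in $\mathcal T$ by $O(n\log n+d)$, completing the argument.

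\textbf{Main obstacle.}
I expect step (v) — and more generally the interplay between the two channels under \emph{dynamic} injection and \emph{interference} nodes — to be the delicate part. The difficulty is threefold: (a) interference nodes run \backoff{$(c\log x)/x$} on channel one as well, so channel one carries $\Theta(\log^2)$ mass per such node, which is exactly why the flooding threshold must be $\kappa\log(n+d)$ rather than a constant and why one cannot simply count ``$P_1(t)=\Omega(1)$'' slots; (b) establishing the dichotomy ``channel one too weak $\Rightarrow$ channel two productive, or else the adversary has paid'' requires controlling how $P_2$ evolves while new nodes are being injected, including the corner case of slots in which all active nodes were just injected (at most $N$ such ``dead'' slots); and (c) all the ``progress'' arguments in (iii)–(v) must be run with conditional (history-dependent) success probabilities rather than independence, so the concentration must come from Lemma~\ref{lem:cjz21} or a martingale inequality, not directly from Lemma~\ref{lem:chernoff-bound}. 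Getting the bookkeeping in (v) to amortize cleanly against $d$ (and not, say, against $d$ times the number of stretches) is where I would concentrate the most care.
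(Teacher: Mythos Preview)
Your reduction of the energy bound to the time bound is fine and matches the paper exactly. The time-bound argument, however, has a genuine gap.

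\medskip
\textbf{The classification is not a partition.} Your classes (i)--(v) do not cover slots with
\[
P_1(t)\in\bigl(\Theta(1),\,\kappa\log(n+d)\bigr].
\]
Such a slot is not flooded (your threshold for (ii) is $\kappa\log(n+d)$), it is not channel-one-productive (your productive window tops out at $\Theta(1)$), and $P_1$ is not below the window, so (iv) and (v) are excluded as well. You cannot simply extend the productive window up to $\kappa\log(n+d)$: by Lemma~\ref{lem:contention-to-succeess-probability} the success probability at contention $p$ is only $\ge 4^{-p}$, which is polynomially small in $n+d$ at the top of that range. Nor can you absorb these slots into the mass argument of (ii): the total channel-one mass is $O(n\log^2(n+d))$ precisely because interference nodes run \backoff{$(c\log x)/x$} on channel one, so dividing by $\Theta(1)$ (rather than $\kappa\log(n+d)$) only gives $O(n\log^2(n+d))$ such slots, which exceeds $O(n\log n+d)$ when $d$ is small.

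\medskip
\textbf{What is missing is the interference-interval structure.} The paper's proof does not classify slots by $P_1$ at all. It first splits on whether $P_2\ge 0.5$. When $P_2<0.5$, every active node has a clean per-slot lower bound on its own channel-two success probability, and a ``complete interval'' argument (Lemma~\ref{lem:dynamic-complete-intervals}) charges each node's lifetime against busy slots (jammed or $P_2\ge 0.5$); this is a rigorous version of your sketch for (v). When $P_2\ge 0.5$, the paper further separates out (a)~slots where the \emph{normal}-node channel-one contention is $\ge 1$ (at most $O(n\log n)$ by the mass argument, since normal nodes contribute only $O(\log T^\star)$ mass each), (b)~slots containing any interference node, and (c)~the remainder. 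Case~(b) is handled by observing that interference nodes halt at the first success on either channel, so they live only inside maximal success-free ``interference intervals'' $I_1,\dots,I_k$; Lemma~\ref{lem:interference-slots-count} shows each $|I_i|$ is $O(n'_i\log n'_i+d'_i)$ with the required tail, and Lemma~\ref{lem:cjz21} sums these. This is exactly the step that bounds the slots in your gap: when interference nodes inflate $P_1$ above $\Theta(1)$ but below $\kappa\log(n+d)$, you cannot count mass, but you can count how long interference nodes are allowed to persist before a success clears them out. In case~(c) there are no interference nodes and $P_1\le 1$, while $P_2\ge 0.5$ forces $P_1=\Omega(1/\log t)$, so channel one is genuinely productive. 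Your decomposition tries to merge (a)--(c) into a single ``productive'' class and loses the argument for~(b).

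\medskip
Your instinct in the ``Main obstacle'' paragraph that interference nodes on channel one are the delicate point is exactly right; the fix is not a higher flooding threshold but the interference-interval bookkeeping.
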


To prove the above theorem, we divide active slots into two categories according to the contention on the second channel---ones that such contention reaches $0.5$ and ones that such contention is below $0.5$---and provide bounds for each of them. Throughout the analysis, for any slot, when calculating the contention on a channel, we sum up the broadcasting probabilities of both interference nodes and normal nodes.

\bigskip\noindent\textbf{Slots with large contention.} In this part, we count the number of active slots in which the contention of the second channel is at least $0.5$. To facilitate presentation, we introduce the notion of \emph{congest slots}.

\begin{definition}\label{def:dynamic-congest-slot}
We call a time slot a \emph{congest slot} if at least one of the following events happens:
\begin{itemize}[nosep]
	\item Channel one is jammed by the adversary;
	\item The contention created by normal nodes on channel one is at least $1$;
	\item There exists a (normal or interference) node on channel one that sends with probability at least $1/2$.
\end{itemize}
\end{definition}

With the above definition, we can further divide the slots we care (i.e., active slots in which the contention of the second channel is at least $0.5$) into the following three categories: (a) congest slots; (b) slots in which at least one interference node is active; and (c) remaining slots not in the above two categories.

Intuitively, due to the fact that each normal node runs \backoff{$1/x$} on the first channel upon arrival and that the jamming budget of the adversary is limited, an $O(n\log{n}+d)$ bound can be obtained on the number of slots that belong to the first category. As for the number of slots belonging to the last category, observe that in each such slot, the first channel is not jammed and no interference nodes are present; moreover, the contention of the first channel is both lower bounded (since the contention of the second channel is lower bounded and the contention of the two channels differ by a logarithmic factor) and upper bounded (due to the assumption that this slot is not a congest slot). Therefore, for category three slots, successes are likely to occur frequently, implying the number of such slots is limited.

Bounding the number of category two slots is a bit more involved. To that end, we introduce the following definition and lemma.

\begin{definition}\label{def:dynamic-interference-interval}
During an execution of the dynamic two-channel algorithm, define \emph{interference intervals} $I_1=[L_1,R_1],I_2=[L_2,R_2],...,I_k=[L_k,R_k]$ inductively as following. Define $R_0=0$. For any $i\in[k]$, $L_i$ is the first slot after $R_{i-1}$ where the adversary injects interference nodes; and $R_i$ is the first slot since $L_i$ in which there is a successful message transmission on any channel.
\end{definition}

\begin{lemma}\label{lem:interference-slots-count}
For any $i\in[k]$, denote the number of interference nodes injected during interference interval $I_i$ as $n'_i$, and denote the number of congest slots in interval $I_i$ as $d'_i$. Then for any positive integer $t>C^3(n'_i\log n'_i+d'_i)$ where $C$ is a sufficiently large constant, for any fixed $\ell_1,\ell_2,...,\ell_{i-1}$, we have $\Pr\left[~|I_i|=t~\mid~|I_1|=\ell_1,|I_2|=\ell_2,...,|I_{i-1}|=\ell_{i-1}~\right]\le1/t^{\Omega(1)}$.
\end{lemma}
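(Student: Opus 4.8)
The plan is to fix $i$, condition on the execution history through slot $L_i$, and bound $\Pr[|I_i|=t]$ for $t>C^3(n'_i\log n'_i+d'_i)$. First I would observe that $\{|I_i|=t\}\subseteq\{|I_i|\ge t\}\subseteq\{\text{no success on either channel in slots }L_i,\dots,L_i+t-2\}$, so it suffices to bound the probability of the last event. Revealing these slots in chronological order and applying the chain rule, $\Pr[|I_i|\ge t]=\mathbb{E}\bigl[\prod_{s=L_i}^{L_i+t-2}(1-\tilde Q_s)\bigr]$, where $\tilde Q_s$ is the conditional probability of a success at slot $s$ given no success before $s$. Call a slot $s$ \emph{good} if channel one is not jammed at $s$, no active node broadcasts on channel one with probability more than $1/2$, the channel-one contention of normal nodes is below $1$, and the channel-one contention of interference nodes is at most $1/2$. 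On a good slot the total channel-one contention $p_s$ is below $3/2$ and every individual channel-one probability is at most $1/2$, so Lemma~\ref{lem:contention-to-succeess-probability} gives $\tilde Q_s\ge\min\{4^{-p_s},p_s/4\}\ge p_s/12$. Hence
\[
\Pr[|I_i|\ge t]\;\le\;\mathbb{E}\Bigl[\exp\bigl(-\tfrac1{12}\textstyle\sum_{\text{good }s}p_s\bigr)\Bigr],
\]
and the task reduces to proving that, on every continuation with $|I_i|=t$, one has $\sum_{\text{good }s\in[L_i,L_i+t-2]}p_s=\Omega(\log t)$; this yields $\Pr[|I_i|=t]\le 1/t^{\Omega(1)}$.

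For the lower bound on $\sum_{\text{good }s}p_s$ I would use the interference node $u_0$ injected at slot $L_i$, which exists by the definition of $L_i$. Since no success occurs in $[L_i,R_i-1]$, node $u_0$ stays active throughout, and it runs \backoff{$(c\log x)/x$} on channel one; therefore its channel-one broadcast probabilities over the $t-1$ slots of the window sum to $\Theta(\log^2 t)$ (these probabilities are $\min\{1,(c\log a)/a\}$ as the age $a$ ranges over roughly $[1,t]$), and $p_s$ is at least $u_0$'s channel-one probability at $s$. Because $u_0$'s probabilities are decreasing in age, the mass it puts on any set of at most $\gamma t$ slots is at most the mass it puts on the $\gamma t$ youngest slots, which is $\Theta(\log^2(\gamma t))$; and since $\log^2(\gamma t)\le\log^2 t-\Omega_\gamma(\log t)$ for any fixed $\gamma<1$, the mass $u_0$ places on the complementary slots is still $\Omega(\log t)$. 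So if I can show that the number of non-good slots in the window is at most $\gamma t$ for some constant $\gamma<1$, then $\sum_{\text{good }s}p_s\ge\Omega(\log t)$ and the argument closes.

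What remains is to bound the number of non-good slots by $\gamma t$, and this is exactly where the hypothesis $t>C^3(n'_i\log n'_i+d'_i)$ (with $C$ a sufficiently large constant) enters. Non-good slots are of four kinds: channel-one-jammed slots and slots with channel-one normal-contention at least $1$ are congest slots, contributing at most $2d'_i$; slots with some individual channel-one probability exceeding $1/2$ require a normal node of age at most $2$ (a congest slot) or an interference node of age $O(1)$, contributing $O(n'_i+d'_i)$; and slots with channel-one interference-contention above $1/2$ must be controlled by analyzing the process of the at most $n'_i$ interference nodes running \backoff{$(c\log x)/x$} on channel one. For this last type I would argue, by an analysis analogous to the classical behaviour of a logarithmic backoff run by $n'_i$ nodes, that the combined interference-contention on channel one exceeds $1/2$ on only $O(n'_i\log n'_i)$ slots. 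Adding up, the non-good slots number $O(n'_i\log n'_i+d'_i)$, which is below $\gamma t$ once $C$ is large enough. I expect this last count to be the main obstacle: the crude bound, via a Markov-type argument on the total channel-one sending mass of the interference nodes, only gives $O(n'_i\log^2 t)$ such slots, which is too weak for the stated threshold, so one needs a sharper, backoff-specific argument; and one must separately check that a large population of \emph{old} normal nodes active at $L_i$ cannot spoil things --- intuitively, many old normal nodes either force a congest slot (counted in $d'_i$) or drive the channel-one contention into a range where channel one produces a success within $O(1)$ slots in expectation, so they help rather than hurt, but making this quantitative without losing extra logarithmic factors requires care.
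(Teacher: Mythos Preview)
Your framework is the same as the paper's: use the interference node $u_0$ injected at $L_i$ to lower-bound the channel-one contention, isolate a large set of ``good'' slots on which Lemma~\ref{lem:contention-to-succeess-probability} applies, and conclude that no-success over $t$ slots has probability $t^{-\Omega(1)}$. Your $\Omega(\log t)$ mass argument for $u_0$ on the complement of a $\gamma t$-sized bad set is correct and is essentially what the paper does (the paper phrases it as: each good slot has success probability at least $(c\log t)/(4t)$, and there are $(1-2/C)t$ good slots).

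The genuine gap is exactly the one you flag, and your proposed fix is the wrong direction. With your constant threshold $1/2$ for the channel-one interference contention, the Markov bound gives only $O(n'_i\log^2 t)$ bad slots, and there is no clean ``backoff-specific'' argument that tightens this to $O(n'_i\log n'_i)$ for arbitrary adversarial arrival times of the interference nodes within $I_i$. The paper does not try to tighten this count; instead it \emph{relaxes the threshold}. Declare a slot bad (from interference) only when the channel-one interference contention exceeds $(c\log t)/C$ rather than $1/2$. Then the same Markov argument---total interference mass $\le c\,n'_i\log^2 t$ divided by threshold $(c\log t)/C$---gives at most $Cn'_i\log t$ bad slots, and using $t>C^3 n'_i\log n'_i$ (so $t/\log t\gtrsim C^2 n'_i$) this is at most $t/C$. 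The cost of the looser threshold is that on good slots the total channel-one contention is now only bounded by $1+(c\log t)/C$, not by $3/2$; but for $C\gg c$ this still gives $4^{-p}\ge t^{-O(c/C)}\ge t^{-1/2}$, so Lemma~\ref{lem:contention-to-succeess-probability} yields success probability $\ge \min\{t^{-1/2},\,p/4\}\ge (c\log t)/(4t)$ on each good slot, which is all you need.

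Your side worry about many old normal nodes at $L_i$ is a non-issue: slots where their channel-one contention reaches $1$ are already congest slots and hence counted in $d'_i$, and on the remaining slots their contribution is capped at $1$, which is absorbed into the $1+(c\log t)/C$ upper bound above. No separate quantitative argument is needed.
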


\begin{proof}
Suppose $|I_i|=t$. Recall that each interference node runs \backoff{$(c\log x)/x$}, so during $I_i$ each interference node contributes at most $c\log^2t$ to the sum of the contention of the first channel. Thus, during $I_i$, the total contention created by interference nodes on the first channel is at most $cn'_i\log^2t$.

We now argue there are at most $t/C$ slots in which interference nodes contribute at least $(c\log{t})/C$ to the first channel's contention. Otherwise, during $I_i$, on the first channel, the total contention created by the interference nodes is at least:
$$\frac{t}{C}\cdot\frac{c\log t}{C}=\frac{t}{\log t}\cdot\frac{c\log^2 t}{C^2}\ge\frac{C^3n'_i}{1+3\log C}\cdot\frac{c\log^2 t}{C^2}=\frac{C}{1+3\log C}\cdot cn'_i\log^2 t$$
where the inequality is due to the assumption $t>C^3n'_i\log n'_i$ and the fact that $t/\log t$ is increasing on $t$. For sufficiently large $C$, this leads to a contradiction, as we have shown above during $I_i$ the total contention created by interference nodes on the first channel is at most $cn'_i\log^2t$.

Therefore, there are at most $t/C$ slots in which interference nodes contribute at least $(c\log{t})/C$ to the first channel's contention. On the other hand, the assumption $t>C^3(n'_i\log n'_i+d'_i)$ implies $t>Cd'_i$, which in turn implies there are at most $t/C$ congest slots. As a result, during $I_i$, at least $(1-2/C)t$ slots are not congest slots, and the contention created by interference nodes on the first channel is at most $(c\log t)/C$ for each such slot. Call these slots $G_i$. According to the definition of congest slots, we know for each slot in $G_i$, the contention on the first channel is at most $1+(c\log t)/C$.

Recall that at least one interference node is injected in the first slot of $I_i$, so the total contention of the first channel is at least $(c\log t)/t$ throughout $I_i$. Hence, for each slot in $G_i$, the total contention of the first channel is between $(c\log t)/t$ and $1+(c\log t)/C$, and all these slots are not congest slots. As a result, by Lemma~\ref{lem:contention-to-succeess-probability}, for each slot in $G_i$, for sufficiently large $C$ (particularly, $C\gg c$), the probability that a success will occur is at least $(c\log t)/t$. Notice that $|G_i|\geq (1-2/C)t$, and that (normal and interference) nodes make choices independently among slots in $G_i$, we can apply a Chernoff bound and conclude, conditioned on any fixed history up to the beginning of $I_i$, the probability that no success occurs within $t>C^3(n'_i\log n'_i+d'_i)$ slots in $I_i$ is at most $(1-(c\log t)/t)^{(1-2/C)t}=1/t^{\Omega(1)}$.
\end{proof}

We are now ready to state and prove the main technical lemma of this part.

\begin{lemma}\label{lem:dynamic-num-large-contention-slots}
The number of active slots in which the contention on the second channel is at least $0.5$ is at most $O(n\log{n}+d)$, with high probability in $n+d$.
\end{lemma}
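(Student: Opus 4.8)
The plan is to carry out the three-way partition of the relevant slots announced just before the lemma. Among the active slots whose channel-two contention is at least $0.5$, I separate (a) the \emph{congest slots}, (b) the slots containing at least one active interference node, and (c) all the rest, and I bound the size of each class by $O(n\log n+d)$ with high probability in $n+d$. For the congest slots I would bound separately the three events in Definition~\ref{def:dynamic-congest-slot}. Channel-one jamming costs one unit of budget per slot, hence at most $d$ such slots. A node broadcasts on channel one with probability at least $1/2$ only during the first $O(1)$ slots of its life (slots $x\le2$ for \backoff{$1/x$}, and the slots with $x\le 2c\log x$, i.e.\ $x=O(1)$, for an interference node's \backoff{$(c\log x)/x$}), so over the at most $2n$ nodes these account for $O(n)$ slots. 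For the slots in which the normal nodes' channel-one contention is at least $1$: each normal node runs \backoff{$1/x$} continuously, so its total contribution to the time-integrated channel-one contention over a lifespan of $\tau$ slots is the harmonic number $H_\tau=O(\log\tau)$; invoking an a-priori bound --- established separately by a cruder argument, or by a bootstrap --- that every node's lifespan is $\mathrm{poly}(n+d)$ with high probability, the time-integrated channel-one contention of all normal nodes is $O(n\log(n+d))=O(n\log n+d)$, and the number of slots in which it reaches $1$ is at most that quantity. Summing, class (a) has size $O(n\log n+d)$.

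For class (c), a slot is active, not congest, and contains no interference node, so all active nodes are normal with $x\ge3$ (no channel-one probability is $\ge1/2$), channel one is unjammed, and the normal channel-one contention --- which here equals the total channel-one contention --- lies in $(0,1)$. On the range $x=\mathrm{poly}(n+d)$ one has $(c\log x)/x=O(\log(n+d))\cdot(1/x)$, so the channel-two contention is $O(\log(n+d))$ times the channel-one contention; combined with the channel-two contention being at least $0.5$, this forces the channel-one contention to be $\Omega(1/\log(n+d))$. Since all channel-one probabilities are at most $1/2$, Lemma~\ref{lem:contention-to-succeess-probability} gives, in every class-(c) slot, a channel-one success probability of at least $\min(4^{-1},\Omega(1/\log(n+d)))=\Omega(1/\log(n+d))$, and this holds independently of the past given the configuration. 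Since at most $2n$ successes ever occur in total, a Chernoff bound (Lemma~\ref{lem:chernoff-bound}) then shows the number of class-(c) slots is $O(n\log(n+d)+\log^2(n+d))=O(n\log n+d)$ with high probability in $n+d$.

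The main obstacle is class (b). The interference intervals $I_1,\dots,I_k$ of Definition~\ref{def:dynamic-interference-interval} cover every slot with an active interference node: each interference node is injected inside some $I_i=[L_i,R_i]$ (no injection lies in $(R_{i-1},L_i)$ by the choice of $L_i$, and no success lies in $[L_i,R_i)$ by the choice of $R_i$) and halts at the first success on or after its injection, which is exactly $R_i$; hence it suffices to bound $\sum_{i=1}^k|I_i|$. Writing $|I_i|=\min\{|I_i|,\,C^3(n'_i\log n'_i+d'_i)\}+X_i$ with $X_i\ge0$, the capped parts telescope: the $I_i$ are pairwise disjoint, so $\sum_i d'_i$ is at most the total number of congest slots, already bounded by $O(n\log n+d)$ in part (a), while $\sum_i n'_i\le n$ gives $\sum_i n'_i\log n'_i\le n\log n$, so the capped parts sum to $O(n\log n+d)$. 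For the overflow terms, Lemma~\ref{lem:interference-slots-count} states precisely that $\Pr[X_i=s\mid\mathrm{past}]=\Pr[|I_i|=s+C^3(n'_i\log n'_i+d'_i)\mid\mathrm{past}]\le1/s^{\Omega(1)}$ for all sufficiently large $s$, uniformly in $i$ and with a large exponent, so Lemma~\ref{lem:cjz21} yields $\sum_iX_i=O(k)=O(n)$ with high probability in $k$. The delicate point --- and the reason this case is hardest --- is that $k$ may be far smaller than $n+d$, so ``high probability in $k$'' is too weak: I would finish with a case split, applying Lemma~\ref{lem:cjz21} when $k\ge(n+d)^{\varepsilon}$ and, when $k<(n+d)^{\varepsilon}$, union-bounding the polynomially small tail of each individual $X_i$, so that in either case $\sum_iX_i=O(n+d)$ with high probability in $n+d$. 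Adding the three classes gives the claimed $O(n\log n+d)$ bound with high probability in $n+d$.
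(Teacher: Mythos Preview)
Your three-way decomposition is exactly the one the paper uses, and your treatment of class~(b) via Lemma~\ref{lem:interference-slots-count} together with Lemma~\ref{lem:cjz21} is correct in outline. There are, however, two places where your argument takes a detour that the paper avoids, and one of them is a real gap.

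The gap is the ``a-priori bound that every node's lifespan is $\mathrm{poly}(n+d)$ with high probability.'' You invoke it twice---for the congest-slot count and for class~(c)---but never prove it, and it is not free: it is a weak form of the very time bound this lemma feeds into, so a bootstrap must be done with care, and a standalone proof still takes a page. The paper sidesteps this completely. For the congest slots, instead of summing $H_\tau$ over unknown lifespans $\tau$, note that any normal node older than $2n$ slots contributes at most $1/(2n)$ to channel-one contention; hence in any slot with normal channel-one contention $\ge 1$, the nodes still within their first $2n$ slots contribute at least $1/2$. But the total time-integrated ``first-$2n$-slots'' contribution of all $n$ normal nodes is $n\,H_{2n}=O(n\log n)$, so the number of such slots is deterministically $O(n\log n)$---no lifespan bound, no probability. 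For class~(c), the paper is similarly self-contained: letting $t$ be the maximum age among active nodes in a class-(c) slot, one has $t\le(\text{congest slots})+(\text{interference-interval slots})+|G|$; since the first two quantities have already been bounded by $O(n\log n+d)$, the channel-one contention is $\Omega\bigl(1/\log(n\log n+d+|G|)\bigr)$ and the Chernoff step closes self-referentially at $|G|=\Theta(n\log n+d)$.

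The second detour is your case split on $k$ in class~(b). The paper's fix is much simpler: pad the sequence by setting $\ell_i=0$ for every $k<i\le n+d$ and apply Lemma~\ref{lem:cjz21} to all $n+d$ variables at once. The zero variables trivially satisfy the tail hypothesis, and the lemma then yields $\sum_i\ell_i=O(n+d)$ with high probability in $n+d$ directly, with no case analysis.
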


\begin{proof}
As mentioned previously, we divide the active slots we care into three categories, and provide an upper bound on size for each of them.

Let $d'$ be number of congest slots, and we first bound $d'$. Since each normal node runs \backoff{$1/x$} on channel one, each of them contributes at most $\log{(2n)}$ to the total contention of channel one during the first $2n$ slots after arriving. Therefore, if we inspect the union of the first $2n$ slots of all normal nodes, there will exist at most $2n\log{(2n)}$ slots in which the normal nodes contribute at least $0.5$ to the contention of channel one in that slot. On the other hand, for each slot after $2n$ slots since the last normal node is injected, on channel one, the contention created by the normal nodes is at most $n\cdot(1/(2n))=0.5$. Therefore, the number of slots in which the normal nodes contribute at least $1$ to the contention of the first channel is at most $2n\log{(2n)}$. Recall that the adversary can jam the first channel for at most $d$ slots, and that each (normal or interference) node sends with probability at least $1/2$ only in the first $O(1)$ slots since its arrival, we can conclude the number of congest slots $d'$ is at most $d+2n\log(2n)+n\cdot O(1)=O(d+n\log{n})$.

Next, we bound the number of slots with interference nodes. By the definition of interference interval (i.e., Definition~\ref{def:dynamic-interference-interval}), this is at most $\sum_{i=1}^k|I_i|$. For any $i\in[k]$, let $\ell_i$ be a random variable that equals to $|I_i|$ if $|I_i|>C(n'_i\log n'_i+d'_i)$ and equal to $0$ otherwise. Here, $n'_i$ and $d'_i$ are defined in the statement of Lemma~\ref{lem:interference-slots-count}. Let $\ell_i=0$ for any $k<i\le n+d$. According to Lemma~\ref{lem:interference-slots-count} and Lemma~\ref{lem:cjz21}, we have $\sum_{i=1}^{n+d}\ell_i=O(n+d)$, with high probability in $n+d$. Therefore, $\sum_{i=1}^k|I_i|\le\sum_{i=1}^k(\ell_i+C(n'_i\log n'_i+d'_i))\le O(n+d)+\sum_{i=1}^kC(n'_i\log n'_i+d'_i)=O(n\log n+d+d')=O(n\log{n}+d)$, w.h.p.\ in $n+d$.

For each active slot that is not a congest slot and has no interference node, the contention on the first channel is at most $1$. Let $G$ denote the set containing these slots. To bound $|G|$, we first bound the success probability of each slot in $G$, by giving a lower bound for channel one's contention. (We already have an upper bound for channel one's contention, which is $1$.) Consider one such slot, assume there are $m$ normal nodes. Assume the $i$-th node sends on channel two with probability $(c\log t_i)/t_i$, then the contention on channel two is $\sum_{i=1}^m{(c\log t_i)/t_i}\ge0.5$. (Recall that in this part we are considering ``slots with large contention'', so $\sum_{i=1}^m{(c\log t_i)/t_i}\ge0.5$.) Since the $i$-th node sends on channel one with probability $1/t_i$, the contention on channel one is at least $\sum_{i=1}^m{1/t_i}\ge1/(2c\log t)$, where $t$ is the maximum among all $t_i$. Due to the discussion in the above two paragraphs, we know with high probability in $n+d$, $t$ is at most $O(n\log n+d)+|G|$. According to Lemma~\ref{lem:contention-to-succeess-probability}, for each slot in $G$, a success will occur on channel one with probability $\Omega(1/\log(n\log{n}+d+|G|))$. Apply a Chernoff bound, we know when $|G|=\Theta(n\log n+d)$, there are $\Omega(|G|/\log(n\log{n}+d+|G|))=\Omega(n)$ successes in $G$, with high probability in $n+d$. In other words, $|G|=O(n\log n+d)$, with high probability in $n+d$.
\end{proof}

\bigskip\noindent\textbf{Slots with small contention.} In this part, we count the number of active slots in which the contention of the second channel is less than $0.5$.

Throughout this part, we treat the slots in which the contention on channel two reaches $0.5$ as jammed slots. Formally, we say an active slot is a \emph{busy slot} if in that slot at least one channel is jammed by the adversary or in that slot the contention on channel two reaches $0.5$. Assume there are $d'$ busy slots throughout the entire execution, due to Lemma~\ref{lem:dynamic-num-large-contention-slots}, we know $d'=O(n\log{n}+d)$.

To facilitate presentation, we introduce the notation of \emph{complete intervals}.

\begin{definition}\label{def:dynamic-complete-interval}
During an execution, the \emph{active interval} of a (normal or interference) node is the interval between the node's arriving and leaving (both inclusive). We define \emph{complete intervals} $I_1=[L_1,R_1],I_2=[L_2,R_2],...,I_k=[L_k,R_k]$ inductively as following. Define $R_0=0$. For any $i\in[k]$, let $O_i$ be the first active slot after $R_{i-1}$. Then, $I_i$ is defined to be the union of the active intervals that intersect with $O_i$.
\end{definition}

See Figure~\ref{fig:dynamic-complete-interval} for an example. By the definition above, all active slots of an execution are contained within the union of all complete intervals, so $\sum_{i=1}^k |I_i|$ is an upper bound on total number of active slots. However, bounding $\sum_{i=1}^k |I_i|$ differs from the proof of Lemma~\ref{lem:interference-slots-count}. In particular, interference intervals do not intersect with each other, so in the proof of Lemma~\ref{lem:interference-slots-count}, we can bound the length of each interference interval, conditioned on any previous execution history. By contrast, complete intervals may overlap.

\begin{figure}[!ht]
\centering
\includegraphics[scale=0.9]{./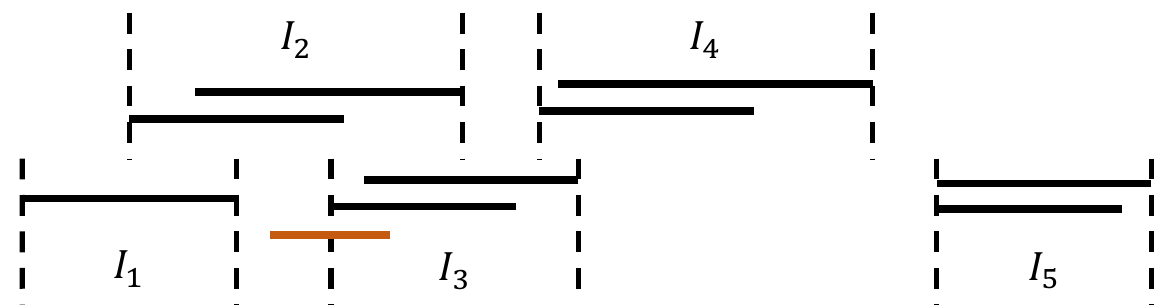}
\setcaptionwidth{0.85\textwidth}
\caption{An example of complete intervals, where each solid horizontal line denotes the active interval of a node. Complete intervals with the same parity of index do not overlap with each other. Notice that not every active interval is necessarily included in some complete interval. For example, the orange line in the figure.}\label{fig:dynamic-complete-interval}
\end{figure}

Nonetheless, an important observation about complete intervals is that $I_1,I_3,I_5,...$ are disjoint, so are $I_2,I_4,I_6,...$. This can be verified easily: any node with active interval intersecting a slot after $R_{i}$ must not intersect slot $R_{i-1}+1$, otherwise it should be included in interval $I_i$ and $R_i$ should be larger, which is a contradiction. As a result, we can bound $|I_1|+|I_3|+|I_5|+...$ and $|I_2|+|I_4|+...$ separately.

\begin{lemma}\label{lem:dynamic-complete-intervals}
For any $i\in[k]$, denote the number of injected (normal or interference) nodes during complete interval $I_i$ as $n_i$, and denote the number of busy slots in $I_i$ as $d'_i$. Then for any positive integer $t>C(n_i+d'_i)$ where $C$ is a sufficiently large constant, for any odd $i$ and any fixed $\ell_1,\ell_3,...,\ell_{i-2}$, we have $\Pr\left[~|I_i|=t ~\mid~ |I_1|=\ell_1,|I_3|=\ell_3,...,|I_{i-2}|=\ell_{i-2}~\right]\le1/t^{\Omega(1)}$. Similarly, for any even $i$ and any fixed $\ell_2,\ell_4,...,\ell_{i-2}$, we have $\Pr\left[~|I_i|=t ~\mid~ |I_2|=\ell_2,|I_4|=\ell_4,...,|I_{i-2}|=\ell_{i-2}~\right]\le1/t^{\Omega(1)}$.
\end{lemma}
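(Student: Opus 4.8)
The plan is to mimic the structure of Lemma~\ref{lem:interference-slots-count}, but now using the busy-slot accounting of this part instead of the congest-slot accounting. Fix an odd index $i$ (the even case is symmetric) and condition on $|I_1|=\ell_1,|I_3|=\ell_3,\dots,|I_{i-2}|=\ell_{i-2}$; since the odd-indexed complete intervals are disjoint, this conditioning fixes an execution history up to the beginning of $I_i$, and all the randomness used by nodes inside $I_i$ is fresh. Suppose $|I_i|=t$ with $t>C(n_i+d'_i)$. The goal is to show that, under this hypothesis, with probability $1-1/t^{\Omega(1)}$ some success occurs on some channel within the first $O(n_i+d'_i)<t$ slots of $I_i$ — actually we want: the probability that $I_i$ survives to length exactly $t$ is $1/t^{\Omega(1)}$ — so we bound the probability of ``no success in the first $t$ slots of $I_i$.''

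First I would bound the contention on channel two contributed by the nodes injected during $I_i$. Each such (normal or interference) node runs \backoff{$(c\log x)/x$} on channel two from its arrival, so over $I_i$ (length $t$) it contributes at most $\sum_{x=1}^{t}\min\{1,(c\log x)/x\}=O(\log^2 t)$ to the sum of channel-two contentions; hence the total channel-two contention summed over the $t$ slots of $I_i$ is $O(n_i\log^2 t)$. Consequently the number of slots in $I_i$ in which the channel-two contention reaches, say, $C'\log t$ (for a suitable constant $C'$) is at most $O(n_i\log^2 t /(C'\log t)) = O(n_i\log t)$, and using $t>C^{} (n_i+d'_i)$ with $C$ large this is at most $t/C''$ for any desired constant $C''$. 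Together with the fact that there are at most $d'_i \le t/C$ busy slots in $I_i$ (which already includes all slots where channel-two contention reaches $0.5$, all jammed slots, and — implicitly — the congest slots, since we are free to fold congest slots into ``busy'' by the earlier accounting), we conclude that a constant fraction $(1-o(1))t$ of the slots of $I_i$ are ``good'': not busy, not congest-jammed, with channel-two contention strictly between $0.5$ and $C'\log t$, and channel-one contention between roughly $1/(2c\log t)$ and $1+ (c\log t)/C$ (the lower bound coming from the channel-two lower bound and the $\Theta(\log)$ gap between the two channels, the upper bound from the non-congest assumption). By Lemma~\ref{lem:contention-to-succeess-probability}, each such good slot produces a success on channel one with probability $\Omega((\log t)/t)$, independently across slots given the fixed history. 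A Chernoff bound (Lemma~\ref{lem:chernoff-bound}) over the $\Omega(t)$ good slots then gives $\Pr[\text{no success in } I_i] \le (1-\Omega((\log t)/t))^{\Omega(t)} = 1/t^{\Omega(1)}$, which is what we need.

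The main obstacle — and the point where this differs genuinely from Lemma~\ref{lem:interference-slots-count} — is that complete intervals may overlap, so I must be careful that conditioning on the lengths of the earlier disjoint odd (resp.\ even) intervals really does expose a clean history and does not correlate the coins used inside $I_i$; the disjointness observation stated just before the lemma ($I_1,I_3,I_5,\dots$ pairwise disjoint, likewise the even ones) is exactly what rescues the argument, but one has to phrase the conditioning over the odd chain (resp.\ even chain) only, not over all previous intervals. A secondary technical point is pinning down the constants: I need $C$ large enough relative to $c$ so that (a) $t>C(n_i+d'_i)$ forces both the ``few busy slots'' bound and the ``few high-channel-two-contention slots'' bound to leave a $(1-o(1))$ fraction of good slots, and (b) on each good slot the channel-one contention upper bound $1+(c\log t)/C$ is still $O(\log t)$ so that Lemma~\ref{lem:contention-to-succeess-probability}'s case (the $\min(4^{-p},p/4)$ branch, used with $p=\Theta(1/\log t)\cdot$something) yields the claimed $\Omega((\log t)/t)$ success probability rather than something smaller. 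These are the same constant-juggling steps as in the proof of Lemma~\ref{lem:interference-slots-count} and I expect them to go through verbatim modulo replacing ``congest slot'' bookkeeping with ``busy slot'' bookkeeping.
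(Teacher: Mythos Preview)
Your approach has a genuine gap: the definition of ``good slot'' you use is self-contradictory. A slot is \emph{busy} precisely when some channel is jammed \emph{or} the channel-two contention reaches $0.5$; hence in a non-busy slot the channel-two contention is strictly \emph{below} $0.5$, not between $0.5$ and $C'\log t$ as you write. This means your claimed lower bound on channel-one contention (``coming from the channel-two lower bound and the $\Theta(\log)$ gap'') simply does not exist --- in a non-busy slot the total contention on either channel can be arbitrarily small, so Lemma~\ref{lem:contention-to-succeess-probability} gives you no useful lower bound on the success probability of channel one.

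The paper's proof takes a different route that avoids this issue. Rather than bounding the aggregate contention, it exploits the structural fact that $I_i$ is a union of active intervals all passing through the single slot $O_i$: consequently some node $u$ has an active interval of length at least $t/2$ inside $I_i$. Now focus on that one node's \emph{channel-two} behavior. Throughout its life $u$ sends on channel two with probability at least $(c\log t)/t$; in each of the $\ge t/4$ non-busy slots the remaining channel-two contention is below $0.5$, so by Lemma~\ref{lem:contention-to-succeess-probability} the probability $u$ sends alone on channel two is $\Omega((\log t)/t)$. Multiplying over $t/4$ slots gives the $1/t^{\Omega(1)}$ bound for $u$, and a union bound over the $n_i\le t/C$ injected nodes finishes. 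The disjointness of odd (resp.\ even) intervals is used exactly as you anticipated to justify the conditioning. So the fix is not a matter of constant-juggling: you need to switch from ``aggregate contention on channel one'' to ``single long-lived node succeeding on channel two,'' which is where the upper bound $0.5$ on channel-two contention in non-busy slots becomes the \emph{right} direction of inequality.
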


\begin{proof}
Without loss of generality, consider the case that $i$ is odd. Suppose $|I_i|=t>C(n_i+d'_i)$. Since $I_i$ is a union of active interval(s) that intersect(s) at a single point, there must exist a node $u$ whose active interval is contained within $I_i$ and the length of the active interval of $u$---denoted as $a_u$---is at least $t/2>C(n_i+d'_i)/2$.  We will prove that this event happens with small probability for every node whose active interval is contained within $I_i$, by then a union bound would imply the desired result.

We now prove $\Pr\left[~a_u=t/2 ~\mid~ |I_1|=\ell_1,|I_3|=\ell_3,...,|I_{i-2}|=\ell_{i-2}~\right]\le1/t^{\Omega(1)}$. In the first $t/2$ slots after $u$'s arriving, there are at most $t/C$ busy slots since $t>Cd'_i$. Hence, when $C\geq 4$, during the first $t/2$ slots after $u$'s arriving, there are at least $t/4$ non-jamming slots in which the contention on the second channel is at most $0.5$. Call these slots $G_u$. Notice that $u$ sends with probability at least $(c\log t)/t$ on the second channel in these slots. Therefore, by Lemma~\ref{lem:contention-to-succeess-probability}, for each slot in $G_u$, the probability that $u$ succeeds on channel two (i.e., $u$ sends on channel two and other nodes are idle on channel two) is at least $(c\log t)/t\cdot\Theta(1)$. Since $I_1,I_3,...I_{i-2}$ do not intersect with $I_i$, this probability holds even if it is conditioned on $|I_1|=\ell_1,|I_3|=\ell_3,...,|I_{i-2}|=\ell_{i-2}$. Thus, the probability that node $u$ does not succeed during $G_u$ is at most $\left(1-(c\log t)/t\cdot\Theta(1)\right)^{t/4}={1}/{t^{\Omega(1)}}$, for sufficiently large $c$ and $t$.

Recall that $t>Cn_i$. By the union bound, the probability that there exists a node that lasts for $t/2$ slots before leaving the system is at most $n_i\cdot 1/t^{\Omega(1)}=1/t^{\Omega(1)}$, which proves the lemma.
\end{proof}

We are now ready to state and prove the main technical lemma of this part.

\begin{lemma}\label{lem:dynamic-num-small-contention-slots}
Suppose during the execution of the two-channel algorithm there are $a$ active slots in which the contention on the second channel is at least $0.5$. Then, the total number of active slots is at most $O(a+n+d)$, with high probability in $n+d$.
\end{lemma}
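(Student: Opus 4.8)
The plan is to mimic the structure of the "large contention" argument: express all active slots as a union of complete intervals $I_1,\dots,I_k$ (Definition~\ref{def:dynamic-complete-interval}), so that $\sum_{i=1}^k|I_i|$ upper-bounds the total number of active slots, and then bound this sum using the concentration tail established in Lemma~\ref{lem:dynamic-complete-intervals} together with the summation bound of Lemma~\ref{lem:cjz21}. The key accounting quantities are, for each $i$, the number $n_i$ of nodes (normal or interference) injected during $I_i$, and the number $d'_i$ of \emph{busy} slots in $I_i$ (where busy means jammed on some channel, or contention on channel two is at least $0.5$). By definition of busy slots and the hypothesis, the total number of busy slots is $d' = O(a+n+d)$, and also $\sum_i n_i \le 2n$ (each normal or interference node is injected once; there are at most $n$ of each). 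The obstacle compared to Lemma~\ref{lem:interference-slots-count} is that complete intervals can overlap, so we cannot condition on the entire prefix; the fix, already observed in the text, is that odd-indexed complete intervals are mutually disjoint, and so are even-indexed ones, and Lemma~\ref{lem:dynamic-complete-intervals} gives the tail bound conditioned only on the earlier intervals of the same parity.

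First I would handle the odd-indexed intervals. For each odd $i$, define a truncated random variable $\ell_i := |I_i|$ if $|I_i| > C(n_i + d'_i)$ and $\ell_i := 0$ otherwise; set $\ell_i = 0$ for indices beyond $k$, padding the sequence out to length $n+d$ (or any polynomial length dominating $k$). Lemma~\ref{lem:dynamic-complete-intervals} says exactly that, conditioned on $|I_1|=\ell_1,\dots,|I_{i-2}|=\ell_{i-2}$ for the odd prefix, $\Pr[|I_i| = t] \le 1/t^{\Omega(1)}$ for every $t > C(n_i+d'_i)$; since $\ell_i$ is a deterministic function of $|I_i|$ (and the deterministic quantities $n_i,d'_i$), this transfers to the bound $\Pr[\ell_i = t \mid \ell_1,\dots,\ell_{i-2}] \le 1/t^{\Omega(1)}$ for large $t$, which is the hypothesis of Lemma~\ref{lem:cjz21}. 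Applying that lemma gives $\sum_{i \textrm{ odd}} \ell_i = O(n+d)$ with high probability in $n+d$. Then the untruncated odd sum is controlled by
\[
\sum_{i \textrm{ odd}} |I_i| \;\le\; \sum_{i \textrm{ odd}} \bigl(\ell_i + C(n_i + d'_i)\bigr) \;\le\; O(n+d) + C\sum_i n_i + C\sum_i d'_i \;=\; O(n+d) + O(n) + O(a+n+d),
\]
using $\sum_i n_i \le 2n$ and $\sum_i d'_i \le d' = O(a+n+d)$ (the intervals of a fixed parity being disjoint, each busy slot is counted at most once within that parity class). The same computation applies verbatim to the even-indexed intervals.

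Combining the two parity classes, $\sum_{i=1}^k |I_i| = O(a+n+d)$ with high probability in $n+d$, and since every active slot lies in some complete interval, the total number of active slots is $O(a+n+d)$, as claimed. The one subtlety to check carefully in the write-up is the truncation/padding step feeding into Lemma~\ref{lem:cjz21}: we need the sequence length to be polynomial in $n+d$ so that "high probability in $N$" translates to "high probability in $n+d$", and we need $k \le n+d$ (true since every complete interval contains at least one injection or is nonempty only when a node is active, and there are at most $2n$ injections, so $k \le 2n \le n+d$ after absorbing constants — or one simply pads to length $\mathrm{poly}(n+d)$ with zeros, which does not affect the sum). This is routine. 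The genuinely load-bearing ingredient is Lemma~\ref{lem:dynamic-complete-intervals}, which is already proved; here we merely assemble it, and the main conceptual point is the parity trick that restores the conditional-independence-like structure needed for Lemma~\ref{lem:cjz21}.
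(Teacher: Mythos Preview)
Your proposal is correct and follows essentially the same approach as the paper: bound the active slots by $\sum_i |I_i|$ over complete intervals, split into odd- and even-indexed intervals to restore disjointness, apply the truncation $\ell_i = |I_i| \cdot \mathbb{1}[|I_i| > C(n_i+d'_i)]$, feed Lemma~\ref{lem:dynamic-complete-intervals} into Lemma~\ref{lem:cjz21} to get $\sum_i \ell_i = O(n+d)$, and then absorb the truncation residue using $\sum_i n_i = O(n)$ and $\sum_i d'_i = O(d') = O(a+d)$. The only cosmetic imprecision is calling $n_i, d'_i$ ``deterministic''---they are random, but the point is that the event $\{\ell_i = t\}$ coincides with $\{|I_i| = t\} \cap \{t > C(n_i+d'_i)\}$, which is exactly the event whose conditional probability Lemma~\ref{lem:dynamic-complete-intervals} bounds; the paper's own proof handles this at the same level of rigor.
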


\begin{proof}
According to the definition of complete interval, the total number of active slots can be bounded by $\sum_{i=1}^k|I_i|$. For every $i\in[k]$, let $\ell_i$ be a random variable that equals to $|I_i|$ if $|I_i|>C(n_i+d'_i)$ and equals to $0$ otherwise. (Recall $n_i$ and $d'_i$ are introduced in Lemma~\ref{lem:dynamic-complete-intervals}.) Let $\ell_i=0$ for any $i>k$. According to Lemma~\ref{lem:dynamic-complete-intervals} and Lemma~\ref{lem:cjz21}, we have $\sum_{i=1}^{(n+d)/2}\ell_{2i-1}=O(n+d)$ with high probability in $n+d$; similarly, we also have $\sum_{i=1}^{(n+d)/2}\ell_{2i}=O(n+d)$ with high probability in $n+d$. Thus, $\sum_{i=1}^k|I_i|\leq\sum_{i=1}^k(\ell_i+C(n_i+d'_i))\leq O(n+d)+\sum_{i=1}^kC(n_i+d'_i)=O(n+d+d')$, with high probability in $n+d$. Recall that $d'\leq a+d$, the lemma is proved.
\end{proof}

We conclude this subsection with the proof of Theorem~\ref{thm:dynamic-two-channel}.

\begin{proof}[Proof of Theorem~\ref{thm:dynamic-two-channel}]
Due to Lemma~\ref{lem:dynamic-num-large-contention-slots} and Lemma~\ref{lem:dynamic-num-small-contention-slots}, we know the two-channel algorithm achieves $(n\log{n},d)$-time-cost. That is, the total number of active slots is at most $O(n\log{n}+d)$, w.h.p.\ in $n+d$.

Assume indeed the total number of active slots is $O(n\log{n}+d)$, we now consider energy complexity. Fix a node $u$. If $u$ is a normal node, then $u$ runs \backoff{$1/x$} on the first channel and runs \backoff{$(c\log{x})/x$} on the second channel, until it succeeds. Otherwise, if $u$ is an interference node, then it runs \backoff{$(c\log{x})/x$} until it sees a success. Hence, during $u$'s life cycle, the total contention created by it is at most $O(\log^2(n\log{n}+d))$. Notice that the total contention created by $u$ is also the expected number of times $u$ broadcast during its life cycle. Moreover, $u$'s broadcast decisions are independent between slots. By a Chernoff bound, we can conclude that the number of times $u$ broadcast is $O(\log^2(n\log{n}+d))$, with high probability in $n+d$. Notice that $O(\log^2(n\log{n}+d))$ could also be written as $O(\log^2{n}+\log^2{d})$, thus the theorem is proved.
\end{proof}

\subsection{Conversion to Single-channel Model}

In this subsection, we prove Lemma~\ref{lem:two-to-single} by introducing a method to covert the two-channel algorithm to a corresponding single-channel algorithm. Notice that in the single-channel setting, we can give each slot a global index. If nodes have access to these indices, implementing the two-channel model is easy: each node treats odd slots in the single-channel setting as the first channel in the two-channel model, and treats even slots in the single-channel setting as the second channel in the two-channel model. Nonetheless, we do not assume nodes can access global slot indices. Instead, in the single-channel setting, when a node is injected, it will first run a ``synchronization procedure'' to reach agreement on the parity of slots with the nodes that are already in the system. Similar idea has been used in several previous work~\cite{bender18,bender20,chen21}. More details on this synchronization procedure is presented in the following proof.

\begin{proof}[Proof of Lemma~\ref{lem:two-to-single}]
We first describe our single-channel algorithm for each node $u$. Suppose $u$ joins the system at global slot $t_u$, then $u$ will ``split'' the single channel into two channels: ``odd channel'' containing slots $\{t_u,t_u+2,t_u+4,...\}$, and ``even channel'' containing slots $\{t_u+1,t_u+3,...\}$. For each of the two channels, $u$ will run \backoff{$(c\log x)/x$} on it, starting from slot $t_u$ and $t_u+1$ respectively. Assume later $u$ sees the first successful message transmission in global time slot $T$. At that point, $u$ treats the channel $T$ belongs to as the first channel in the two-channel model, and treats the other channel as the second channel in the two-channel model. Then, $u$ runs the two channel algorithm starting from time slot $T+2$. While running the two-channel algorithm, whenever $u$ hears a success on the second channel in some time slot $T'$ (here $T'$ is a single-channel slot index), it swaps the role of the two channels. Specifically, $u$ treats the current first channel as the second channel and treats the current second channel as the first channel; it does nothing in time slot $T'+1$ (which is now a slot belonging to the second channel) and continues the two-channel algorithm in time slot $T'+2$ (which is now a slot belonging to the first channel). It is straightforward to verify, whenever a success occurs, all nodes that are currently in the system reaches agreement on which slots belong to the first channel (and which slots belong to the second channel).

Now, observe that a node injected into the system (in the single-channel model) can be seen as an interference node injected by the adversary (in the two-channel model). If a node hears a success for the first time (in the single-channel model), then it will start running the two-channel algorithm; this can been seen as the interference node leaving the system and a new normal node is injected (in the two-channel model). At this point, we have finished describing how an arbitrary execution of the single-channel algorithm can be converted into a corresponding execution of the two-channel algorithm, assuming the adversary has the ability to \interference{$(c\log{x})/x$} and can inject $n$ interference nodes in the two-channel model.

Next, we analyze the time complexity. Notice that in the single-channel model, each time a node hearing a success in some slot $T$ that corresponds to the second channel, it will ``waste'' a time slots $T+1$, and continue in slot $T+2$. However, the total number of such wasted slot is at most $n$. Therefore, if the two-channel algorithm achieves $(n\log{n}, d)$-time-cost, then the single-channel algorithm described above has at most $2\cdot O(n\log n+d)+O(n)=O(n\log n+d)$ active slots, implying the single-channel algorithm also achieves $(n\log{n}, d)$-time-cost.

Lastly, we analyze the energy complexity. By the conversion procedure described above, for each node $u$ injected in the single-channel setting, it will first play the role of an interference node, and then the role of a normal node. Recall Definition~\ref{def:throughput-and-energy-with-interference}, in each of the above two parts of the life cycle of $u$, since the two-channel algorithm achieves $O(\log^2{n}+\log^2{d})$-energy-cost, we know with high probability in $n+d$ node $u$'s energy cost is $O(\log^2{n}+\log^2{d})$. Therefore, by Definition~\ref{def:throughput-and-energy}, the single-channel algorithm also achieves $(n\log{n}, d)$-time-cost.
\end{proof}

Finally, we conclude this section by noting that combining Lemma~\ref{lem:two-to-single} and Theorem~\ref{thm:dynamic-two-channel} immediately gives Theorem~\ref{thm:dynamic-single-channel}, which states the time complexity and the energy complexity of our dynamic algorithm in the single-channel setting.

\section{The Static Scenario}\label{sec:static-alg}

In this section, we introduce our algorithm for the scenario where all nodes start execution simultaneously. (That is, Eve activates all nodes in the first slot.) Similar to Section~\ref{sec:dynamic-alg}, we will first give a more through discussion on the intuition of the algorithm, then present the pseudocode, and finally proceed to the analysis.

In the static scenario, all nodes agree on slot indices, so the two-channel model is easy to implement: odd slots correspond to the first channel, which we also refer to as the ``data channel''; and even slots correspond to the second channel, which we also refer to as the ``control channel''.

Assume indeed nodes can access two channels, recall the static algorithm contains two phases, as we have discussed in Section~\ref{sec:overview}. Specifically, in the first phase, nodes uses a variable $\ell$ to control its sending probability: in each slot, send with probability $1/\ell$ on the data channel, and with probability $(c\log\ell)/\ell$ on the control channel. Initially $\ell=1$. In each slot, if a control channel success occurs then $\ell=\ell/2$, effectively doubling the sending probability of the next slot; otherwise simply update $\ell=\ell+1$. This phase one algorithm efficiently maintains the contention of the data channel within a desirable interval, enforcing a total time complexity of $O(n+d)$.

However, the above phase one algorithm could suffer poor energy efficiency when jamming from the adversary is strong (particularly, when $d=\Omega(n\log{n})$). As a result, in the final algorithm, when nodes realize the adversary has jammed a lot, they will switch to a phase two algorithm, which is simply running \backoff{$(c\log{x})/x$} on the control channel. Assuming $n'$ nodes remain when phase two starts, the energy complexity of each such node is $O(\log^2{n'}+\log^2{d})$, and the time complexity of phase two would reach $O(n'\log{n'}+d)$. If we want to enforce the static algorithm's total time complexity to be $O(n+d)$, then we need to ensure $n'\log{n'}=O(d)$. Therefore, the transition from phase one to phase two is crucial: it cannot happen too late, otherwise phase one would incur large energy cost; it also cannot happen too early, otherwise it might be the case that $n'\log n'=\omega(d)$, resulting in sub-optimal total time complexity.

It turns out there is a simple criterion. Upon arrival, each node maintains a variable called $m$ and sets $m=\infty$. During phase one, whenever a control channel success occurs and $\ell$ is halved, nodes update $m=\min\{m,\ell\}$. By the end of each slot $t$, if $m\leq t/\log{t}$, remaining nodes will switch to phase two.

We now explain why the above criterion is the proper transition condition.

\smallskip\underline{\emph{The value of $m$ is an asymptotic upper bound on the number of remaining nodes.}} Recall the contention of the control channel will increase (particularly, double) only if a success occurs on the control channel. Moreover, due to Lemma~\ref{lem:contention-to-succeess-probability}, with high probability in $\ell$ a success will not happen when the contention of the control channel reaches $\Omega(\log\ell)$. Hence, the algorithm restricts the contention of the control channel to be $O(\log\ell)$. Recall that the contention of the two channels differ by a $\Theta(\log \ell)$ factor, thus the contention of the data channel is restricted to be $O(1)$. On the other hand, by definition, the contention of the data channel is $n'\cdot (1/\ell)$ where $n'$ is the number of remaining nodes currently. As a result, we conclude $n'=O(\ell)$ always hold throughout phase one. Recall that $m$ is always the minimum value of $\ell$ up to now and the number of remaining nodes only decrease over time, we conclude $m$ is always an asymptotic upper bound on the number of remaining nodes.

\underline{\emph{Energy complexity of phase one is guaranteed.}} Consider phase one of the algorithm. Assume it takes $T_a$ slots for the first control channel success to occur. Then, for each slot after the first control channel success, $\Theta((\log m)/m)$ is an upper bound on nodes' sending probabilities as by then $m=O(\ell)$ always hold. Recall that for each slot $t$ in phase one, $m=\Omega(t/\log t)$. Moreover, $m$ only decrease overtime. As a result, if phase one ends by the end of slot $T$, then for each slot in $[T_a+1,T]$, each node's sending probability is $O(\log(T/\log{T})/(T/\log{T}))=O((\log^2{T})/T)$. This implies the expected energy cost of each node for phase one is $O(\log^2{T_a}+(T-T_a)\cdot(\log^2{T})/T)=O(\log^2{T})$. Recall that phase one lasts for at most $O(n+d)$ slots (otherwise all nodes would have succeeded), we conclude the energy cost of each node during phase one is $O(\log^2(n+d))=O(\log^2n+\log^2d)$.

\underline{\emph{Time complexity of phase two is guaranteed.}} Due to the above discussion, we have $n'=O(m)=O(T/\log T)$, where $n'$ is the number of nodes entering phase two and $T$ is the duration of phase one. Moreover, $T=O(n+d)$ as phase one algorithm has time complexity $O(n+d)$. Therefore, we conclude $n'\log n'=O(n+d)$, implying the time complexity of phase two is $O(n'\log{n'}+d)=O(n+d)$.

\smallskip In summary, $m=\Omega(t/\log t)$ guarantees the energy complexity of phase one, while $m=O(t/\log t)$ guarantees the time complexity of phase two. Since $m$ is decreasing over time and $t$ is increasing over time, the first time $m\leq t/\log t$ is satisfied is the proper moment to transfer from phase one to phase two.

\subsection{Algorithm Description}\label{sec:static-alg-description}

Our static algorithm is shown below, it resembles the structure we illustrated above. To simplify presentation, it is described in the two-channel model, and can be easily converted to the single-channel setting.

\par\addvspace{.5\baselineskip}
{\centering
\framebox[\textwidth][c]{
\parbox{.95\textwidth}{
\par\addvspace{.3\baselineskip}
\hypertarget{alg:static}{\underline{\emph{Algorithm for each node in the static scenario:}}}
\par\addvspace{.3\baselineskip}
\textbf{Phase I:} Initialize $\ell\gets 1$, $t\gets 1$, $m\gets\infty$. Repeat the following in each slot until $m\le t/\log{t}$, or the node succeeds (and halts).
\begin{enumerate}[nosep]
	\item Send with probability $\min\{1/\ell,1\}$ on the data channel, and with probability $\min\{(c\log\ell)/\ell,1\}$ on the control channel, where $c$ is a sufficiently large constant.
	\item If in this slot the control channel generates a success, then let $\ell\gets\max\{\ell/2,1\}$ and let $m\gets\min\{m,\ell\}$; otherwise let $\ell\leftarrow\ell+1$.
	\item $t\gets t+1$.
\end{enumerate}
\textbf{Phase II:} Run a fresh instance of \backoff{$(c\log{x})/x$} on the control channel, until the node succeeds.
}}}
\par\addvspace{.5\baselineskip}

\subsection{Algorithm Analysis}\label{sec:static-alg-analysis}

We first argue the energy consumption of each node is poly-logarithmic in its total running time.

\begin{lemma}\label{lem:static-energy}
Each node running the static algorithm sends $O(\log^2{t})$ times in the first $t$ slots, w.h.p.\ in $t$.
\end{lemma}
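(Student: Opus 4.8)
\textbf{Proof plan for Lemma~\ref{lem:static-energy}.}

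The plan is to bound, separately for Phase~I and Phase~II, the expected number of broadcasts a fixed node $u$ makes in the first $t$ slots, and then apply a Chernoff bound (Lemma~\ref{lem:chernoff-bound}) to turn the expectation bound into a high-probability bound. Since $u$'s broadcast decisions are independent across slots (given the history, which determines $\ell$ and the phase), the sum of per-slot broadcast probabilities is exactly the expected number of broadcasts, and the third bullet of Lemma~\ref{lem:chernoff-bound} gives concentration once that expectation is $O(\log^2 t)$; I would also need a constant additive slack so the bound holds even when the expectation is $o(\log t)$, which is handled by the standard ``$\max(\mathbb{E}[X], \Theta(\log t))$'' trick inside the Chernoff application.

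\emph{Phase~I.} Here $u$ sends with probability $\min\{1/\ell,1\} + \min\{(c\log\ell)/\ell,1\} = O((\log\ell)/\ell)$ in a slot where its counter is $\ell$. The key structural fact I would invoke is the transition condition itself: while the node is in Phase~I at slot $t'$, we have $m > t'/\log t'$, and by the ``$m$ is an upper bound on remaining nodes / on $\ell$'' discussion in the overview (more precisely, $m$ is the running minimum of the halved $\ell$ values, and $\ell$ is incremented by one in every non-success slot), one gets that $\ell$ cannot be much smaller than $t'/\mathrm{polylog}(t')$ throughout Phase~I — intuitively, between consecutive control-channel successes $\ell$ grows linearly, and each success only halves it, so the running minimum $m$ tracks $\ell$ up to a constant factor, hence $\ell = \Omega(t'/\log t')$ as long as $m > t'/\log t'$. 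Plugging this into the per-slot probability gives that in slot $t'$ of Phase~I the send probability is $O\big((\log(t'/\log t'))/(t'/\log t')\big) = O((\log^2 t')/t')$ for $t'$ past a small constant, plus $O(\log T_a)$ total mass accumulated before the first control-channel success (where the probabilities are as large as $\Theta(1)$, contributing $\sum_{x\le T_a}(c\log x)/x = O(\log^2 T_a)$). Summing $O((\log^2 t')/t')$ over $t' \le t$ telescopes to $O(\log^3 t)$ at worst — so here I need to be a little more careful: either argue $T_a$ is itself not too large relative to the current slot, or sharpen the lower bound on $\ell$ to $\Omega(t')$ up to a single $\log$, giving $\sum (\log t')/t' = O(\log^2 t)$. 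I expect this is exactly the intended route, matching the overview's claim that Phase~I energy is $O(\log^2 T)$; the clean statement is that once past the first success, $m$ (and hence the effective $\ell$) is $\Omega(t'/\log t')$, and the expected number of broadcasts over $[T_a+1, t]$ is $O\big((\log^2 t)\cdot\sum_{t'}1/t'\big)$ reorganized as $O(\log^2 t)$ by noting the $1/t'$ sum against the slowly-varying numerator; I would present the careful version of this telescoping in the actual proof.

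\emph{Phase~II.} If $u$ ever enters Phase~II, it does so at some slot $T_1 \le t$ and runs a fresh \backoff{$(c\log x)/x$} on the control channel from then on. Its total expected broadcast count over the remaining $\le t$ slots of Phase~II is at most $\sum_{x=1}^{t}(c\log x)/x = O(\log^2 t)$. Combining the two phases, the total expectation is $O(\log^2 t)$. Finally, since all broadcast indicators are independent conditioned on the evolution of $\ell$ and the phase boundary (which are themselves determined by channel feedback, not by $u$'s current-slot coin), I apply the third bullet of Lemma~\ref{lem:chernoff-bound} with $R = \Theta(\log^2 t)$ chosen as a large constant times $\max\{\mathbb{E}[\text{broadcasts}], \log t\}$ to conclude the broadcast count is $O(\log^2 t)$ with probability at least $1 - t^{-\beta}$, i.e., w.h.p.\ in $t$.

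\emph{Main obstacle.} The delicate point is the lower bound on $\ell$ (equivalently on $m$) throughout Phase~I: the transition condition gives $m > t'/\log t'$, but $m$ is the running minimum of halved $\ell$-values, so I must argue that the \emph{current} $\ell$ in slot $t'$ is $\Omega(m)$ — this follows because $\ell$ only decreases by halving at control-channel successes and otherwise increases by $1$, so between the last success before slot $t'$ and slot $t'$ the value of $\ell$ is at least the halved value recorded at that success, which is $\ge m$; hence $\ell \ge m > t'/\log t'$ in slot $t'$, making the per-slot send probability $O((\log^2 t')/t')$. Getting the summation $\sum_{t' \le t} (\log^2 t')/t'$ down to $O(\log^2 t)$ rather than $O(\log^3 t)$ requires one further observation — that the number of slots where the probability is as large as $\Theta((\log^2 t')/t')$ with small $t'$ is limited, or equivalently re-deriving the bound as the overview does via ``expected cost $= O(\log^2 T_a) + (T - T_a)\cdot O((\log^2 T)/T) = O(\log^2 T)$'' using that $m$ is \emph{non-increasing} so the bound $O((\log^2 T)/T)$ on the per-slot probability holds uniformly over $[T_a+1,T]$ with the \emph{final} $T$, not the running $t'$. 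That uniformity is the crux and is what I would spell out carefully.
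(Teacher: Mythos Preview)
Your proposal is correct and follows essentially the same route as the paper: split Phase~I at the first control-channel success $t_a$, bound the pre-$t_a$ contribution by $O(\log^2 t_a)$ from the raw \backoff{$(c\log x)/x$} sum, bound the post-$t_a$ contribution by observing that $\ell\ge m$ always holds after the first success and that the terminal value $m_b=\Omega(t_b/\log t_b)$ (from the Phase-I exit condition applied at $t_b-1$, plus $m$ dropping by at most half in one step), then use the resulting \emph{uniform} per-slot bound $(c\log m_b)/m_b=O((\log^2 t_b)/t_b)$ over $[t_a+1,t_b]$ to get $O(\log^2 t_b)$; Phase~II is a fresh \backoff{$(c\log x)/x$} contributing $O(\log^2 t)$, and a Chernoff bound finishes. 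Your ``main obstacle'' paragraph is exactly the argument the paper gives, so the earlier detour through the per-slot sum $\sum_{t'\le t}(\log^2 t')/t'=O(\log^3 t)$ is unnecessary---just go straight to the uniform bound via the final $m_b$, as you yourself note at the end.
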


\begin{proof}
Consider a node $u$. Suppose the first control channel success happens in slot $t_a$, then in the first $t_a$ slots $u$ will stay in phase one since $m$ remains $\infty$. Moreover, in the first $t_a$ slots, node $u$ runs a single instance of \backoff{$1/x$} on the data channel, and a single instance of \backoff{$(c\log x)/x$} on the control channel. Hence, in the first $t_a$ slots, in expectation, $u$ sends for at most $O(\log^2{t_a})$ times.

Suppose $u$ ends phase one at slot $t_b$ and the value of $m$ is $m_b$ by the end of $t_b$, then the largest sending probability during interval $[t_a+1,t_b]$ is $(c\log{m_b})/{m_b}$, since $m$ is non-increasing. We now argue $m_b=\Omega(t_b/\log{t_b})$. Since slot $t_b$ is the first time that condition $m\le t/\log t$ is satisfied, we know by the end of slot $t_b-1$ the value of $m$---denoted by $m_{t_b-1}$---must satisfy $m_{t_b-1}>(t_b-1)/\log{(t_b-1)}=\Omega(t_b/\log{t_b})$. If $m_b=m_{t_b-1}$ then we have proved $m_b=\Omega(t_b/\log{t_b})$, so assume $m_b\neq m_{t_b-1}$.
Suppose the value of $m$ becomes $m_{t_b-1}$ in some slot $t_{b'}\leq t_b-1$; further suppose $\ell_{t_{b'}}$ and $\ell_{t_{b}}$ are the values of $\ell$ by the end of slot $t_{b'}$ and slot $t_b$ respectively. Then by algorithm description we know $2\ell_{t_b}\geq \ell_{t_{b'}}$, which further implies $2m_{b}\geq m_{t_b-1}$. Again, we have $m_b\geq m_{t_b-1}/2=\Omega(t_b/\log{t_b})$. At this point, we have proved $m_b=\Omega(t_b/\log{t_b})$, and the largest sending probability of $u$ during interval $[t_a+1,t_b]$ is $(c\log{m_b})/{m_b}$. Hence, during interval $[t_a+1,t_b]$, in expectation, $u$ sends for at most $t_b\cdot(c\log{m_b})/{m_b}=O(\log^2{t_b})$ times.

Lastly, consider phase two, in which node $u$ runs a simple \backoff{$(c\log x)/x$}. Suppose $u$ stops at some slot $t\geq t_b$, we know in expectation, during phase two, $u$ sends for at most $O(\log^2{(t-t_b)})$ times.

Since in each slot during the entire life cycle of $u$ it makes broadcasting decisions independently, and since $t_b\leq t_b\leq t$, by the Chernoff bound, we can conclude node $u$ will send at most $O(\log^2 t)$ times during the first $t$ slots, with high probability in $t$.
\end{proof}

In the following, we argue the time complexity of our static algorithm, and we do so by proving a bound for phase one and phase two separately. We will first focus on phase two since it is easier to analyze.

\begin{lemma}\label{lem:static-phase2-time}
Denote the duration of phase two as $T$, denote the number of nodes that start phase two as $n'$. Then for any $t\ge c^2(n'\log n'+d)$, we have $\Pr[T=t]\le 1/t^{\Omega(1)}$.
\end{lemma}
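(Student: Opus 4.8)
The plan is to analyze phase two as a standalone instance of $n'$ nodes running \backoff{$(c\log{x})/x$} on a single channel (the control channel), while the adversary can jam up to $d$ slots, and show that with very high probability all $n'$ nodes succeed within $O(n'\log{n'}+d)$ slots. This is structurally almost identical to the ``slots with small contention'' argument already carried out for the dynamic two-channel algorithm (Lemma~\ref{lem:dynamic-complete-intervals} and Lemma~\ref{lem:dynamic-num-small-contention-slots}), restricted to a single channel and a static batch of nodes, so I would mirror that proof.

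First I would fix $t\ge c^2(n'\log{n'}+d)$ and suppose $T=t$, meaning some node is still active at the end of slot $t$. Since all $n'$ nodes start phase two simultaneously, this means there is a node $u$ whose phase-two active interval has length at least $t$; in particular $u$ broadcasts on the control channel with probability $\min\{(c\log{x})/x,\,1\}$ in the $x$-th slot of phase two for all $x\le t$. During the first $t$ slots of phase two, at most $d$ slots are jammed, so at least $t-d\ge (1-1/c^2)t$ slots are unjammed; call this set $G_u$. For each slot in $G_u$ occurring at phase-two-local time $x$, node $u$ sends with probability at least $(c\log{t})/t$. I then need an upper bound on the contention of the control channel in those slots: since at most $n'$ nodes are ever active and each sends with probability at most $(c\log{x})/x$ at local time $x$, and since in $G_u$ we have $x$ ranging up to $t$, the contention contributed by any single node at local time $x\ge \sqrt{t}$, say, is $O((\log{t})/\sqrt{t})$ — but more carefully, I only need that the total contention is bounded, and the cleanest route is: for local times $x$ with $1\le x\le n'$ the contention is at most $n'\cdot(c\log{x})/x$ which is $O(\log{n'})$ only for $x=\Omega(n')$, so instead I restrict attention to slots of $G_u$ with local time $x\ge \lambda n'$ for a suitable large constant $\lambda$, where the contention is at most $n'\cdot (c\log{t})/(\lambda n') = (c/\lambda)\log{t}=O(\log t)$; there are at least $(1-1/c^2)t-\lambda n' \ge t/2$ such slots since $t\ge c^2 n'\log n'\gg \lambda n'$. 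On each such slot, by Lemma~\ref{lem:contention-to-succeess-probability} (item (c), since the contention is $O(\log t)$ with a small enough constant once $c\gg\lambda$ is chosen appropriately — or by the explicit $\min(4^{-p},p/4)$ bound), the probability that $u$ succeeds alone on the control channel is at least $(c\log{t})/t\cdot t^{-o(1)} = 1/t^{1-\Omega(1)}$; more precisely at least $\Omega((\log t)/t^{1-\varepsilon})$ for a constant $\varepsilon>0$ depending on the constants, which suffices.

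Next, since $u$'s broadcast decisions (and the other nodes' decisions) are independent across slots, and the adversary's jamming pattern for the first $t$ phase-two slots is fixed (the adversary is adaptive but does not see the current slot, and the relevant event is determined by the nodes' coins), I apply a union/Chernoff bound over the $\ge t/2$ good slots: the probability that $u$ fails to succeed in all of them is at most $\left(1-\Omega((\log t)/t^{1-\varepsilon})\right)^{t/2}\le \exp\left(-\Omega(t^{\varepsilon}\log t)\right)\le 1/t^{\Omega(1)}$. Finally, taking a union bound over all $n'\le t$ candidate nodes $u$ (and noting that the event ``$T=t$'' implies at least one such node survives $t$ slots), I conclude $\Pr[T=t]\le n'\cdot 1/t^{\Omega(1)} = 1/t^{\Omega(1)}$, as desired.

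The main obstacle is getting the contention upper bound on the control channel clean enough to invoke Lemma~\ref{lem:contention-to-succeess-probability} with the right constant: I need the contention in the good slots to be $O(\log t)$ with a coefficient small enough that the per-slot success probability is $t^{-1+\Omega(1)}$ rather than, say, $t^{-10}$, and this is exactly why the backoff rate uses a large constant $c$ while I compare against $\lambda n'$ with $c\gg\lambda$. I also need to be slightly careful that during phase two ``the other nodes'' — i.e. the remaining $n'-1$ nodes — indeed contribute only $O(\log t)$ contention in those slots; this follows because they all started phase two at the same time as $u$, so at phase-two-local time $x$ every node sends with probability at most $(c\log x)/x$, and summing $n'$ of these at $x\ge\lambda n'$ gives $O(\log t)$. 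A secondary subtlety is the conditioning/adaptivity of the adversary: since the statement $\Pr[T=t]\le 1/t^{\Omega(1)}$ is unconditional here, and the adversary's choices in slots $1,\dots,t$ of phase two depend only on nodes' coins in earlier slots, the bound $t-d$ on the number of unjammed slots among the first $t$ holds deterministically given the budget, so no further conditioning machinery (unlike the nested-interval arguments of the dynamic case) is needed.
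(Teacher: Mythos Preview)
Your overall structure matches the paper's proof: restrict to late, unjammed slots where the contention is controlled, lower-bound the per-slot probability that a fixed node $u$ sends alone, and take a product over $\Theta(t)$ such slots followed by a union bound over the $n'\le t$ nodes. However, there is a genuine quantitative gap in your contention bound that breaks the argument.

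You cut off at local time $x\ge\lambda n'$ for a \emph{constant} $\lambda$ and bound the contention there by $n'\cdot(c\log t)/(\lambda n')=(c/\lambda)\log t$. This is $\Theta(\log t)$, not $O(1)$. Plugging into Lemma~\ref{lem:contention-to-succeess-probability}, the probability that all nodes other than $u$ are silent is only $4^{-(c/\lambda)\log t}=t^{-\Theta(c/\lambda)}$, so the per-slot probability that $u$ succeeds alone is at most $(c\log t)/t\cdot t^{-\Theta(c/\lambda)}=t^{-1-\Theta(c/\lambda)}$, which is \emph{smaller} than $1/t$, not $1/t^{1-\Omega(1)}$ as you claim. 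Over $t/2$ slots the failure probability is then $(1-t^{-1-\Theta(c/\lambda)})^{t/2}\to 1$, not $1/t^{\Omega(1)}$. Your assertion that this factor is $t^{-o(1)}$ is simply incorrect. Note also that you wrote ``$c\gg\lambda$'' to make the coefficient small, which is the wrong direction (you would need $\lambda\gg c$), but even with $\lambda\gg c$ the exponent $c/\lambda$ is a fixed positive constant and the argument still fails.

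The paper avoids this by taking the cutoff at $10cn'\log n'$ rather than $\lambda n'$: at $x\ge 10cn'\log n'$ the contention is at most $n'\cdot c\log(10cn'\log n')/(10cn'\log n')\le 0.5$, a genuine constant, so the ``others silent'' factor is $\ge 4^{-0.5}=1/2$ and the per-slot success probability for $u$ is $\Omega((\log t)/t)$. Since $t\ge c^2(n'\log n'+d)$, removing the first $10cn'\log n'$ slots and the at most $d\le t/c^2$ jammed slots still leaves $\ge t/2$ good slots, and the product bound then gives $(1-\Omega((\log t)/t))^{t/2}\le 1/t^{\Omega(1)}$ as needed. Replacing your $\lambda n'$ by $\Theta(n'\log n')$ fixes your proof and makes it essentially identical to the paper's.
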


\begin{proof}
After $10cn'\log n'$ slots into phase two, for each following slot, the contention on the control channel will be at most $0.5$. Moreover, each node sends with probability at least $(c\log t)/t$ on the control channel in the first $t$ slots in phase two. Thus, assuming $t\ge c^2(n'\log n'+d)$, for sufficiently large $c$, for any slot in the interval $[10cn'\log n',t]$, each node $u$ has probability at least $(2^{-2\cdot 0.5})\cdot(c\log t)/t=(0.5c\log t)/t$ to broadcast its message alone. (Due to Lemma~\ref{lem:contention-to-succeess-probability}, in each such slot, the probability that all nodes except $u$ do not send is at least $2^{-2\cdot 0.5}$.) Moreover, in the interval $[10cn'\log n',t]$, for sufficiently large $c$, Eve can jam at most $t/4$ slots. Therefore, for sufficiently large $c$, the interval $[10cn'\log n',t]$ contains at least $t/2$ slots such that in each such slot, a node can successfully send its message with probability at least $(0.5c\log t)/t$. Therefore, apply a Chernoff bound, we know for each node $u$ that joins phase two, if $c$ is sufficiently large and $t\ge c^2(n'\log n'+d)$, then $u$ will succeed and halt within the first $t$ slots of phase two, with high probability in $t$. Take a union bound over the $n'\leq t$ nodes that join phase two, the lemma is proved.
\end{proof}

Next, we analyze the time complexity of phase one. To that end, we define \emph{complete intervals}. Specifically, a complete interval during phase one is the time interval between two (consecutive) successes which both occur on the control channel. Clearly, phase one can be partitioned into a set of complete intervals.

The following lemma shows the first complete interval will not be too long.

\begin{lemma}\label{lem:static-phase-one-first-success}
In phase one, the first control channel success happens in $O(n+d)$ slots, w.h.p.\ in $n+d$.
\end{lemma}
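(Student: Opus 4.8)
\textbf{Proof proposal for Lemma~\ref{lem:static-phase-one-first-success}.}
The plan is to show that within $O(n+d)$ slots into phase one, the contention on the control channel becomes large enough (constant order) that a control-channel success occurs with constant probability per slot, and then to take enough such slots to boost this to high probability. First I would track the variable $\ell$. Until the first control-channel success, every node behaves deterministically in lockstep: in slot $j$ of phase one every node has $\ell = j$ (there are no control successes to trigger halving, and $m = \infty$ so the transition to phase two cannot fire), so the data-channel contention is $n/j$ and the control-channel contention is $(cn\log j)/j$. The key observation is that the control-channel contention drops below $0.5$ only once $j = \Omega(cn\log n)$, and once $j$ is in, say, the window $[2n, \Theta(cn\log n)]$, the control-channel contention sits in the range $[\Omega(1), O(\log n)]$ (more precisely $\Theta((\log j)/1)$-ish near the top, constant near the bottom). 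By Lemma~\ref{lem:contention-to-succeess-probability}, whenever the control-channel contention $p$ satisfies $p = O(\log\lambda)$ for $\lambda = n+d$, the per-slot success probability on the control channel is at least $\min(4^{-p}, p/4)$, which is $\Omega(1/\lambda^{\beta})$ for some $\beta < 1$ — in particular it is polynomially small but summable over $\Theta(\lambda)$ slots.

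Second I would handle jamming: Eve can jam at most $d$ of the control-channel slots in this window, so among the $\Theta(n\log n)$-length window of "good" slots (those with control-channel contention between a constant and $O(\log(n+d))$), at least a constant fraction — certainly $\Omega(n)$ of them, and if $d$ is large, $\Omega(d)$ un-jammed slots among the first $O(n+d)$ — remain un-jammed. Actually it is cleaner to argue as follows: consider the first $\Theta(n + d)$ slots of phase one. If $d \geq \Theta(n\log n)$ there is nothing to prove because the claimed bound $O(n+d)$ already swallows any run of jammed slots and we just need the window of non-jammed slots with appropriate contention to be long enough; if $d = O(n\log n)$, then among slots in $[2n, \Theta(n\log n + d)]$ at most $d$ are jammed, leaving $\Omega(n\log n)$ un-jammed slots, each with control-channel contention $\Theta(1)$ to $O(\log(n+d))$ and hence per-slot control success probability at least $q := \Omega(1/(n+d)^{\beta})$ for a constant $\beta<1$. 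Since nodes' broadcast decisions are independent across slots, the probability that none of these $\Omega(n\log n) = \omega((n+d)^{\beta})$ un-jammed slots yields a control success is at most $(1-q)^{\Omega(n\log n)} = \exp(-\Omega(n\log n)\cdot q) = \exp(-\omega(\log(n+d)))$, which is $1/(n+d)^{\omega(1)}$, i.e. high probability in $n+d$.

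The main obstacle, and the place requiring care, is making the window on which the control-channel contention is "neither too small (so the $4^{-p}$ branch is still $\ge 1/\lambda^\beta$) nor so large that successes are w.h.p. impossible" long enough to beat the failure probability — i.e. verifying $\Omega(n\log n)$ un-jammed good slots times $q = \Omega(\lambda^{-\beta})$ is $\omega(\log \lambda)$. This works precisely because $q$ can be taken with $\beta < 1$ (implication (c) after Lemma~\ref{lem:contention-to-succeess-probability}) while the number of good slots is near-linear in $n$; one must also confirm that in every one of these slots the contention genuinely stays $O(\log\lambda)$, which holds as long as we start the window at $j \ge 2n$ so that the control-channel contention $(cn\log j)/j \le (c\log j)/2 = O(\log(n+d))$ — here I use $j = O((n+d)^{O(1)})$ so $\log j = O(\log(n+d))$. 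A secondary subtlety is the edge case where $c n \log n$ exceeds the $O(n+d)$ budget, i.e. $d = o(n\log n)$ yet the window $[2n, cn\log n]$ is entirely within the first $O(n+d)$ slots — which is automatic since $cn\log n = O(n\log n) = O(n+d)$ only when... actually $cn\log n$ is \emph{not} $O(n+d)$ when $d = o(n\log n)$; but that is fine, because we do not need $j$ to reach $cn\log n$: we only need $\Theta(n\log n)$ un-jammed slots with contention in the stated range, and the sub-window $[2n, cn\log n] \subseteq [2n, O(n\log n)]$ has length $\Theta(n\log n) = O(n + d + n\log n)$, and the lemma's claimed bound is $O(n+d)$, so I should instead phrase the conclusion as: the first control-channel success occurs within $O(n\log n + d)$ slots — but wait, the statement says $O(n+d)$. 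I would reconcile this by noting that if no control success has occurred by slot $2n$, then for slots beyond $2n$ the data-channel contention $n/j < 1/2$, meaning fewer than a constant fraction of data successes, and the relevant counting must show that actually $O(n + d)$ slots suffice; the cleanest fix is that once $j$ passes $2n$, the data channel itself starts generating successes at rate $\Theta(n/j \cdot 1) $ per slot, but those reduce $n$ not $\ell$... so I would ultimately prove the bound $O(n + d)$ by arguing the first control-channel success happens while the control contention is still $\Omega(1)$, i.e. for $j = O(n)$: among the first $\Theta(n + d)$ slots, $\Theta(n)$ of them lie in $[n, 2n]$ with control contention $\ge c/2 = \Omega(1)$, at most $d$ are jammed, and if $d < n/2$ then $\Omega(n)$ un-jammed slots each have control-success probability $\Omega(1)$ — giving failure probability $2^{-\Omega(n)}$; and if $d \ge n/2$ then $O(n+d) = O(d)$ and we instead look past the jammed region. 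This dichotomy on the size of $d$ relative to $n$ is the real engine of the proof, and I expect spelling it out cleanly to be the bulk of the work.
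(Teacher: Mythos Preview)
Your overall plan---track $\ell$, locate a window where the control-channel contention is in a useful range, invoke Lemma~\ref{lem:contention-to-succeess-probability}, subtract the jammed slots---is the same as the paper's. But the proposal does not close, and the final ``cleanest fix'' contains a concrete error.

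In the window $[n,2n]$ the control-channel contention is \emph{not} $\Theta(1)$: with $n_j$ nodes remaining (and $n_j$ is still close to $n$ there, since data-channel contention has been $\ge 1$ up to slot $n$), the control contention is $(cn_j\log j)/j\in[(c/2)\log(2n),\,c\log n]=\Theta(c\log n)$. You wrote ``control contention $\ge c/2=\Omega(1)$'' and then asserted success probability $\Omega(1)$, but that requires an \emph{upper} bound on contention, not a lower bound. With $p=\Theta(c\log n)$ and $c$ a large constant, Lemma~\ref{lem:contention-to-succeess-probability} gives success probability at most $p\,e^{-p+1}=n^{-\Theta(c)}$, which is far below constant. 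So the $d<n/2$ branch of your dichotomy fails, and the whole dichotomy collapses.

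Your earlier window $[2n,\Theta(cn\log n)]$ has the same defect: near $j=2n$ the contention is still $\Theta(c\log n)$, so $4^{-p}=n^{-\Theta(c)}$, and since $c$ is large this is not $\Omega((n+d)^{-\beta})$ for any $\beta<1$. Only the tail of that window (where contention has dropped to $O(1)$) is useful, and that tail begins around $j=\Theta(cn\log n)$---which is exactly why you kept landing at $O(n\log n+d)$.

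The paper's fix is short: set $t=c^2(n+d)$ and look at $[t/2,t]$. Because $t\ge c^2 n$, the control contention there is at most $n\cdot(2c\log t)/t\le(2\log t)/c$; the $c$ in the \emph{denominator} is the whole point. For large $c$ this makes $4^{-p}\ge t^{-4/c}$ comfortably exceed $(c\log t)/(4t)$, so by Lemma~\ref{lem:contention-to-succeess-probability} the per-slot success probability is at least $(c\log t)/(4t)$ (the lower bound on $p$ coming from a single surviving node). The window has length $t/2$ with at most $d\le t/c^2$ jammed slots, leaving $\ge t/4$ good slots; the probability of no success is then at most $(1-(c\log t)/(4t))^{t/4}=t^{-\Omega(c)}$. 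The insight you were missing is that pushing the window out by a \emph{large constant} factor past $n$ (namely $c^2$, not $2$) drives the contention down to $O((\log t)/c)$ rather than $O(c\log t)$, which is precisely what lets the $4^{-p}$ branch of Lemma~\ref{lem:contention-to-succeess-probability} remain useful while keeping the window length $\Theta(n+d)$.
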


\begin{proof}
Before the first control success, all nodes are running \backoff{$(c\log x/x)$} on the control channel. Suppose $t=c^2(n+d)$, consider interval $[t/2,t]$. If no control success occurs in the first $t$ slots, then for each slot in interval $[t/2,t]$, the contention on the control channel is at least $(c\log t)/t$ and at most $n\cdot(c\log(t/2)/(t/2))\leq 2cn(\log t)/t\le (2\log t)/c$. Thus, according to Lemma~\ref{lem:contention-to-succeess-probability}, for sufficiently large $c$, in each of these slots, the probability that some node broadcasts alone on the control channel is at least $(c\log t)/(4t)$. Notice that in interval $[1,t]$, the control channel is jammed by Eve for at most $t/c^2$ times. So in interval $[t/2,t]$, for sufficiently large $c$, there are at least $t/4$ slots in which a success will occur on the control channel with probability at least $(c\log t)/(4t)$. Therefore, a success will occur in interval $[t/2,t]$ with high probability in $t$. Recall that $t=c^2(n+d)$, the lemma is proved.
\end{proof}

Then, we bound the duration of the complete intervals after the first control channel success, since these complete intervals can be amortized by the number of \emph{congest slots} defined as below.

\begin{definition}\label{def:static-congest-slot}
In phase one, we say a slot is a \emph{congest slot} if it is jammed by the adversary (on either channel), or the contention of the data channel in this slot is at least $1/c^2$.
\end{definition}

The following lemma shows the total number of congest slots cannot be too large.

\begin{lemma}\label{lem:static-cogest-slot-num}
In phase one, the number of congest slots in the first $(n+d)^2$ slots is $O(n+d)$, w.h.p.\ in $n+d$.
\end{lemma}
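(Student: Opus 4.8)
The plan is to bound the number of congest slots by splitting them into two groups according to Definition~\ref{def:static-congest-slot}: jammed slots (at most $d$ of them by Eve's budget) and \emph{contention slots}, where the data channel contention is at least $1/c^2$ but no jamming occurs. Since $d \le n+d$, it suffices to show the number of contention slots among the first $(n+d)^2$ slots of phase one is $O(n+d)$ w.h.p.\ in $n+d$. The intuitive reason is that whenever the data channel contention is $\Omega(1)$ and the slot is not jammed, a success occurs on the data channel with constant probability (by Lemma~\ref{lem:contention-to-succeess-probability}, provided the per-node sending probabilities are bounded by $1/2$, which holds once $\ell \ge 2$, and the $\ell=1$ case contributes only a bounded number of slots). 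Since there are only $n$ nodes and each data-channel success removes a node permanently, only $n$ successes can ever occur; a Chernoff-type argument then forces the number of such high-contention non-jammed slots to be $O(n)$.

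Making this rigorous requires care because the sending probabilities, hence the success probability in each slot, depend on the adaptive history. First I would restrict attention to contention slots with $\ell \ge 2$ (so all per-node probabilities are $\le 1/2$ and Lemma~\ref{lem:contention-to-succeess-probability} applies); slots with $\ell = 1$ require the data channel contention to be $n \cdot 1 \ge 1/c^2$, but more importantly $\ell$ can return to $1$ only via control-channel successes, of which there are at most $n$, and between consecutive returns to $1$ the variable $\ell$ is incremented each non-success slot, so the number of slots with $\ell=1$ is $O(n)$ — a separate, easy bound. For the remaining contention slots, in each such slot the data channel contention $p$ satisfies $1/c^2 \le p$, and also $p = O(1)$ because the algorithm keeps the control channel contention $O(\log\ell)$ w.h.p.\ (one needs the companion fact, argued in the overview, that control contention exceeding $\Omega(\log\ell)$ would w.h.p.\ produce a success and halve $\ell$); hence $p/4 \ge \min(4^{-p}, p/4) = \Omega(1)$ gives a constant lower bound on the success probability. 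I would then define, for $i \ge 1$, the indicator $Y_i$ that the $i$-th contention slot (with $\ell \ge 2$) is a success on the data channel, observe $\Pr[Y_i = 1 \mid \text{history}] \ge \gamma$ for a constant $\gamma > 0$, and apply a Chernoff bound (Lemma~\ref{lem:chernoff-bound}) to the dominating i.i.d.\ process: if there were more than $K = \Theta(n/\gamma)$ such slots, then with high probability more than $n$ data-channel successes would occur, contradicting that only $n$ nodes exist. Choosing the constant in $K$ large enough and union-bounding over the at most $(n+d)^2$ possible values of the slot count yields the w.h.p.-in-$n+d$ conclusion.

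The main obstacle I expect is the conditioning: the success probability lower bound $\gamma$ in a contention slot must hold conditioned on the entire adaptive execution history (including Eve's choices and all prior randomness), and it relies on the upper bound $p = O(1)$ on the data channel contention, which is itself only a high-probability statement rather than a deterministic invariant. I would handle this by working on the high-probability event $\mathcal{E}$ that the control channel contention never exceeds $c' \log \ell$ throughout the first $(n+d)^2$ slots (which holds w.h.p.\ in $n+d$ by Lemma~\ref{lem:contention-to-succeess-probability} and a union bound over slots), conditioning on $\mathcal{E}$, and noting that on $\mathcal{E}$ the data contention is deterministically $O(1)$ in every contention slot, so the per-slot success probability lower bound $\gamma$ becomes a genuine conditional-on-history bound and Lemma~\ref{lem:cjz21} or a direct Chernoff argument applies cleanly. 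Combining the $O(n)$ bounds on $\ell=1$ slots, on the number of data-channel successes, on the contention slots with $\ell \ge 2$, with the $\le d$ jammed slots, gives the total $O(n+d)$, completing the proof.
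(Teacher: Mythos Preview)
Your overall strategy matches the paper's: split congest slots into jammed slots (at most $d$) and non-jammed high-contention slots, show the latter have constant data-channel success probability, then bound their count by $O(n)$ via Chernoff since only $n$ nodes can ever succeed. The paper's proof follows exactly this outline.

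The gap is in your justification of the upper bound $p = O(1)$ on data-channel contention. Your proposed high-probability event $\mathcal{E}$ --- that the control-channel contention never exceeds $c'\log\ell$ throughout the first $(n+d)^2$ slots --- is in fact \emph{deterministically false} for all large $n$: at slot $2$ (assuming no control success in slot $1$) we have $\ell=2$ and control contention $n\cdot c(\log 2)/2 = \Theta(n)$, which exceeds $c'\log 2 = c'$. More generally, for every slot with $\ell = o(n)$ before the first control success, the control contention is $\Theta((n\log\ell)/\ell) \gg c'\log\ell$. Your separate handling of $\ell=1$ slots does not address this, since the problem persists for all $\ell$ up to roughly $n$. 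Moreover, the parenthetical reasoning you cite --- ``control contention exceeding $\Omega(\log\ell)$ would w.h.p.\ produce a success and halve $\ell$'' --- has the direction reversed: by Lemma~\ref{lem:contention-to-succeess-probability} high contention makes a success \emph{unlikely}, not likely; the mechanism that eventually bounds contention is that $\ell$ keeps \emph{increasing} (no success) until contention drops, not that a success intervenes.

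The paper repairs this by first setting aside all slots up to time $C(n+d)$ (trivially at most $C(n+d)$ congest slots there) and arguing only about later phase-one slots. For $t \ge C(n+d)$ the phase-one invariant $\ell > t/\log t$ forces $\log\ell = \Theta(\log(n+d))$. Combined with a separate claim (proved by a union bound over the $(n+d)^2$ slots, using that a control success at contention $\Omega(\log(n+d))$ has probability $1/(n+d)^{\Omega(1)}$) that after the first control success the control contention stays $O(\log(n+d))$, this yields data contention $O(1)$ for all such later slots, and then your Chernoff argument goes through cleanly. So the missing ingredients are exactly this time cutoff at $C(n+d)$ and the use of the phase-one lower bound on $\ell$; once you add them, your proof becomes essentially the paper's.
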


\begin{proof}
We first introduce the following claim which is helpful for proving the lemma.

\begin{claim}\label{claim:static-claim}
In phase one, the following event happens w.h.p.\ in $n+d$: for each slot after the first control channel success and before time slot $(n+d)^2$, the contention on the control channel is $O(\log(n+d))$.
\end{claim}
\begin{proof}
According to Lemma~\ref{lem:contention-to-succeess-probability}, if in a slot the contention of the control channel is at least $\Omega(\log(n+d))$, then the probability that the control channel generates a success is at most $1/(n+d)^{\Omega(1)}$. Take a union bound over $(n+d)^2$ slots, we know the following event happens with high probability in $n+d$: for each slot in the first $(n+d)^2$ slots, if a success occurs on the control channel in that slot, then the contention on the control channel in that slot is $O(\log(n+d))$. Notice that when the first control channel success occurs during phase one, the contention of the control channel is $O(\log{(n+d)})$, with high probability in $n+d$. Moreover, after the first control channel success, the only possibility to raise the contention of the control channel is a control channel success. Thus, the claim is proved.
\end{proof}

We now proceed to prove the lemma. Specifically, we shall prove that in phase one, the number of congest slots after time slot $C(n+d)$ is $O(n+d)$, for some sufficiently large constant $C$.

We claim, with high probability in $n+d$, for any phase one slot after $C(n+d)$ but before $(n+d)^2$, the contention of the data channel in that slot is $O(1)$. To see this, notice that according to Lemma~\ref{lem:static-phase-one-first-success}, with high probability in $n+d$, by slot $C(n+d)$, the first control channel success has occurred. Assume indeed the first control channel success has occurred by slot $C(n+d)$, then due to Claim~\ref{claim:static-claim}, in phase one, with high probability in $n+d$, for each slot after slot $C(n+d)$ but before $(n+d)^2$, the contention of the control channel in that slot is $O(\log(n+d))$. On the other hand, for any phase one slot $t\geq C(n+d)$, let $\ell_t$ denote the value of $\ell$ at the end of that slot. If $\ell_t\leq t/\log{t}$, then by algorithm description, phase one will end by the end of slot $t$. So for each phase one slot $t\in [C(n+d)+1,(n+d)^2]$, $\log{\ell_t}=\Omega(\log{(n+d)})$. Since for each phase one slot $t\in [C(n+d)+1,(n+d)^2]$, the contention of the data channel is $1/\Theta(\log(\ell_t))$ fraction of the contention of the control channel, and recall that the contention of the control channel in each such slot is $O(\log(n+d))$, we conclude that the contention of the data channel in each such slot is $O(1)$.

Now, for each phase one slot in interval $[C(n+d)+1,(n+d)^2]$, if it is not jammed by Eve but a congest slot, then by definition in this slot the contention on the data channel is at least $1/c^2$. Recall we have also shown above in this slot the contention on the data channel is $O(1)$, with high probability in $n+d$. So by Lemma~\ref{lem:contention-to-succeess-probability}, in this slot, a success will occur on the data channel with some constant probability. Therefore, in phase one, after $\Theta(n+d)$ such slots (i.e., congest slots without jamming), $n$ data channel successes must have occurred and all nodes must have succeeded, with high probability in $n+d$. Since the jamming from Eve can create at most $d$ additional congest slots, the lemma is proved.
\end{proof}

At this point, we are ready to formally define complete interval and bound its length. Notice that in the following lemma, besides congest slots, we also use the change in the value of $\ell$ to amortize interval length.

\begin{lemma}\label{lem:static-phase1-time}
Suppose $k$ control channel successes happen during phase one, and the $i$-th success happens in slot $t_{i+1}$. Define $t_1=0$, and define $t_{k+1}$ be the last slot of phase one. For every $i\in[k]$, define complete interval $I_i=[t_i+1,t_{i+1}]$, define $d'_i$ to be the number of congest slots during $I_i$, and define $\Delta\ell_i$ to be the value of $\ell$ at the beginning of slot $t_{i}+1$ minus the value of $\ell$ at the end of slot $t_{i+1}$. Then for any positive integer $t>C(d'_i+\Delta\ell_i)$, where $C$ is a sufficiently large constant, we have $\Pr\left[~|I_i|=t ~\mid~ |I_1|=T_1,|I_2|=T_2,...,|I_{i-1}|=T_{i-1}~\right]\le1/t^{\Omega(1)}$, for any fixed $T_1,T_2,...,T_{i-1}$.
\end{lemma}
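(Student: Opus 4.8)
The plan is to lower-bound the per-slot probability that the control channel produces a success during $I_i$, and then apply a Chernoff bound over the non-congest slots of $I_i$ to conclude that it is unlikely for $I_i$ to last $t > C(d'_i + \Delta\ell_i)$ slots. First I would fix $i$ and condition on the history $|I_1|=T_1,\dots,|I_{i-1}|=T_{i-1}$; note that this history determines the value of $\ell$ at the beginning of $I_i$, call it $\ell^{\mathrm{start}}$, since $\ell$ evolves deterministically given which slots were control-channel successes, and the endpoints $t_1,\dots,t_i$ are exactly those successes. Throughout $I_i$ (i.e., strictly before the next control success at $t_{i+1}$), the variable $\ell$ only increases by one each slot, so it runs through the values $\ell^{\mathrm{start}}, \ell^{\mathrm{start}}+1, \dots, \ell^{\mathrm{start}}+t-1$. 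Hence $\Delta\ell_i$ — the start value minus the end value — is negative, in fact $\Delta\ell_i = -(t-1)$ when $|I_i|=t$; I would need to double-check the sign convention, but the upshot is that for $|I_i|=t$ to happen at all we trivially need $t$ small relative to nothing, so the "amortization by $\Delta\ell_i$" must be exploited in the other direction. Actually the right reading is: over the first $t_a$ slots before the \emph{first} success $\ell$ can grow unboundedly, so $I_i$ for $i\ge 2$ starts at some $\ell^{\mathrm{start}}$ that could be large; the condition $t > C\,\Delta\ell_i$ with $\Delta\ell_i = \ell^{\mathrm{start}} - \ell^{\mathrm{end}} \le \ell^{\mathrm{start}}$ handles the case where $\ell$ started large. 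Let me restate the core argument accordingly.

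Here is the argument I would carry out. Suppose $|I_i| = t$. Since $\ell$ only increments during $I_i$ and starts at $\ell^{\mathrm{start}} \le t + \Delta\ell_i \le t(1 + 1/C)$ (using $\Delta\ell_i < t/C$), every slot of $I_i$ has $\ell = O(t)$, so each node's control-channel sending probability is $\Omega((\log t)/t)$ in every slot of $I_i$ (for the slots where $\ell$ has grown to $\Theta(t)$, and even larger earlier). Meanwhile there are at most $d'_i < t/C$ congest slots in $I_i$; for every non-congest slot, by Definition~\ref{def:static-congest-slot} the data-channel contention is $< 1/c^2$, which (since control-channel contention is a $\Theta(\log\ell)$ multiple of data-channel contention) means the control-channel contention is $O((\log t)/c^2) = o(\log t)$ — small enough that by Lemma~\ref{lem:contention-to-succeess-probability} the probability some node broadcasts alone on the control channel in that slot is at least $\Omega((\log t)/t)$. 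Thus at least $(1 - 1/C)t$ slots of $I_i$ each independently produce a control-channel success with probability $\Omega((\log t)/t)$; by the Chernoff bound (Lemma~\ref{lem:chernoff-bound}), the probability that none of them does — i.e., that $I_i$ survives all $t$ slots — is at most $(1 - \Omega((\log t)/t))^{(1-1/C)t} = 1/t^{\Omega(1)}$. Since the per-slot success probabilities, conditioned on $\ell^{\mathrm{start}}$ and hence on the fixed history, depend only on the independent coin flips within $I_i$, this bound is valid conditionally on $|I_1|=T_1,\dots,|I_{i-1}|=T_{i-1}$.

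The main obstacle I anticipate is making the two-sided contention control on the control channel rigorous over the whole interval $I_i$: I need the data-channel contention to be simultaneously bounded below (so that the control-channel contention, being a $\log\ell$ multiple, is not so tiny that success becomes too rare) — this comes from having at least two nodes, or more carefully from tracking $n' = \Theta(\ell)$ — and bounded above by $1/c^2$ on non-congest slots. The subtlety is that $n'$, the current number of live nodes, is itself random and could in principle have dropped to $1$ or $0$; if only one node is left, it succeeds on the data channel almost immediately, which would actually end phase one and is even better, so I would handle that case separately (or observe it cannot persist for $t$ slots). The other delicate point is the precise sign and role of $\Delta\ell_i$ in the threshold $t > C(d'_i + \Delta\ell_i)$: I would verify that $\Delta\ell_i \ge 0$ exactly because a success halves $\ell$ at the start of $I_i$'s predecessor boundary while $\ell$ then only grows, and that the bound $\ell^{\mathrm{start}} = O(t)$ during $I_i$ is precisely what $t > C\Delta\ell_i$ buys us. Once these bookkeeping points are pinned down, the Chernoff step is routine and mirrors the proofs of Lemma~\ref{lem:dynamic-complete-intervals} and Lemma~\ref{lem:static-phase2-time}.
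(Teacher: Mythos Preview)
Your overall strategy matches the paper's proof: bound $\ell$ by $O(t)$ throughout $I_i$, show that non-congest slots have control-channel contention between $\Omega((\log t)/t)$ and $O((\log t)/c)$, apply Lemma~\ref{lem:contention-to-succeess-probability} to get a per-slot success probability of $\Omega((\log t)/t)$, and finish with a Chernoff-type bound over the non-congest slots. That is exactly what the paper does.

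However, there is a concrete gap in your arithmetic concerning $\Delta\ell_i$ and $d'_i$. You assert $d'_i < t/C$, but this does \emph{not} follow from $t > C(d'_i + \Delta\ell_i)$ because $\Delta\ell_i$ can be negative. Indeed, since $\ell$ increments throughout $I_i$ and is halved at the end, one has $\Delta\ell_i = \ell^{\mathrm{start}} - (\ell^{\mathrm{start}}+t-1)/2 = (\ell^{\mathrm{start}}-t+1)/2$, which is negative whenever $\ell^{\mathrm{start}} < t-1$ (e.g.\ for $i=1$, where $\ell^{\mathrm{start}}=1$). Your later claim that ``$\Delta\ell_i \ge 0$ exactly because a success halves $\ell$ at the start of $I_i$'s predecessor boundary'' is therefore incorrect. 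The paper handles this by using the trivial lower bound $\Delta\ell_i > (\ell^{\mathrm{start}}-t)/2 \ge -t/2$, which combined with $t > C(d'_i+\Delta\ell_i)$ yields $d'_i \le 0.75t$ for $C\ge 4$; this still leaves $\ge 0.25t$ non-congest slots, enough for the final step. Relatedly, your formula $\ell^{\mathrm{start}} \le t + \Delta\ell_i$ should read $\ell^{\mathrm{start}} = 2\Delta\ell_i + t - 1$, but the conclusion $\ell^{\mathrm{start}} = O(t)$ (the paper gets $\ell^{\mathrm{start}} < 3t$) survives.

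Finally, your anticipated ``main obstacle''---needing a lower bound on data-channel contention, worrying about $n'$ dropping to $1$---is a non-issue. The paper simply uses that at least one node is present, so the control-channel contention is at least $(c\log\ell)/\ell \ge c\log(4t)/(4t)$; no lower bound on data-channel contention or on $n'$ is needed.
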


\begin{proof}
Fix an interval $I_i$ and some $t>C(d'_i+\Delta\ell_i)$, suppose $|I_i|=t$. Suppose at the beginning of slot $t_{i}+1$, the value of $\ell$ is $\ell_{t_i+1}$. Notice that we assume $t>C(d'_i+\Delta\ell_i)$, and by definition $\Delta\ell_i=\ell_{t_i+1}-(\ell_{t_i+1}+t-1)/2>(\ell_{t_i+1}-t)/2$, thus $d_i'\le 0.75t$ when $C\ge 4$. Therefore, there will be at least $0.25t$ slots satisfying: (a) not jammed by the adversary; and (b) the contention on data channel is at most $1/c^2$. Let $G$ denote this set of slots. For each slot in $G$, we can further conclude the contention of the control channel is at most $(1/c^2)\cdot(c\log{(\ell_{t_i+1}+t)})=\log(\ell_{t_i+1}+t)/c$. Notice that we assume $t>C\Delta\ell_i$ and $\Delta\ell_i=\ell_{t_i+1}-(\ell_{t_i+1}+t-1)/2>(\ell_{t_i+1}-t)/2$, so we know $\ell_{t_i+1}<3t$. Therefore, for each slot in $G$, the contention on the control channel is at most $\log(\ell_{t_i+1}+t)/c<\log(4t)/c$, and at least $c\log(\ell_{t_i+1}+t-1)/(\ell_{t_i+1}+t-1)>c\log(\ell_{t_i+1}+t)/(\ell_{t_i+1}+t)>c\log(4t)/(4t)$. Therefore, by Lemma~\ref{lem:contention-to-succeess-probability}, we know for sufficiently large $c$, for each slot in $G$, the probability that the control channel will generate a success is at least $c\log(4t)/(16t)$. Recall that $|G|>0.25t$, we know with probability at least $1-(1-c\log(4t)/(16t))^{t/2}\geq 1-1/t^{\Omega(1)}$, at least one success will occur on the control channel in $G$.
\end{proof}

We are now ready to prove the guarantees enforced by our static algorithm.

\begin{proof}[Proof of Theorem~\ref{thm:static-single-channel}]
We first bound the length of phase one. Assume $k$ control channels successes occur during phase one, so phase one can be divided into $k$ complete intervals. For every $i\in[k]$, let $T_i$ be a random variable such that $T_i=|I_i|$ if $|I_i|>C(d'_i+\Delta\ell_i)$, otherwise $T_i=0$. (Recall $I_i,C,d'_i,\Delta\ell_i$ are defined in Lemma~\ref{lem:static-phase1-time}.) For every $i\in[k+1,n+d]$, define $T_i=0$. Therefore, the duration of phase one is $\sum_{i=1}^{k}|I_i|\leq\sum_{i=1}^{k}(T_i+C(d'_i+\Delta\ell_i))=\sum_{i=1}^{k}T_i+C\cdot\sum_{i=1}^{k}(d'_i+\Delta\ell_i)=\sum_{i=1}^{n+d}T_i+C\cdot\sum_{i=1}^{k}(d'_i+\Delta\ell_i)$. Due to Lemma~\ref{lem:static-phase1-time} and Lemma~\ref{lem:cjz21}, we know $\sum_{i=1}^{n+d}T_i=O(n+d)$, with high probability in $n+d$. Due to the analysis in the proof of Lemma~\ref{lem:static-cogest-slot-num}, we know $\sum_{i=1}^{k}d'_i=O(n+d)$, with high probability in $n+d$. Lastly, notice that $\Delta\ell_i=\ell_{t_i+1}-\ell_{t_{i+1}}/2$ and $\ell_{t_{i+1}+1}=\ell_{t_{i+1}}/2$, we know $\sum_{i=1}^{k}\Delta\ell_i=\ell_{t_1+1}-\ell_{t_{k+1}+1}/2\leq\ell_{t_1+1}=1$. Therefore, we can conclude, the time complexity of phase one is at most $O(n+d)+C\cdot O(n+d)+C\cdot 1=O(n+d)$, with high probability in $n+d$.

Next, we bound the length of phase two, and we consider two complement cases depending on the relationship between $n$ and $d$. The first case is $d=\Omega(n\log n)$. According to Lemma~\ref{lem:static-phase2-time}, in such case, the second phase runs for $O(n\log n+d)$ slots, with high probability in $n+d$. The other case is $d=O(n\log n)$, and its analysis is more involved. (Notice, when $d=O(n\log n)$, if an event happens with high probability in $n$, then it also happens with high probability in $n+d$.)

Assume $d=O(n\log n)$, we first prove the claim that in phase one the first control channel success happens after time slot $n$, with high probability in $n$. To see this, notice that in phase one, for each slot before slot $n$ and before the first control channel success, the contention on the control channel is at least $n\cdot (c\log n)/n\geq c\log{n}$. Hence, due to Lemma~\ref{lem:contention-to-succeess-probability}, with high probability in $n$, a control channel success will not occur in this slot. Take a union bound over the first $n$ slots in phase one, the claim is proved.

Then we argue, in phase one, for any slot after the first control channel success and before slot $(n+d)^2$, the contention of the data channel in that slot is $O(1)$, with high probability in $n$. To see this, notice that due to Claim~\ref{claim:static-claim} and the assumption $d=O(n\log n)$, in phase one, for any slot after the first control channel success and before slot $(n+d)^2$, the contention of the control channel in that slot is $O(\log{n})$. On the other hand, in phase one, if the first control channel success happens after slot $n$ (this happens with high probability in $n$ due to the discussion in the last paragraph), then by algorithm description, $\ell\geq n/\log n$ always holds after the first control channel success. (Otherwise, phase one will end.) Lastly, notice that in phase one, for each slot after the first control channel success, the contention on the data channel and the control channel differ by a factor of $c\log{\ell}$, which is at least $\Omega(\log{n})$ since we have just shown $\ell\geq n/\log n$. At this point, we can conclude, in phase one, for any slot after the first control channel success and before slot $(n+d)^2$, the contention of the data channel in that slot is $O(1)$, with high probability in $n$.

Now, suppose phase one ends by the end of slot $T$ and $n'$ nodes remain. Suppose the value of $\ell$ at the beginning of $T$ is $\ell_T$. By algorithm description, we know $\ell_T/2\leq T/\log{T}$. Since in slot $T$ each node sends on the data channel with probability $1/\ell_T$ and we have shown above with high probability in $n$ the contention on the data channel is $O(1)$ in that slot, we know $n'=O(1)/(1/\ell_T)=O(\ell_T)=O( T/\log{T})$ with high probability in $n$. Recall we have already shown the time complexity of phase one---which is $T$---is $O(n+d)$, with high probability in $n+d$. Hence, when $d=O(n\log n)$, by Lemma~\ref{lem:static-phase2-time}, the time complexity of phase two is $O(n'\log n'+d)=O(n+d)$, with high probability in $n$.

At this point, we have proved the total time complexity of the static algorithm is $O(n+d)$, with high probability in $n+d$. Finally, recall Lemma~\ref{lem:static-energy}, we can also conclude the energy complexity of each node is $O(\log^2{n}+\log^2{d})$, with high probability in $n+d$.
\end{proof}

\section{Lower Bounds}\label{sec:lower-bounds}

In this part, we prove several complexity lower bounds for the contention resolution problem, when collision detection is not available and external interference is present (i.e., jamming). These bounds show both our dynamic algorithm and static algorithm achieve optimal time complexity. As for energy complexity, the dynamic algorithm also matches the lower bound, whereas the static algorithm misses the lower bound by a poly-logarithmic factor. In particular, in the static scenario, the lower bound is $\Omega(\log\log{n}+\log^2{d})$, while our algorithm incurs a per-node energy cost of $O(\log^2{n}+\log^2{d})$.

Before proving the lower bounds, we first introduce the following key technical lemma. Recall the discussion in Section~\ref{sec:overview}, this lemma helps us to connect the time complexity and the energy complexity of a contention resolution algorithm.

\begin{lemma}\label{lem:lower-bound-time-energy}
For any function $f:\mathbb{N}^+\rightarrow\mathbb{R}^+$, if algorithm $\mathcal{A}$ achieves $(f(n),g(d))$-time-cost in the static case with $g(d)=d$, and if algorithm $\mathcal{A}$ does not observe any success in the first $t$ slots, then in expectation algorithm $\mathcal{A}$ sends $\Omega(\log^2 t)$ times in the first $t$ slots.
\end{lemma}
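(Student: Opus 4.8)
The plan is to run a contrapositive-style argument: if the algorithm sent too few times in the first $t$ slots, then with noticeable probability it would create ``silent'' behavior that an oblivious adversary can exploit to violate the $(f(n),d)$-time-cost guarantee. Concretely, fix $t$ and consider the (randomized) sending schedule of a \emph{single} node running $\mathcal{A}$ in isolation on a channel where the first $t$ slots happen to contain no success (e.g. because an adversary jams them, or because the node's own sends all collide). Let $p_i$ be the node's send probability in slot $i$ conditioned on the no-success history, and let $S_j = \sum_{i=1}^{j} p_i$ be the cumulative expected number of sends through slot $j$. The claim $\mathbb{E}[\text{sends in first } t] = \Omega(\log^2 t)$ is exactly $S_t = \Omega(\log^2 t)$, so suppose for contradiction $S_t = o(\log^2 t)$.

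The key mechanism is the dilemma described in the overview: before any success, the node cannot distinguish ``I am jammed'' from ``my contention is too low.'' I would formalize this by having an oblivious adversary inject $n$ nodes in a batch and jam nothing; each node independently runs $\mathcal{A}$. By Lemma~\ref{lem:contention-to-succeess-probability}, to produce the first success quickly the total contention $\sum_{\text{nodes}} p_i$ in some early slot must be $\Omega(1)$ but not $\omega(\log(n+d))$; roughly, to get a success within the first $\ell$ slots with good probability, the per-node cumulative send budget over those $\ell$ slots must be $\Omega(\log \ell)$ when $n \approx \ell/\log\ell$ nodes are present (so that contention is $\Theta(1)$ at the right time). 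Now chain this over a geometric sequence of scales $\ell = 2, 4, 8, \ldots, t$: for \emph{each} dyadic scale $\ell_k = 2^k$ up to $t$, the adversary could have chosen $n_k \asymp \ell_k/\log \ell_k$, in which case $(f(n_k), d)$-time-cost with $d=0$ forces the algorithm to have produced a success within $O(f(n_k))$ slots w.h.p.; tracing the argument of Lemma~\ref{lem:contention-to-succeess-probability} backwards, this forces $S_{c\ell_k} - S_{\ell_k/c} = \Omega(\log \ell_k)$ for an appropriate constant $c$ (the node must put $\Omega(\log \ell_k)$ worth of sending probability into the window around scale $\ell_k$, else with constant probability no success occurs in $O(f(n_k))$ slots, contradicting the high-probability time bound). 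Summing these $\Theta(\log t)$ essentially disjoint windows, each contributing $\Omega(\log \ell_k) = \Omega(k)$, gives $S_t = \sum_{k=1}^{\log t} \Omega(k) = \Omega(\log^2 t)$, the desired bound.

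I would organize the writeup as: (i) set up the single-node cumulative-send quantity $S_j$ conditioned on a no-success prefix, and note monotonicity; (ii) a ``window lemma'' stating that for each dyadic scale $\ell$, if the node's incremental send budget over the window $[\ell/c, c\ell]$ is less than $\gamma \log \ell$ for small $\gamma$, then against $n \asymp \ell/\log\ell$ batched nodes and no jamming, with probability bounded away from $1$ no success occurs in the first $C f(n)$ slots — here I'd invoke the second bullet of Lemma~\ref{lem:contention-to-succeess-probability} ($\Pr[\text{exactly one}] \le p e^{-p+1}$) to upper-bound the per-slot success probability by the contention, so a small total contention over the window yields a small total success probability; (iii) conclude that the $(f(n),d)$-time-cost guarantee (w.h.p.\ time bound, applied with $d=0$) forces each window budget to be $\Omega(\log \ell)$; (iv) sum over the $\Theta(\log t)$ dyadic scales below $t$. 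Care is needed that the windows $[\ell_k/c, c\ell_k]$ overlap only $O(1)$-fold, so their budgets add up to $S_t$ within a constant factor, and that a single randomized schedule must simultaneously satisfy all the window constraints — this is fine because the adversary gets to pick $n$ \emph{after} seeing the algorithm, so each scale gives a genuine constraint on the one fixed algorithm.

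The main obstacle I anticipate is step (ii)/(iii): making rigorous the statement ``small cumulative send budget in a window $\Rightarrow$ adversary can prevent a success there.'' The subtlety is that the adversary is oblivious here (the lemma is claimed to hold even against an oblivious adversary), so one cannot adaptively jam exactly the ``dangerous'' slots; instead one must exhibit a \emph{fixed} injection pattern (a batch of the right size $n_k$, with \emph{no} jamming) for which the algorithm's own low contention guarantees few successes — and then argue that a time bound of $O(f(n_k))$ slots is violated with more than the allowed failure probability. Getting the quantifiers right (the algorithm is fixed first, then for each scale the adversary picks $n_k$; the ``with high probability in $n+d$'' in the definition of time-cost must be strong enough to be contradicted) and handling the conditioning on the no-success prefix cleanly across all scales at once is where the real work lies. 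A secondary technical point is ensuring that when the node's send probability in some slot would be clipped to $1$ (i.e. $h(x) > 1$), the bound still goes through; this is easily handled since clipping only \emph{increases} the contribution to $S_j$.
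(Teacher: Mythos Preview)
Your approach has a genuine gap, and it stems from not using the hypothesis $g(d)=d$ at all. You set $d=0$ and vary $n$, so the only time constraint you ever invoke is ``success within $Cf(n_k)$ slots.'' But $f$ is completely arbitrary in this lemma; nothing ties $Cf(n_k)$ to your window $[\ell_k/c,\,c\ell_k]$. Your step (ii) asserts that a small budget on $[\ell_k/c,c\ell_k]$ forces failure in the first $Cf(n_k)$ slots, but the success could perfectly well occur outside that window (earlier, if $p_i\approx 1/n_k$ for some small $i$; or much later, if $f(n_k)\gg\ell_k$). The implication ``success within $O(f(n_k))$ slots $\Rightarrow$ $S_{c\ell_k}-S_{\ell_k/c}=\Omega(\log\ell_k)$'' simply does not follow. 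There is a second, more subtle problem: even granting localization, the $n$-node argument constrains the per-slot \emph{success} probabilities $q_i=n_k p_i(1-p_i)^{n_k-1}$, not the \emph{send} probabilities $p_i$. One can have $\sum q_i=\Theta(\log n_k)$ with $\sum p_i$ arbitrarily small (take $p_i=1/n_k$ on $\Theta(\log n_k)$ slots), so ``small budget $\Rightarrow$ small success probability'' is false in the direction you need.

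The paper's proof goes the other way entirely: it fixes $n=1$ and exploits $g(d)=d$ by spending $d=\Theta(t)$ jamming. It jams all of the first $t/(4C)$ slots and then a \emph{uniformly random} set of $t/(4C)$ further slots inside $(t/(4C),t]$. The time-cost hypothesis then forces a success before slot $t$ w.h.p.\ in $t$; but if the lone node makes only $k$ sends in $(t/(4C),t]$, the random jammer covers all of them with probability at least $(8C)^{-k}$, and this must be at most $1/t$, forcing $\mathbb{E}[k]=\Omega(\log t)$. Iterating over geometric scales $t,\,t/(4C),\,t/(4C)^2,\ldots$ gives disjoint windows each contributing $\Omega(\log t)$ expected sends, hence $\Omega(\log^2 t)$ in total. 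The randomness of the jamming is essential: with a deterministic prefix jam, a single guaranteed send per window would suffice for the node, yielding only $O(\log t)$ total sends. To repair your argument you would at minimum need to bring jamming back in (so the $g(d)=d$ hypothesis actually bites) and argue about the number of sends rather than the cumulative send probability.
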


\begin{proof}
Consider the case one single node $u$ runs $\mathcal{A}$. Since $\mathcal{A}$ achieves $(f(n),d)$-time cost, there must exist a constant $C$ such that the number of active slots---i.e., the time required for the node to succeed---is at most $C(f(1)+d)$, with high probability in $d$. Here, $d$ is the number of slots jammed by the adversary during the execution. Now, consider the first $t$ slots of the execution, assume Eve uses the following jamming strategy. She jams all of the first $t/(4C)$ slots. Moreover, Eve further jams another $t/(4C)$ slots, chosen uniformly at random from interval $(t/(4C),t]$. In this scenario, by the end of slot $t$, we have $C(f(1)+d)= Cf(1)+t/2<t$ for any $t>2Cf(1)$. Therefore, there is at least one success in the first $t$ slots, with high probability in $d$. Since $d=\Theta(t)$, this claim also holds with high probability in $t$.

Denote $k(t)$ as the number of times node $u$ broadcasts in interval $(t/(4C),t]$. Since in interval $(t/(4C),t]$ the adversary randomly chooses $t/(4C)$ slots to jam, the probability that no success occurs in these slots is at least $\sum_{k\le\frac{t}{8C}}\frac{\Pr[k(t)=k]}{(8C)^k}$. Since there must exist a success in these slots with probability at least $1-1/t$, we have $\sum_{k\le\frac{t}{8C}}\frac{\Pr[k(t)=k]}{(8C)^k}\le\frac{1}{t}$. Since  $\frac{t}{8C}\ge\log_{8C}\frac{t}{2}$ for sufficiently large $t$, we have:
$$\sum_{k\le\frac{t}{8C}}\frac{\Pr[k(t)=k]}{(8C)^k}
\geq \sum_{k\le\log_{8C}\frac{t}{2}}\frac{\Pr[k(t)=k]}{(8C)^{\log_{8C}t/2}} \geq \sum_{k\le\log_{8C}\frac{t}{2}}\frac{\Pr[k(t)=k]}{t/2} = \frac{\Pr\left[k(t)\le\log_{8C}\left(t/2\right)\right]}{t/2}$$
If $\Pr\left[k(t)\le\log_{8C}\frac{t}{2}\right]>1/2$, then $\sum_{k\le\frac{t}{8C}}\frac{\Pr[k(t)=k]}{(8C)^k}>\frac{1/2}{t/2}=\frac{1}{t}$, which is a contradiction. Therefore, $\Pr\left[k(t)>\log_{8C}{\frac{t}{2}}\right]\ge 1-1/2=1/2$, implying $\mathbb{E}[k(t)]\ge\frac{1}{2}\log_{8C}\frac{t}{2}$.

By an argument similar as above, we can show the expected number of times node $u$ broadcasts in interval $\left(\frac{t}{(4C)^{i+1}},\frac{t}{(4C)^i}\right]$ is at least $\frac{1}{2}\log_{8C}\frac{t/2}{(4C)^i}$, for any $1\leq i\leq l$ where $l=\log_{4C}t$. Therefore, the total expected number of times $u$ broadcasts in the first $t$ slots is at least:
$$\sum_{0\le i\le l}\frac{1}{2}\log_{8C}\frac{t/2}{(4C)^i}\ge\sum_{0\le i\le l/2}\frac{1}{2}\log_{8C}\frac{t/2}{(4C)^{l/2}} =\Omega\left(\log^2t\right)$$
\end{proof}

Since an algorithm with $(f(n),g(d))$-time cost in the dynamic case is also an algorithm with $(f(n),g(d))$-time cost in the static case, the above lemma also holds for dynamic algorithms.

We are now ready to prove the lower bounds, and we start by considering the dynamic case.

\begin{proof}[Proof of Theorem~\ref{thm:lower-bound-dynamic}]
$f_t(n)=\Omega(n\log n)$ is implied by Theorem 1.3 of~\cite{chen21}. In the reminder of the proof, we focus on the energy complexity.

We first prove $g_e(d)=\Omega(\log^2d)$. Consider a simple adversary strategy that injects one node in the first slot and jams the first $d$ slots, then there are no successes in the first $d$ slots. According to Lemma~\ref{lem:lower-bound-time-energy}, the injected node will send in expectation $\Omega(\log^2d)$ times in those slots. Suppose $g_e(d)=o(\log^2d)$, then with high probability in $d$, the node sends $o(\log^2d)$ times in the first $d$ slots, this leads to an expectation of at most $o(\log^2d)+d\cdot 1/d^{\Omega(1)}=o(\log^2d)$ times of sending in the first $d$ slots, which is a contradiction.

Then we prove $f_e(n)=\Omega(\log^2n)$. Consider the adversary strategy that injects $\sqrt{n}$ nodes in each of the first $\sqrt{n}$ slots. Without loss of generality, assume the algorithm instructs each node to send in the first slot (after arriving) with probability $x_1>0$. Then, for each of the first $\sqrt{n}$ slots, the contention is at least $x_1\sqrt{n}$. Since the algorithm does not know the value of $n$, the value of $x_1$ cannot depend on the value of $n$; this implies $x_1\sqrt{n}\geq n^{0.4}$ when $n$ is sufficiently large. Therefore, due to Lemma~\ref{lem:contention-to-succeess-probability} and the union bound, for sufficiently large $n$, in the first $\sqrt{n}$ slots, with high probability in $n$ no success will occur. Consider a node $u$ that is injected in the first slot. Due to Lemma~\ref{lem:lower-bound-time-energy}, we can further conclude, in the first $\sqrt{n}$ slots, in expectation $u$ sends at least $\Omega(\log^2\sqrt{n})=\Omega(\log^2{n})$ times. Now, if $f_e(n)=o(\log^2{n})$, then with high probability in $n$, node $u$ sends $o(\log^2n)$ times in the first $\sqrt{n}$ slots; this leads to an expectation of at most $o(\log^2n)+\sqrt{n}\cdot 1/n^{\Omega(1)}=o(\log^2{n})$ times of sending in the first $\sqrt{n}$ slots, which is a contradiction.
\end{proof}

Next, we consider the static case. In such scenario, the $\Omega(n+d)$ time complexity bound is trivial, since in each slot there is at most one success (thus giving the $\Omega(n)$ part) and the adversary can jam $d$ slots to block any success (thus giving the $\Omega(d)$ part). Therefore, we focus on proving the energy complexity.

\begin{proof}[Proof of Theorem~\ref{thm:lower-bound-static}]
Notice the proof for $g_e(d)=\Omega(\log^2d)$ in the proof of Theorem~\ref{thm:lower-bound-dynamic} can be carried over to the static case without any modification, so we only need to show $f_e(n)=\Omega(\log\log{n})$.

Suppose the algorithm sends with probability $x_j$ in the $j$-th slot before hearing any successes. Fix some sufficiently large integer $t$, consider the first $t$ slots, denote the sending probabilities in these slots as $x_1,...,x_t$. Now we define $t+1$ real intervals: for every integer $i\in[0,t]$, let $I_i=[t^{-(i+1)c},t^{-ic})$. Notice, these intervals together constitutes a partition of the real interval $[t^{-(t+1)c},1)$. Due to the pigeonhole principle, we know there must exist some $\hat{i}\in[0,t]$ such that: for all $j\in[t]$, it holds $x_j\notin I_{\hat{i}}$. Now consider the adversary strategy that injects $t^{\hat{i}c+c/2}$ nodes in the first slot. Since $x_j\notin I_{\hat{i}}$ holds for all $j\in[t]$, we know in each of the first $t$ slots, assuming all $t^{\hat{i}c+c/2}$ nodes are still active in that slot, the contention is either at least $t^{-\hat{i}c}\cdot t^{\hat{i}c+c/2}=t^{c/2}$, or at most $t^{-(\hat{i}+1)c}\cdot t^{\hat{i}c+c/2}=t^{-c/2}$. Therefore, by Lemma~\ref{lem:contention-to-succeess-probability}, in each of the first $t$ slots, assuming all $t^{\hat{i}c+c/2}$ nodes are still active in that slot, the probability that the slot generates a success is at most $\max\{(t^{c/2})\cdot e^{-t^{c/2}+1},(t^{-c/2})\cdot e^{-t^{-c/2}+1}\}$, which is $1/t^{\Omega(1)}$ for sufficiently large $c$. That is, in each of the first $t$ slots, assuming all $t^{\hat{i}c+c/2}$ nodes are still active, the slot will not generate a success, with high probability in $t$. By an induction argument and a union bound, we know if Eve injects $n=t^{\hat{i}c+c/2}$ nodes, then no success will occur in the first $t$ slots, with high probability in $t$. Consider a node $u$ that is injected. Due to Lemma~\ref{lem:lower-bound-time-energy}, we can further conclude, in the first $t$ slots, in expectation $u$ sends at least $\Omega(\log^2t)$ times. Now, if $f_e(n)=o(\log\log n)$, then the total energy consumption of $u$ is $o(\log\log{(t^{tc+c/2})})=o(\log{t})$, with high probability in $n$. Since $n\geq t$ when $c\geq 2$, the energy consumption of $u$ in the first $t$ slots is $o(\log{t})$, with high probability in $t$. This leads to an expectation of at most $o(\log{t})+t\cdot 1/t^{\Omega(1)}=o(\log{t})$ times of sending in the first $t$ slots, which is a contradiction.
\end{proof}

Finally, we note that our lower bounds are strong in the sense that they hold even for an oblivious adversary (i.e., an offline adversary). By contrast, our algorithms can tolerate an adaptive adversary. Moreover, since the lower bounds only consider the time and energy required to generate the first success, it is likely that they also hold for the leader election problem in similar models.

\section{Future Work}\label{sec:future-work}

In this last section, we discuss some directions for potential future work.

\smallskip\underline{\emph{Optimal energy dependency on $n$ in the static scenario.}} For the static scenario, it is still not clear whether there exists an algorithm that is able to achieve $o(\log^2{n})+O(\log^2{d})$ energy complexity, when jamming is present. On the other hand, it is also not clear how to improve the $\Omega(\log\log n)$ lower bound. Roughly speaking, the $\Omega(\log\log n)$ lower bound is obtained by first showing $\Omega(\log n)$ time is required to generate the first success, and then apply Lemma~\ref{lem:lower-bound-time-energy}. But there actually exists an algorithm which will likely to generate the first success in $O(\log n)$ time, at least when jamming is not present. (For example, each node sends with probability $1/2^i$ in the $i$-th slot.) Therefore, to understand the exact energy complexity in the static scenario, we need either a more clever algorithm, or some new lower bound technique.

\smallskip\underline{\emph{Energy complexity for various jamming intensity.}} As \cite{chen21} has shown, without collision detection, for different jamming intensities, the optimal throughput are different. In this paper, we consider one interesting case where Eve is allowed to jam constant fraction of all slots. (This is the strongest level of jamming intensity.) In \cite{chang19leader}, the authors show that when the severity of jamming goes down to $0$ (i.e., no jamming), the energy complexity lower bound for generating the first success goes down to $\Omega(\log^*n)$. Therefore, examine how the energy complexity evolves as the jamming severity changes is an interesting and natural problem.

\smallskip\underline{\emph{Cost of channel feedback.}} In this paper, we assume only sending messages incurs energy cost, while obtaining channel feedback (i.e., ``listening'') comes for free. Though this is quite standard in the literature and consistent with many real world applications of contention resolution, in the context of radio networks, this assumption often is no longer valid. There are results counting both ``send'' and ``listen'' as access attempts (see, e.g., \cite{bender16,chang19leader}), but it seems the scenario where collision detection is not available and jamming is present has not been considered yet.

\appendix
\section*{Appendix}

\begin{proof}[Proof of Lemma~\ref{lem:contention-to-succeess-probability}]
Since all $X_i$ are 0-1 random variables, for $\sum_{i=1}^{n}X_i$ to be one, it must be the case that $X_j=1$ for some $j\in[n]$ and all other $X_i$ are zeroes. Therefore, $\Pr[(\sum_{i=1}^{n}X_i)=1]=\sum_{i=1}^{n}(p_i\cdot\prod_{j\in[n]\setminus\{i\}}(1-p_j))\geq\sum_{i=1}^{n}(p_i\cdot\prod_{j\in[n]}(1-p_j))=p\cdot\prod_{i=1}^{n}(1-p_i)$. Since $1-p_i\geq 4^{-p_i}$ when $p_i\in[0,1/2]$, we have $\Pr[(\sum_{i=1}^{n}X_i)=1]\geq p\cdot\prod_{i=1}^{n}4^{-p_i}=p\cdot4^{-p}$. Since $4^{-p}\geq 1/4$ when $p\leq 1$, we conclude $p\cdot 4^{-p}\geq \min(4^{-p},p/4)$.

Next, we upper bound $\Pr[(\sum_{i=1}^{n}X_i)=1]$. Notice that $\Pr[(\sum_{i=1}^{n}X_i)=1]=\sum_{i=1}^{n}(p_i\cdot\prod_{j\in[n]\setminus\{i\}}(1-p_j))\leq\sum_{i=1}^{n}(p_i\cdot\prod_{j\in[n]\setminus\{i\}}e^{-p_j})=\sum_{i=1}^{n}(p_i\cdot e^{-\sum_{j\in[n]\setminus\{i\}}p_j})$. Since $p_i\in[0,1]$, we know $\sum_{i=1}^{n}(p_i\cdot e^{-\sum_{j\in[n]\setminus\{i\}}p_j})\leq \sum_{i=1}^{n}(p_i\cdot e^{-\sum_{j\in[n]}p_j+1})=\sum_{i=1}^{n}(p_i\cdot e^{-p+1})=p\cdot e^{-p+1}$.

Lastly, when $p_i\in[0,1/2]$ for all $i\in[n]$, we have $\Pr[(\sum_{i=1}^{n}X_i)=0]=\prod_{i=1}^{n}\Pr[X_i=0]=\prod_{i=1}^{n}(1-p_i)\geq\prod_{i=1}^{n}2^{-2p_i}=2^{-2p}$.
\end{proof}

\newcommand{\etalchar}[1]{$^{#1}$}

\end{document}